\theoremstyle{plain}
\newtheorem{theorem}{Theorem}[section]
\newtheorem{lemma}[theorem]{Lemma}
\newtheorem{corollary}[theorem]{Corollary}
\newtheorem{claim}[theorem]{Claim}
\theoremstyle{definition}
\newtheorem{definition}[theorem]{Definition}
\newtheorem{example}[theorem]{Example}
\theoremstyle{remark}
\newtheorem{remark}[theorem]{Remark}
\newcommand{\bbN}{\mathbb{N}}
\newcommand{\cE}{\mathcal{E}}
\newcommand{\cG}{\mathcal{G}}
\newcommand{\cH}{\mathcal{H}}
\newcommand{\cP}{\mathcal{P}}
\newcommand{\cQ}{\mathcal{Q}}
\newcommand{\cS}{\mathcal{S}}
\newcommand{\cX}{\mathcal{X}}
\newcommand{\fS}{\mathfrak{S}}
\newcommand{\ordstar}{\mathord{\star}}
\newcommand{\DHJ}{\mathrm{DHJ}}
\newcommand{\HJ}{\mathrm{HJ}}
\newcommand{\CPR}{\mathrm{CPR}}
\newcommand{\good}{\mathrm{good}}
\newcommand{\1}{\mathbbm{1}}
\newcommand{\oA}{\overline{A}}
\newcommand{\ub}{\underline{b}}
\newcommand{\uB}{\underline{B}}
\newcommand{\oE}{\overline{E}}
\newcommand{\uf}{\underline{f}}
\newcommand{\op}{\overline{p}}
\newcommand{\oP}{\overline{P}}
\newcommand{\oq}{\overline{q}}
\newcommand{\oQ}{\overline{Q}}
\newcommand{\ouQ}{\underline{\oQ}}
\newcommand{\us}{\underline{s}}
\newcommand{\ocS}{\overline{\cS}}
\newcommand{\ux}{\underline{x}}
\DeclareMathOperator{\cupdot}{\dot\cup}
\DeclareMathOperator{\val}{val}
\DeclareMathOperator{\cVal}{cVal}
\DeclareMathOperator*{\EE}{E}
\DeclareMathOperator{\Id}{Id}
\DeclareMathOperator{\Hom}{Hom}
\DeclareMathOperator{\tw}{tw}
\definecolor{DSgray}{cmyk}{0,0,0,0.7}
\definecolor{DSred}{cmyk}{0,0.7,0,0.7}
\tikzstyle{vecArrow} = [thick, decoration={markings,mark=at position
\title{Forbidden Subgraph Bounds for Parallel Repetition
and the Density Hales-Jewett Theorem}
\author{Jan Hązła\thanks{ETH Z\"urich, Department of Computer Science, Zurich, 
Switzerland. E-mail: {\tt jan.hazla@inf.ethz.ch}.
J\.\,H.~is supported by the Swiss National Science Foundation (SNF),
project no. 200021-132508.}
\and Thomas Holenstein\thanks{
Google, Zurich, Switzerland. E-mail: {\tt thomas.holenstein@gmail.com}.
Part of this work was done while T.\,H and A.\,R were at the Simons
Institute.
}
\and Anup Rao\thanks{
University of Washington, Seattle, USA. E-mail:
{\tt anuprao@cs.washington.edu}
}}
\date{April 19, 2016}
\begin{document}
\maketitle

\begin{abstract}
We study a special kind of bounds
(so called forbidden subgraph bounds, cf.~Feige, Verbitsky '02)
for parallel repetition of
multi-prover games.

First, we show that forbidden subgraph upper bounds for $r \ge 3$ provers
imply the same bounds for the density Hales-Jewett theorem for alphabet
of size $r$. As a consequence, this yields a new family of games with slow
decrease in the parallel repetition value.

Second, we introduce a new technique for proving exponential forbidden
subgraph upper bounds and explore its power and limitations.
In particular, we obtain exponential upper bounds for two-prover games
with question graphs of treewidth at most two and show that our method
cannot give exponential bounds for all two-prover graphs.
\end{abstract}

\section{Introduction}

\subsection{Multi-prover games}

An \emph{$r$-prover game} is a protocol in which $r$ \emph{provers} have a 
joint objective of making another entity, the \emph{verifier}, accept. 
The execution of such a game looks as follows:
The verifier first samples $r$ questions 
$q^{(1)}, \ldots, q^{(r)} \in Q^{(1)} \times \ldots \times Q^{(r)}$.
Those questions are sampled uniformly from some \emph{question set}
$\overline{Q} \subseteq Q^{(1)} \times \ldots \times Q^{(r)}$.

Then, she sends the questions
to the provers: the $j$-th prover receives $q^{(j)}$ and sends back
an answer $a^{(j)}$ (from a finite answer alphabet $A^{(j)}$)
that depends only on $q^{(j)}$. Finally, the verifier accepts or rejects
based on the evaluation of a \emph{verification predicate}
$V(q^{(1)}, \ldots, q^{(r)}, a^{(1)}, \ldots, a^{(r)})$.

A \emph{strategy} $(\cS^{(1)}, \ldots, \cS^{(r)})$ 
for the provers consists of $r$ functions, the $j$-th of
which maps questions to answers for the $j$-th prover. The \emph{value}
of a game is
\begin{align*}
  \val(\cG) := \max_{\cS^{(1)}, \ldots, \cS^{(r)}}
  \Pr \left[
    V \left( q^{(1)}, \ldots, q^{(r)}, \cS^{(1)}(q^{(1)}), \ldots, 
    \cS^{(r)}(q^{(r)}) \right) = 1 
  \right] \; ,
\end{align*}
where maximum is over all strategies and the probability over the uniform
choice of $q^{(1)},\ldots,q^{(r)} \in \oQ$.
A game $\cG$ is called \emph{trivial} if
$\val(\cG) = 1$.

A formal definition is provided in Section~\ref{sec:definitions}. One might
consider allowing other distributions over $\oQ$ than the uniform one.
This does not make much difference for parallel repetition, as shown
in Section~\ref{sec:non-uniform}.

\subsection{Parallel repetition}

The \emph{$n$-fold parallel repetition} $\cG^n$ of an $r$-prover game $\cG$
is a game where the verifier samples $n$ independent question tuples,
sends $n$ questions to each prover, receives $n$ answers from each prover,
and accepts if all $n$ instances of the verification predicate for $\cG$ accept.

It is easy to see that $\val(\cG^n) \ge \val(\cG)^n$, in particular
if $\cG$ is trivial, then $\cG^n$ is trivial as well.
However, since the $i$-th answer of a prover can depend on all of his questions,
as opposed to just the $i$-th one, it is possible that
$\val(\cG^n)$ attains a higher value.

In this paper we are interested in upper bounds on $\val(\cG^n)$ 
that depend only on $\oQ$ and $n$.
They are called \emph{forbidden subgraph} bounds, with the name explained in
\cite{FV02}.

Specifically, let
$\omega_{\oQ}(n) := \max_{\cG} \val(\cG^n)$, where the maximum is over
all non-trivial games $\cG$ with question set\footnote{
  Note that the number of provers $r$ is implicitly determined by $\oQ$.
}
$\oQ$.
We say that a question set $\oQ$ \emph{admits parallel repetition}
if $\lim_{n \to \infty} \omega_{\oQ}(n) = 0$. Furthermore, we will say
that $\oQ$ \emph{admits exponential parallel repetition} if there exists
$C_{\oQ} < 1$ such that
\begin{align*}
  \omega_{\oQ}(n) \le (C_{\oQ})^n \; .
\end{align*} 

\paragraph{Background}
The two-prover games in the context of theoretical computer science were
first introduced by Ben-Or, Goldwasser, Kilian and Wigderson \cite{BGKW88}.
Fortnow, Rompel and Sipser 
were the first to treat the value of two-prover repeated games and exhibit
an example with $\val(\cG^2) > \val(\cG)^2$ \cite{FRS88, FRS90}.
Extensive works on parallel repetition were produced, for a
survey we refer to \cite{Fei95} and \cite{Raz10}. Here we will only mention 
some results relevant to our theorems.

\subsection{Density Hales-Jewett theorem}

\begin{definition}[Combinatorial line]
Let $r, n \in \mathbb{N}_{>0}$ and $[r] := \{1, \ldots, r\}$. 
A \emph{combinatorial pattern} over $[r]^n$
is a string 
\begin{align*}
(b_1, \ldots, b_n) = \underline{b}  \in 
  \left([r] \cup \{\ordstar\}\right)^n \setminus [r]^n \; ,
\end{align*}
where
$\ordstar$ is a special symbol called the \emph{wildcard}. Note that a pattern
contains at least one wildcard.

For $q \in [r]$ we let $\underline{b}(q) \in [r]^n$ to be the string
formed from $\underline{b}$ by substituting all occurrences of the wildcard
with $q$.

A \emph{combinatorial line} associated with a pattern $\underline{b}$ is
the set $L(\underline{b}) := \{\underline{b}(1), \ldots, \underline{b}(r)\}$.
\end{definition}

\begin{example}
For $r = 3$, $n = 5$, an example pattern is 
$\underline{b} = 12\ordstar2\ordstar$.
The corresponding combinatorial line is $L(\ub) = \{12121, 12222, 12323\}$.
\end{example}

The Hales-Jewett theorem \cite{HJ63}
says that for any $r$, and large enough $n$,
any coloring of $[r]^n$ contains a monochromatic combinatorial line.
The density Hales-Jewett theorem
is a strengthening of this result: It states
that for any $r$ and $\mu$, and large enough $n$, any subset of $[r]^n$
of measure $\mu$ contains a combinatorial line.

These theorems have played a central role in Ramsey theory because
many other results in this field
(e.g., corners theorem~\cite{AS74}, Szemerédi's theorem~\cite{Sze75}) 
can be reduced to the density Hales-Jewett theorem.

\begin{definition}
\label{def:dhj-coeff}
For a set $S \subseteq [r]^n$ we define its
\emph{measure} as $\mu(S) := |S| / r^n$. 
We then let
$\omega^{\DHJ}_r(n)$ to be the maximum measure of a subset of $[r]^n$ that
does not contain a combinatorial line.
\end{definition}

\begin{theorem}[Density Hales-Jewett theorem, \cite{FK91}]
\label{thm:dhj}
Let $r \ge 2$. Then, 
\begin{align*}
\lim_{n \to \infty} \omega_r^{\DHJ}(n) = 0 \; .
\end{align*}
\end{theorem}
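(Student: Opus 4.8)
The plan is a \emph{density increment} argument (I describe the combinatorial version rather than the original ergodic one). For $1 \le m \le n$, an $m$-dimensional \emph{combinatorial subspace} of $[r]^n$ is obtained by partitioning the coordinates into nonempty wildcard blocks $W_1,\ldots,W_m$ and a block $F$ of coordinates frozen to fixed values: for each $(w_1,\ldots,w_m)\in[r]^m$ one takes the point of $[r]^n$ whose coordinates in each $W_i$ all equal $w_i$ and whose coordinates in $F$ keep their frozen values. Such a subspace is a faithful copy of $[r]^m$, a combinatorial line is exactly a $1$-dimensional subspace, and a line inside a subspace of $A$ is in particular a line of $A$. I would prove the key lemma: for every $r\ge 2$ and $\mu>0$ there is $\delta=\delta(r,\mu)>0$ so that any line-free $A\subseteq[r]^n$ with $\mu(A)\ge\mu$ has density at least $\mu+\delta$ on some combinatorial subspace of $[r]^n$ of dimension $\ge m$, where $m$ can be taken as large as we like provided $n\ge N(r,\mu,m)$. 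Given this, one iterates: a line-free set of density $\ge\mu$ would, after at most $\lceil 1/\mu\rceil$ steps, live on a (still positive-dimensional, hence nonempty) subspace with density $>1$, which is absurd. So for $n$ large enough to carry the dimension budget through all iterations there is no line-free set of density $\ge\mu$, i.e.\ $\omega_r^{\DHJ}(n)<\mu$; since $\mu>0$ was arbitrary, $\omega_r^{\DHJ}(n)\to 0$.

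First I would dispose of the base case $r=2$, where the problem is exactly Sperner's theorem: identifying $x\in[2]^n$ with $\{i:x_i=2\}\subseteq[n]$ turns a combinatorial line with pattern $\ub$ into the strictly nested pair $\ub(1)\subsetneq\ub(2)$, so a line-free set is an antichain in the Boolean lattice, of size at most $\binom{n}{\lfloor n/2\rfloor}$ and density $O(n^{-1/2})$. A standard amplification then yields the \emph{multidimensional} statement for $r=2$ --- a dense subset of $[2]^n$ contains combinatorial subspaces of any prescribed finite dimension once $n$ is large --- which is the form the induction on the alphabet size will use.

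For the inductive step, assume the theorem and its multidimensional strengthening for alphabet size $r-1$, and let $A\subseteq[r]^n$ be line-free with $\mu(A)\ge\mu$. The moves I would make are: (i) pass from the uniform measure to the \emph{equal-slices} measure, which agrees with uniform up to negligible error for large $n$ but transforms cleanly under restriction to subspaces; (ii) consider sets that are \emph{insensitive} to interchanging two fixed symbols (say $1\leftrightarrow 2$) in arbitrary coordinates --- these are controlled by an $(r-1)$-ary structure, so the multidimensional $\DHJ_{r-1}$ lets us tile almost all of $[r]^n$ by $12$-insensitive combinatorial subspaces; (iii) if $A$ admits no density increment on any of these tiles, an averaging argument forces $A$ to correlate nontrivially with a $12$-insensitive set; (iv) use line-freeness of $A$ to upgrade that correlation into a genuine density increment on an actual combinatorial subspace. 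Step (iv) --- turning correlation with a structured, lower-complexity set into a bona fide density gain on a high-dimensional subspace --- is by far the hard part, and is precisely the crux of the combinatorial density Hales--Jewett proof; the approach of Furstenberg and Katznelson~\cite{FK91} repackages the same difficulty as a multiple-recurrence theorem for several commuting measure-preserving transformations. By comparison, the remaining bookkeeping --- tracking the dimension budget across the $\lceil 1/\mu\rceil$ iterations and controlling the uniform/equal-slices comparison --- is routine.
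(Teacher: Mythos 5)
The paper does not prove Theorem~\ref{thm:dhj} at all: it is imported as a black box from Furstenberg--Katznelson \cite{FK91} (with the quantitative combinatorial proof attributed to Polymath \cite{Pol12}), so there is no in-paper argument to compare yours against. What you have written is a faithful high-level outline of the Polymath density-increment proof --- Sperner's theorem as the $r=2$ base case, passage to equal-slices measure, partitioning via $12$-insensitive sets controlled by multidimensional $\DHJ_{r-1}$, and conversion of correlation with an insensitive set into a density increment on a combinatorial subspace --- and your framing of the iteration (a line-free set stays line-free on a subspace, so repeated increments force density above $1$) is the correct skeleton.

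However, as a proof it has a genuine gap, and you name it yourself: steps (ii)--(iv), above all (iv), \emph{are} the theorem. Turning correlation with a $12$-insensitive set into a bona fide density gain on a high-dimensional combinatorial subspace is where all of the difficulty (and all of the Ackermann-type quantitative loss the paper discusses) lives, and declaring it ``the crux'' without an argument means the sketch does not establish the result. Two smaller points: the iteration count should be governed by the increment $\delta(r,\mu)$, i.e.\ at most $\lceil 1/\delta\rceil$ steps rather than $\lceil 1/\mu\rceil$ (these agree only if $\delta\ge\mu$, which you do not claim); and the key lemma needs $\delta$ to be monotone in the density so that the increment does not degrade across iterations. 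If your goal was merely to justify citing the theorem, the outline is accurate; if it was to prove it, the essential content is missing.
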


The original proof of Furstenberg and Katznelson \cite{FK91} does not give 
explicit bounds for $\omega_r^{\DHJ}(n)$. This situation was improved by 
a more recent proof by Polymath \cite{Pol12}, however their bounds are still
not primitive recursive:
They prove that $\omega_r^{\DHJ}(n)$ decreases at a rate that is related
to the inverse of the $r$-th Ackermann function of $n$. In particular,
$\omega_3^{\DHJ}(n) \le O\left( 1 / \sqrt{\log^* n}\right)$.

On the other hand, another paper by Polymath \cite{Pol10} gives the best
known density Hales-Jewett lower bounds:
$\omega_r^{\DHJ}(n) \ge \exp\left(-O(\log n)^{1 / \lceil \log_2 r\rceil}
\right)$ for $r \ge 3$.
Note that $r = 2$ is a somewhat special case with
$\omega_2^{\DHJ}(n) = \Theta(1/\sqrt{n})$ known by Sperner's theorem.

Verbitsky \cite{Ver96} used the density Hales-Jewett theorem to show
that every question set admits parallel repetition:
\begin{theorem}[\cite{Ver96}]
\label{thm:dhj-pr}
Let $\oQ$ be a $k$-prover question set of size $\left|\overline{Q}\right| = r$. 
Then,
\begin{align*}
 \omega_{\oQ}(n) \le \omega_r^{\DHJ}(n) \; .
\end{align*}
In particular, $\oQ$ admits parallel repetition.
\end{theorem}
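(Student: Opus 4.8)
The plan is to argue by contraposition: from any prover strategy for $\cG^n$ whose success probability exceeds $\omega_r^{\DHJ}(n)$, I would extract a strategy for the single game $\cG$ that wins with certainty, so that every non-trivial $\cG$ with question set $\oQ$ must satisfy $\val(\cG^n) \le \omega_r^{\DHJ}(n)$. Write $\oQ = \{\oq_1, \ldots, \oq_r\}$ with $\oq_c = (\oq_c^{(1)}, \ldots, \oq_c^{(k)})$. The first step is the dictionary between the repeated game and the combinatorial cube $[r]^n$: a question vector of $\cG^n$ is an $n$-tuple of elements of $\oQ$, which I encode as a word $\ux \in [r]^n$ by letting $x_i$ record the index of the question tuple used in coordinate $i$; since $|\oQ| = r$ this is a bijection, and the uniform distribution on $\oQ^n$ becomes the uniform distribution on $[r]^n$. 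Fixing strategies $\cS^{(1)}, \ldots, \cS^{(k)}$ for $\cG^n$ and letting $W \subseteq [r]^n$ be the set of words $\ux$ on which all $n$ copies of $V$ accept under these strategies, the success probability equals $\mu(W)$. If $\mu(W) > \omega_r^{\DHJ}(n)$, then by Definition~\ref{def:dhj-coeff} the set $W$ contains a combinatorial line $L(\ub) = \{\ub(1), \ldots, \ub(r)\}$.

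The second step decodes the line into a single-round strategy. Let $S := \{i \in [n] : b_i = \ordstar\}$ be the (nonempty) wildcard set of $\ub$ and fix some $i^\ast \in S$. For each prover $j$ and each question $\oq_c^{(j)}$, set $\cT^{(j)}(\oq_c^{(j)})$ to be the $i^\ast$-th entry of the answer vector $\cS^{(j)}\bigl((\oq_{\ub(c)_i}^{(j)})_{i \in [n]}\bigr)$ --- that is, prover $j$ runs its $\cG^n$-strategy on the question tuple it would receive at the point $\ub(c)$ of the line, and outputs only the $i^\ast$-th answer. I would then check this is well defined, i.e.\ depends only on $\oq_c^{(j)}$ and not on $c$: if $\oq_c^{(j)} = \oq_{c'}^{(j)}$ then the input tuples $(\oq_{\ub(c)_i}^{(j)})_i$ and $(\oq_{\ub(c')_i}^{(j)})_i$ are identical, since outside $S$ both have $i$-th entry $\oq_{b_i}^{(j)}$ and inside $S$ they have $i$-th entry $\oq_c^{(j)}$ respectively $\oq_{c'}^{(j)}$, which coincide by hypothesis; hence so do the outputs.

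Finally I would verify that $(\cT^{(1)}, \ldots, \cT^{(k)})$ wins $\cG$ on every question tuple $\oq_c \in \oQ$: since $i^\ast \in S$ we have $\ub(c)_{i^\ast} = c$, so the $i^\ast$-th coordinate of the repeated game run at $\ub(c)$ uses exactly the questions $\oq_c^{(1)}, \ldots, \oq_c^{(k)}$, and by construction the answers given there are $\cT^{(1)}(\oq_c^{(1)}), \ldots, \cT^{(k)}(\oq_c^{(k)})$; because $\ub(c) \in W$ all $n$ predicates accept, in particular $V(\oq_c^{(1)}, \ldots, \oq_c^{(k)}, \cT^{(1)}(\oq_c^{(1)}), \ldots, \cT^{(k)}(\oq_c^{(k)})) = 1$. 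Thus $\val(\cG) = 1$, contradicting non-triviality, and the bound $\omega_{\oQ}(n) \le \omega_r^{\DHJ}(n)$ follows; the final sentence of the statement is then immediate from Theorem~\ref{thm:dhj}. I expect the only genuinely delicate point to be this well-definedness check --- recognizing that the wildcard coordinates are precisely the ones along which the $r$ points of the line differ, so a prover seeing only its own coordinate of a single point cannot distinguish which point of the line it came from; the rest is unwinding the definitions.
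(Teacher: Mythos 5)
Your proof is correct and is exactly the standard argument for this theorem (which the paper only cites from \cite{Ver96} rather than reproving): encode question vectors of $\cG^n$ as points of $[r]^n$, extract a combinatorial line from a winning set of measure above $\omega_r^{\DHJ}(n)$, and read off a perfect single-round strategy from a wildcard coordinate, with the well-definedness check you flag being the one genuinely delicate point. It is also precisely the specialization of the paper's forbidden-subgraph mechanism (Lemma~\ref{lem:good-implies-pr}) to homomorphism vectors built from identities and constants, since a combinatorial line in the winning set is the same datum as such a good homomorphism vector (cf.\ Claims~\ref{cl:id-constant} and~\ref{cl:good-implies-cl}).
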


\subsection{Our results --- equivalence of DHJ and PR}

Despite numerous works, especially concerning the two-prover case,
Theorem~\ref{thm:dhj-pr} remains the best available bound for general
parallel repetition of multi-prover games. Our main result hints that
there might be a reason for this situation.

\begin{definition} \label{def:qr}
Let $r \ge 2$. We define an $r$-prover question set
$\oQ_r \subseteq \{0,1\}^r$ of size $r$, where the $j$-th question 
contains $1$ in the $j$-th position and $0$ in the remaining
positions. In other words,
\begin{align*}
  \oQ_r := \left\{
    (q^{(1)}, \ldots, q^{(r)}): |\{j: q^{(j)} = 1\}| = 1
  \right\} \; .
\end{align*}
\end{definition}

Our main theorem is a construction that ties the existence of a
combinatorial line in a set with the parallel repetition value of a certain
game. Note that it requires at least three provers.

\begin{theorem}
\label{thm:dhj-game}
Let $r \ge 3$, $n \ge 1$ and $S \subseteq [r]^n$ with
$\mu(S) = |S| / r^n$ such that $S$ does not contain a combinatorial line. 

There exists an $r$-prover game $\cG_S$ with question set $\oQ_r$ and with
answer alphabets $A^{(j)} = 2^{[n]} \times [n]$ such that:
\begin{itemize}
  \item $\val(\cG_S) \le 1 - 1/r$.
  \item $\val(\cG_S^n) \ge \mu(S)$.
\end{itemize}
\end{theorem}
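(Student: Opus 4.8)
The plan is to design the game $\cG_S$ so that a single round encodes "pick a coordinate $i\in[n]$ and commit to a symbol," and so that an $n$-fold play naturally corresponds to choosing a point of $[r]^n$; the forbidden line in $S$ is what prevents provers from cheating across coordinates. Concretely, recall that in $\oQ_r$ the $j$-th prover, upon receiving the question, learns only whether she is the ``special'' prover (her bit is $1$) or not. In $\cG_S$ the verifier picks the special prover $j^\star\in[r]$ uniformly; an answer of prover $j$ is a pair $(E^{(j)}, t^{(j)}) \in 2^{[n]}\times[n]$, to be thought of as ``the set of coordinates where I would have been willing to play symbol $j$, together with a designated coordinate $t^{(j)}$.'' I would have the verifier accept iff the designated coordinate of the special prover lies in everyone's claimed set, i.e. $t^{(j^\star)} \in E^{(j)}$ for all $j$, together with a consistency check forcing the non-special provers to agree on $t^{(j^\star)}$ and forcing $j^\star \in \{$symbols not excluded$\}$ — the precise predicate has to be tuned so that the two bullet points come out. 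The single-round value bound $\val(\cG_S)\le 1-1/r$ should then follow because, with only the bit ``am I special,'' the provers are essentially forced into a fixed fallback answer, and for at least one choice of $j^\star$ (corresponding to a symbol where $S$ has a line-defining obstruction, or simply by an averaging/Sperner-type argument over $r$ equally likely values of $j^\star$) the predicate fails.

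For the lower bound $\val(\cG_S^n)\ge \mu(S)$, the key step is to exhibit an explicit strategy in the $n$-fold game built from $S$. In the repetition, prover $j$ receives an $n$-tuple of questions, which tells her the set $W_j \subseteq [n]$ of rounds in which she is special. The strategy: interpret the vector of ``special prover'' identities across the $n$ rounds as a candidate point $x\in[r]^n$ (round $i$ gets symbol $=$ the identity of the special prover in round $i$); prover $j$ then knows $x$ restricted to $W_j$, namely $x|_{W_j}\equiv j$ and the positions $W_j$, but nothing about $x$ off $W_j$. Each prover picks some completion $\hat x^{(j)}\in[r]^n$ consistent with what she sees (agreeing with $j$ on $W_j$), declares $E^{(j)} = \{i : \hat x^{(j)}_i = (\text{that round's intended symbol})\}$ — wait, she doesn't know the intended symbol off $W_j$ — so instead she declares, for each round $i$, the symbol $\hat x^{(j)}_i$, and in round $i$ she outputs data letting the verifier reconstruct $\hat x^{(j)}$. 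The acceptance predicate for $\cG_S$ is engineered so that the $n$ rounds jointly accept exactly when all $r$ reconstructed points coincide with a common $x\in[r]^n$ \emph{and} $x\in S$; since $x\notin$ any line, the only way all provers can be forced to agree is if each one's unknown-coordinate guesses happen to match, and the line-freeness of $S$ is exactly the combinatorial obstruction guaranteeing this is consistent (a line is the set of the $r$ possible points you get by letting the ``special'' identity range over $[r]$ in the wildcard positions). So the natural strategy is: fix $S$; each prover, knowing $x|_{W_j}$, outputs the unique (by line-freeness) point of $S$ — if one exists — consistent with her view; this succeeds precisely when the true $x$ lies in $S$, an event of probability $\mu(S)$ over the verifier's coins (the verifier's choice of special provers across $n$ rounds is exactly a uniform $x\in[r]^n$).

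The steps in order: (1) define $\cG_S$ — the verifier's distribution on $\oQ_r$, the answer format $2^{[n]}\times[n]$ reinterpreted as ``(claimed membership info, pointer)'', and the verification predicate $V$; (2) verify $\val(\cG_S)\le 1-1/r$ by arguing that in one round the provers have essentially no information and a uniform-over-$j^\star$ argument kills a $1/r$ fraction; (3) set up the correspondence between the verifier's randomness in $\cG_S^n$ and a uniform point $x\in[r]^n$, and between prover views and restrictions $x|_{W_j}$; (4) describe the line-free-set strategy and show $V^{\otimes n}$ accepts whenever $x\in S$, using line-freeness of $S$ to argue the provers' completions are forced to agree, giving $\val(\cG_S^n)\ge \mu(S)$.

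The main obstacle I expect is step (1)/(4): getting the verification predicate exactly right so that, on one hand, it is loose enough that the ``play a point of $S$'' strategy wins with probability $\mu(S)$ in the repeated game, and on the other hand tight enough that the single-shot value is genuinely bounded by $1-1/r$ and, crucially, that in the repeated game the acceptance \emph{forces} all provers to reconstruct the same $x$ — otherwise provers could win with probability $>\mu(S)$ by independent guessing, which would break the intended tight connection to DHJ. The delicate point is that each prover sees a different coordinate-subset $W_j$, the $W_j$ need not cover $[n]$ and can overlap arbitrarily, so the predicate must cross-check the provers' answers in a way that, combined with line-freeness, pins down a single global point; reconciling this cross-check with the limited $2^{[n]}\times[n]$ answer budget and with the $1-1/r$ single-round bound is where the real work lies.
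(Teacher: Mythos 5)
You have the right architecture --- identify the question tuple with the identity of a ``special'' prover, identify the verifier's randomness in $\cG_S^n$ with a uniform $x\in[r]^n$, and use answers in $2^{[n]}\times[n]$ as (set of coordinates, pointer) --- but the proof has two genuine gaps, and you flag the first one yourself: the verification predicate is never actually specified, and everything else hinges on it. The predicate the paper uses is: the $r$ reported sets $T^{(1)},\ldots,T^{(r)}$ must form a \emph{partition} of $[n]$; all pointers must agree on a common $z$; $z$ must lie in the special prover's set $T^{(a)}$; and the string $\us$ with $s_i=j$ iff $i\in T^{(j)}$ must lie in $S$. The partition requirement is the load-bearing piece, and your proposed check (``$t^{(j^\star)}\in E^{(j)}$ for all $j$'' plus unspecified consistency) does not supply it. With the partition check in place, the bound $\val(\cG_S)\le 1-1/r$ is proved by contraposition, not by the averaging/Sperner argument you gesture at: a \emph{perfect} one-round strategy has fixed answers $(T^{(j)}_{(0)},z)$ on question $0$; for any $j\ne j'$ the verifier can pick a third special prover (this is exactly where $r\ge3$ enters), which forces the $T^{(j)}_{(0)}$ to be pairwise disjoint and $z$ to lie outside all of them, so they define a combinatorial pattern $\ub$; the partition check then forces the special prover $a$ to answer with the complement, and the membership check gives $\ub(a)\in S$ for every $a$, i.e.\ a combinatorial line in $S$ --- contradiction. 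None of this is in your sketch.

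The second gap is in the lower bound, where your strategy is both more complicated than needed and based on a false claim. Line-freeness of $S$ does \emph{not} imply that a prover's view $x|_{W_j}$ determines a unique point of $S$ (a line-free set can easily contain many points agreeing on $W_j$), so ``output the unique point of $S$ consistent with her view'' is not well defined, and even if each prover output \emph{some} consistent point of $S$ there is no reason they would agree. The correct strategy requires no guessing at all: prover $j$ simply reports $T^{(j)}:=\{i: \text{she is special in round } i\}$ together with pointer $i$ in round $i$. Since exactly one prover is special in each round, the partition check passes automatically, the induced string is exactly the special-prover sequence $(a_1,\ldots,a_n)$, and all $n$ rounds accept precisely when that string lies in $S$ --- an event of probability $\mu(S)$. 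Note that line-freeness plays no role whatsoever in this direction; it is used only for the single-round upper bound. Your write-up inverts this, attributing to line-freeness the job of making the provers' completions agree in the repeated game.
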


Theorem \ref{thm:dhj-game} immediately implies an inequality
complementary to Theorem~\ref{thm:dhj-pr}:

\begin{theorem}
\label{thm:pr-dhj}
Let $r \ge 3$. We have $\omega_{r}^{\DHJ}(n) \le \omega_{\oQ_r}(n)$.
\end{theorem}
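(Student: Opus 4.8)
Theorem~\ref{thm:pr-dhj} is an immediate corollary of Theorem~\ref{thm:dhj-game}: one just has to unwind the definition of $\omega_r^{\DHJ}(n)$ and $\omega_{\oQ_r}(n)$. The plan is as follows. Fix $r \ge 3$ and $n \ge 1$. By Definition~\ref{def:dhj-coeff}, let $S \subseteq [r]^n$ be a line-free set of maximum measure, so that $\mu(S) = \omega_r^{\DHJ}(n)$. Apply Theorem~\ref{thm:dhj-game} to this $S$ to obtain the $r$-prover game $\cG_S$ with question set $\oQ_r$ satisfying $\val(\cG_S) \le 1 - 1/r < 1$ and $\val(\cG_S^n) \ge \mu(S) = \omega_r^{\DHJ}(n)$.

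**Finishing the bound.** Since $\val(\cG_S) \le 1 - 1/r < 1$, the game $\cG_S$ is non-trivial in the sense defined in the introduction, and it has question set exactly $\oQ_r$. Hence $\cG_S$ is one of the games over which the maximum defining $\omega_{\oQ_r}(n) = \max_{\cG} \val(\cG^n)$ is taken, and therefore
\begin{align*}
  \omega_{\oQ_r}(n) \ge \val(\cG_S^n) \ge \omega_r^{\DHJ}(n) \; .
\end{align*}
As $n$ was arbitrary, this establishes the claimed inequality for all $n$.

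**Main obstacle.** There is essentially no obstacle at the level of Theorem~\ref{thm:pr-dhj} itself — all the work is hidden in Theorem~\ref{thm:dhj-game}, whose proof requires constructing the game $\cG_S$ (with answer alphabet $2^{[n]} \times [n]$) and verifying both the single-shot upper bound $\val(\cG_S) \le 1-1/r$ and the repeated lower bound $\val(\cG_S^n) \ge \mu(S)$; the latter is where a line-free strategy for the provers in $\cG_S^n$ must be extracted from the set $S$. The only minor point to be careful about here is that the maximum defining $\omega_{\oQ_r}(n)$ ranges over \emph{non-trivial} games, so one must explicitly note that $\val(\cG_S) \le 1 - 1/r$ rules out triviality; this is why the hypothesis $r \ge 3$ (ensuring $1 - 1/r < 1$, indeed already $r \ge 2$ suffices for that) together with the first bullet of Theorem~\ref{thm:dhj-game} is invoked. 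The restriction to $r \ge 3$ is inherited from Theorem~\ref{thm:dhj-game}.
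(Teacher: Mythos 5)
Your proposal is correct and is exactly the argument the paper intends: the paper states that Theorem~\ref{thm:pr-dhj} follows immediately from Theorem~\ref{thm:dhj-game}, and your write-up simply makes that deduction explicit (maximal line-free $S$, non-triviality of $\cG_S$ from $\val(\cG_S)\le 1-1/r<1$, and $\omega_{\oQ_r}(n)\ge\val(\cG_S^n)\ge\mu(S)$). No gaps.
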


Our game construction is related to another one by Feige and Verbitsky 
\cite{FV02} in the following way:
They proved that the so-called forbidden subgraph 
method is universal\footnote{
As a matter of fact, they treated only two-prover games.
However, their construction generalized to the multi-prover setting.
}
for obtaining bounds on $\omega_{\oQ}(n)$
(hence the name: forbidden subgraph bounds).
In their proof they used a result that is very similar to Theorem~\ref{thm:dhj-game}
generalized to an arbitrary question set $\oQ$:
its statement is almost the same except they assume that the set $S$ does
not contain a forbidden subgraph (instead of a combinatorial line).
As a matter of fact, our game $\cG_S$ is the same as the one from their theorem
instantiated on the question set $\oQ_r$.

Consequently, Theorem~\ref{thm:dhj-game} has a shorter proof assuming
the result of Feige and Verbitsky: One only checks that for the question set
$\oQ_r$ a forbidden subgraph in $S$ implies a combinatorial line in $S$. 
This is discussed in more detail
in Section~\ref{sec:fv02}. A self-contained proof of Theorem~\ref{thm:dhj-game}
is provided in Section~\ref{sec:pr-dhj}.

Most of the work done on parallel repetition lower bounds is based on 
two-prover examples by Feige and Verbitsky \cite{FV02} and Raz \cite{Raz11}
with little known in the multi-prover case.
By taking the largest known combinatorial line-free sets
from \cite{Pol10}, Theorem~\ref{thm:dhj-game} yields a new
family of multi-prover games with a slow decrease in the parallel 
repetition value.
In particular, it establishes the first known question sets that do not
admit exponential parallel repetetition answering
an open question asked in \cite{FV02}
in multi-prover case:
\begin{theorem}
\label{thm:multi-no-exp}
Let $r \ge 3$. The question set $\oQ_r$ does not admit
exponential parallel repetition.
\end{theorem}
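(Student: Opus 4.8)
The plan is to combine Theorem~\ref{thm:pr-dhj} (equivalently Theorem~\ref{thm:dhj-game}) with the density Hales--Jewett lower bounds of Polymath~\cite{Pol10}. Recall that $\oQ_r$ fails to admit exponential parallel repetition precisely when there is no constant $C < 1$ with $\omega_{\oQ_r}(n) \le C^n$ for all $n$; so it suffices to show that $\omega_{\oQ_r}(n)$ decays strictly more slowly than any exponential, i.e.\ $\omega_{\oQ_r}(n) \ge D^n$ fails for no... more precisely, that $\omega_{\oQ_r}(n)^{1/n} \to 1$. By Theorem~\ref{thm:pr-dhj} we have $\omega_{\oQ_r}(n) \ge \omega_r^{\DHJ}(n)$, so it is enough to prove that $\omega_r^{\DHJ}(n)^{1/n} \to 1$, i.e.\ that the largest combinatorial-line-free subset of $[r]^n$ has measure that is sub-exponentially small.

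First I would invoke the bound from~\cite{Pol10} quoted in the excerpt: for $r \ge 3$,
\begin{align*}
  \omega_r^{\DHJ}(n) \ge \exp\!\left(-O\big((\log n)^{1/\lceil \log_2 r\rceil}\big)\right) \; .
\end{align*}
Since $\lceil \log_2 r \rceil \ge 1$ is a fixed constant (depending only on $r$), the exponent $(\log n)^{1/\lceil \log_2 r\rceil}$ grows like a fractional power of $\log n$, hence is $o(n)$ — indeed $o(\log n \cdot$ anything$)$. Therefore
\begin{align*}
  \frac{1}{n}\log \omega_r^{\DHJ}(n) \ge -\frac{O\big((\log n)^{1/\lceil \log_2 r\rceil}\big)}{n} \xrightarrow[n\to\infty]{} 0 \; ,
\end{align*}
so $\liminf_{n\to\infty}\omega_r^{\DHJ}(n)^{1/n} \ge 1$; combined with the trivial $\omega_r^{\DHJ}(n) \le 1$ this gives $\omega_r^{\DHJ}(n)^{1/n} \to 1$.

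Finally I would chain the inequalities: for any fixed $C < 1$, the relation $\omega_{\oQ_r}(n) \ge \omega_r^{\DHJ}(n)$ together with $\omega_r^{\DHJ}(n)^{1/n} \to 1 > C$ shows that $\omega_{\oQ_r}(n) > C^n$ for all sufficiently large $n$. Hence no constant $C_{\oQ_r} < 1$ can satisfy $\omega_{\oQ_r}(n) \le (C_{\oQ_r})^n$ for all $n$, which is exactly the statement that $\oQ_r$ does not admit exponential parallel repetition. The only real content here is the arithmetic observation that a fractional power of $\log n$ is $o(n)$, so there is no genuine obstacle; the main work — constructing the game and proving Theorem~\ref{thm:dhj-game}, and citing the Polymath lower bound — has already been done. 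One caveat worth stating explicitly in the writeup is that the hidden constant in the $O(\cdot)$ of~\cite{Pol10} may depend on $r$, but since $r$ is fixed throughout this does not affect the $n \to \infty$ limit.
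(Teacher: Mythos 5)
Your proposal is correct: the reduction via Theorem~\ref{thm:pr-dhj} is exactly the paper's strategy, and your limit computation (a fractional power of $\log n$ is $o(n)$, hence $\omega_r^{\DHJ}(n)^{1/n} \to 1$, hence $\omega_r^{\DHJ}(n) > C^n$ eventually for any fixed $C<1$) correctly negates the definition of admitting exponential parallel repetition. The one place where you genuinely diverge from the paper is the source of the density Hales--Jewett lower bound. You import the Polymath bound $\omega_r^{\DHJ}(n) \ge \exp\bigl(-O\bigl((\log n)^{1/\lceil \log_2 r\rceil}\bigr)\bigr)$, a deep external result. The paper instead gives a two-line self-contained construction: take $S := \{ \ux \in [r]^n : w_1(\ux) = \ldots = w_r(\ux) = n/r \}$ (for $n$ divisible by $r$), observe that $S$ is line-free because replacing the wildcards of a pattern by different symbols changes the symbol counts, and use Stirling to get $\mu(S) \ge \Omega\bigl(1/n^{(r-1)/2}\bigr)$, which is polynomial and therefore already not exponentially small. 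The paper's route buys elementarity and independence from \cite{Pol10} for this particular theorem; your route buys a quantitatively stronger conclusion (a sub-exponential but super-polynomial lower bound on $\omega_{\oQ_r}(n)$), which is essentially inequality \eqref{eq:02c} that the paper derives separately afterwards. Either is a complete proof of the stated theorem; if you keep the Polymath citation, do state (as you did) that the implied constant depends on $r$, and note that the paper's Theorem~\ref{thm:dhj-lower-bound} is formally stated only for $r = 2^{\ell-1}+1$, so for general $r \ge 3$ you should either use the inline form quoted in the introduction or restrict to the nearest such $r' \le r$ by monotonicity.
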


Furthermore, this family is related to another aspect of the lower
bound in \cite{FV02}. They show that for any two-prover
upper bound of the form
\begin{align*}
  \val(\cG^n) \le \exp \left( - f(\epsilon) \cdot g(\left|\oA\right|) \cdot n 
  \right) \; ,
\end{align*}
where $\val(\cG) = 1-\epsilon$ and $\left|\oA\right|$ is the answer set
size, it must be $g(\left|\oA\right|) \le
O\big( \log \log \left|\oA\right| / \allowbreak
\log \left| \oA \right| \big)$.
On the other hand, the famous two-prover upper bound by Raz \cite{Raz98}
has $g(\left|\oA\right|) \ge 
\Omega\left( 1 / \log \left|\oA\right| \right)$. The analysis of our family
of games implies that if such bound is ever extended to the multi-prover setting,
it must be $g\left( \left|\oA\right| \right) \le 
O\left( \left( \log \log \left|\oA\right| \right)^{\epsilon} / 
\log \left|\oA\right| \right)$ for any $\epsilon > 0$.

Those lower bounds are discussed in Section~\ref{sec:lower-bounds}.

Finally, inspired by the proof of Theorem~\ref{thm:dhj-game},
in Section~\ref{sec:coloring} we define \emph{coloring games} and their
parallel repetition and show that it is equivalent to the (coloring)
Hales-Jewett theorem.

\subsection{Our results --- upper bounds on parallel repetition}

It remains open if all two-prover question sets admit exponential parallel
repetition.
We believe that this question, though somewhat forgotten, 
becomes more interesting in light
of Theorem~\ref{thm:multi-no-exp}. If the answer is affirmative,
it would constitute a significant difference
in behavior of two and three-prover games. On the other hand, if there exist
``hard'' two-prover sets, it might be possible to use them to obtain
non-trivial consequences in vein of Theorem~\ref{thm:pr-dhj}.

Motivated by this, we explore a new method for obtaining
exponential parallel repetition:
In Section~\ref{sec:constructing} we define a notion of a
question set $\oQ$ that is
\emph{constructible by conditioning}
(see Definition~\ref{def:constructible-hypergraphs}).
It is an inductive definition
using two graph-theoretic operations (we interpret an $r$-prover
question set $\oQ$ as the edge set of an $r$-regular, $r$-partite hypergraph).
Then, using a technique inspired by a previous paper of some of the authors 
\cite{HHM15},
in Section~\ref{sec:constructability-pr-proof} we prove:
\begin{theorem}
\label{thm:constructible-good-simple}
Let $\oQ$ be an $r$-prover question set that is constructible by conditioning.
Then, $\oQ$ admits exponential parallel repetition.
\end{theorem}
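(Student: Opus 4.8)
The plan is to prove Theorem~\ref{thm:constructible-good-simple} by induction on the inductive structure witnessing that $\oQ$ is constructible by conditioning. The base case should consist of the ``atomic'' question sets from which the inductive definition starts --- presumably a single question (a trivial game, vacuously satisfying an exponential bound) or some other small fixed sets --- and here one checks directly that the parallel repetition value decays exponentially. For the inductive step I would take the two graph-theoretic operations used in Definition~\ref{def:constructible-hypergraphs} and show that each of them \emph{preserves} the property of admitting exponential parallel repetition: if $\oQ$ is obtained from question sets $\oQ_1, \ldots, \oQ_k$ (each already known to admit exponential parallel repetition, with rates $C_{\oQ_i} < 1$) by applying one of the operations, then $\omega_{\oQ}(n) \le (C_{\oQ})^n$ for some $C_{\oQ} < 1$ depending only on the $C_{\oQ_i}$ and on the structure of the operation.

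The heart of the argument is the analysis of the ``conditioning'' operation, and here I expect to borrow the technique from \cite{HHM15} alluded to in the excerpt. The idea is roughly as follows: in the $n$-fold repeated game $\cG^n$ on the composite question set $\oQ$, fix an optimal strategy and look at the event that the verifier accepts. One conditions on partial information --- e.g.\ on the coordinates where the questions ``use'' one of the constituent sub-hypergraphs, or on the answers/questions along the distinguished vertex or edge introduced by the operation --- and argues that, conditioned on a typical such event, the residual game on each block of coordinates behaves like a repetition of one of the smaller games $\cG_i$, so its success probability is bounded by $(C_{\oQ_i})^{m_i}$ where $m_i$ is the number of coordinates falling into that block. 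Summing the $m_i$ (they partition, or nearly partition, the $n$ coordinates) and taking a weighted geometric mean over the conditioning gives a bound of the form $(C_{\oQ})^n$. A subtlety is that the provers' strategies in $\cG^n$ need not respect the block decomposition --- the $i$-th answer can depend on all $n$ questions --- so the conditioning has to be set up so that, after fixing the conditioned-upon data, the strategies restricted to a block become valid (possibly randomized) strategies for the repeated sub-game; this is exactly the kind of bookkeeping that the forbidden-subgraph / ``splitting'' machinery handles.

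The main obstacle will be making the conditioning argument lose only a constant factor \emph{in the exponent} rather than, say, a sub-constant factor or an additive term, so that the final rate $C_{\oQ}$ stays strictly below $1$. Concretely, one has to control two things simultaneously: first, the number of ``bad'' coordinates --- those that, after conditioning, do not cleanly reduce to a sub-game --- must be a vanishing fraction of $n$ (or contribute only a $(1-\delta)^n$-type savings on their own); second, the entropy/volume cost of the conditioning (how many distinct values we condition on) must itself be absorbed into an exponential-in-$n$ budget. Balancing these, most likely via a concentration argument showing that with overwhelming probability the coordinates split roughly proportionally among the blocks, is where the real work lies; the other operation (presumably a disjoint-union-type or product-type combination) should be comparatively easy, since there the block structure is rigid and one gets $\omega_{\oQ}(n) \le \sum_{i} \omega_{\oQ_i}(n_i)$-style bounds almost immediately. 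Throughout, I would keep the induction hypothesis in the quantitative form ``$\omega_{\oQ}(n) \le C^n$ for an explicit $C = C(\text{construction tree})$'' rather than the qualitative ``admits exponential parallel repetition,'' since the operations compose the rates and one needs to see that they never reach $1$.
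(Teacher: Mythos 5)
Your proposal heads in a direction that does not match the structure of the problem, and I think the core plan would not go through. The first issue is the object of the induction. You propose to induct on the statement ``$\oQ$ admits exponential parallel repetition,'' showing that each construction operation preserves it. But the operations of Definition~\ref{def:constructible-hypergraphs} (doubling a vertex set, and collapsing via a retraction homomorphism) do not decompose $\oQ$ into constituent question sets $\oQ_1,\ldots,\oQ_k$ whose repeated games sit inside $\cG^n$ as blocks of coordinates. Every coordinate of $\cG^n$ uses the \emph{same} final question set $\oQ$; the intermediate hypergraphs of the construction never appear as question sets of sub-games, so there is no block decomposition of $[n]$ to condition on, no quantities $m_i$ to sum, and no concentration argument to run. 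Relatedly, the word ``conditioning'' in ``constructible by conditioning'' names the graph operations themselves, not an information-theoretic conditioning step inside the repeated game; the paper explicitly contrasts its combinatorial method with the Raz--Holenstein style conditioning arguments you are gesturing at, and the difficulties you anticipate (bad coordinates, entropy cost of conditioning) belong to that other framework.

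The actual proof goes through two reductions you are missing. First, by the forbidden subgraph method (Lemma~\ref{lem:good-implies-pr}), $\omega_{\oQ}(n)$ is bounded by the threshold $\epsilon$ at which every $S\subseteq\oQ^n$ of measure $\epsilon$ admits a ``good'' vector of homomorphisms $(f_1,\ldots,f_n)$ of $\oQ$, meaning $\underline{f}(\oq)\in S$ for all $\oq\in\oQ$ and some $f_i=\Id$. Second, such vectors are produced by the probabilistic method: it suffices to exhibit a single distribution $\cH$ on $\Hom(\oQ,\oQ)$ with $\cH(\Id)>0$ such that an i.i.d.\ vector $\underline{f}\sim\cH^n$ lands entirely in $S$ with probability at least $\mu(S)^{C}$ (Lemma~\ref{lem:probabilistic-good}); a union bound then gives the exponential rate. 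The induction is then performed not on parallel repetition values but on this ``same-set hitting'' property of homomorphism distributions: for each intermediate hypergraph $\oP$ in the construction one builds a distribution on $\Hom(\oP,\oQ)$ with hitting exponent $C$, where a single edge gives $C=1$, a collapse preserves $C$ (restrict the distribution along the retraction), and a doubling at most doubles $C$ via one application of Jensen's inequality, conditioning on the image of the fixed vertices. This is where the constant stays strictly controlled, with no coordinate bookkeeping at all. Without the passage to homomorphisms and the hitting-distribution formulation, I do not see how your inductive step for the doubling operation could be carried out.
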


In particular, this gives a new proof of exponential parallel repetition for
\emph{free} multi-prover question sets\footnote{
The proof in \cite{CCL92} is for two provers only, but it generalizes to
many provers by using a result on hypergraphs by
Erdős \cite{Erd64}.
}
\cite{CCL92, Fei95, Pel95}, 
i.e., those, where the provers' questions are independent.

In Section~\ref{sec:treewidth-two} we present an example of what can be
achieved with this method.
We show:
\begin{restatable}{theorem}{twtwo} 
\label{thm:tw-constructible}
Every bipartite graph $G$ with treewidth at most two is constructible by
conditioning.
In particular, if $G$ is interpreted as a two-prover question set,
then it admits exponential parallel repetition.
\end{restatable}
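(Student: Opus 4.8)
The plan is to induct on $|V(G)|+|E(G)|$ using the classical recursive structure of graphs of treewidth at most two. Two standard facts drive the argument. First, any graph $G$ with $\tw(G)\le 2$ and at least one edge has a vertex of degree at most two: take a width-$2$ tree decomposition, root it, and look at a vertex introduced by a leaf bag — all of its neighbours lie in that bag, so its degree is at most $2$. Second, deleting any vertex only decreases treewidth and preserves bipartiteness, so every induced subgraph we pass to is again a bipartite graph of treewidth at most two. Recall also that for $r=2$ a question set is exactly a bipartite graph (the two parts being the two provers' questions, the edges the admissible pairs), so the induction stays inside the class in question. The base cases — a single edge and, degenerately, the empty graph — are constructible directly from Definition~\ref{def:constructible-hypergraphs}, and if $G$ is disconnected we treat each component separately and recombine them with the other (non-conditioning) operation of that definition.

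For the inductive step pick a vertex $v$ of degree at most two, lying on (say) the prover-$1$ side, and set $G':=G-v$, which by the remarks above is a strictly smaller bipartite graph of treewidth at most two and hence constructible by conditioning by induction. If $\deg_G(v)\le 1$, then $v$ is pendant or isolated, and reintroducing it amounts to adding a prover-$1$ question whose ``link'' (the set of compatible prover-$2$ questions) has size at most one; this is exactly a one-step application of the conditioning operation, so $G$ is constructible. If $\deg_G(v)=2$ with $N_G(v)=\{u,w\}$, then because $G$ is bipartite $u$ and $w$ lie on the prover-$2$ side and are non-adjacent; this is the point where bipartiteness, not merely treewidth, is essential, since it guarantees that the usual ``smoothing'' edge $uw$ never needs to be added, so $G'=G-v$ is literally the correct smaller graph. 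One then shows that $G$ arises from $G'$ by a single conditioning step that adjoins the prover-$1$ question $v$ with link $\{u,w\}$: the two-element set $\{u,w\}$ is a one-prover question set carrying no further structure, hence trivially constructible, and the fact that $u,w$ appear together in some bag of a tree decomposition of $G$ (the bag that contained $v$) is what supplies the compatibility needed to legally perform this conditioning. Chaining these steps through the induction makes $G$ constructible by conditioning, and the ``in particular'' statement then follows immediately from Theorem~\ref{thm:constructible-good-simple}.

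The main obstacle is the degree-two case: one must match the move ``delete a degree-two vertex, reintroduce it by conditioning on its two-element link'' to the precise primitives of Definition~\ref{def:constructible-hypergraphs} and verify all the side conditions — in particular that the link being conditioned on consists of already-present questions satisfying the constructibility requirement, and that reintroducing $v$ creates no spurious admissible pairs. The remaining points are routine bookkeeping: isolated vertices produced when $v$ is deleted correspond to prover-$2$ questions that are then never asked and can simply be discarded; pendant paths and the gluing of connected components are handled by iterating the operations already described. (An alternative to the vertex-deletion induction is to pass through the block decomposition to the $2$-connected case and run the series/parallel recursion for series--parallel graphs, matching series and parallel composition to the two operations; the degree-two obstruction reappears there as the parallel-composition step, so this reformulation does not circumvent the core difficulty.)
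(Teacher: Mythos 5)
Your induction scheme (peel off a vertex of degree at most two, recurse, reinsert) founders on a point you yourself flag as ``the main obstacle'': the reinsertion step is not an operation that Definition~\ref{def:constructible-hypergraphs} provides. The only primitives are (i) starting from a single edge, (ii) \emph{doubling}, which creates copies of already-present vertices together with copies of their incident non-fixed edges, and (iii) \emph{collapsing}, which only deletes vertices via a retraction homomorphism. There is no move that ``adjoins the question $v$ with link $\{u,w\}$'': a new vertex can only arise as a doubled copy $x'$ of some existing $x$, and its neighbourhood is then forced to be (a copy of) $N(x)$, not an arbitrary prescribed two-element set; and collapses can never delete edges afterwards. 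Even the degree-one case is not ``exactly a one-step application'' of anything — it is Lemma~\ref{lem:construct-leaf}, which already requires doubling the entire rest of the graph and then collapsing the copies. For degree two, emulating ``add $v$ adjacent to $u$ and $w$'' is precisely where all the difficulty of the theorem lives: it is the parallel-composition case, and it genuinely depends on the lengths and parities of the existing $u$--$w$ paths (the inequalities \eqref{eq:01a}--\eqref{eq:03a}), on being able to collapse longer paths onto shorter ones (Lemma~\ref{lem:spine-collapse}), and on carrying an invariant about which spine vertices get doubled. Asserting that ``one then shows'' this works, without supplying that argument, leaves the proof with its central step missing. (Your closing parenthetical concedes exactly this, so the proposal is an outline of where the work would have to go rather than a proof.)

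For comparison, the paper does not induct on vertex deletion at all. It invokes the characterisation of connected bipartite treewidth-two graphs as generalized bipartite series-parallel graphs (Theorem~\ref{thm:tw-vs-sp}), defines the \emph{spine} (a shortest top--bottom path), shows every series-parallel graph retracts onto its spine (Lemma~\ref{lem:spine-collapse}), and then proves by induction on the series-parallel structure that the spine can be grown back into the whole graph by doublings and collapses while maintaining the invariant that the doubled spine vertices always form a contiguous segment (Lemma~\ref{lem:series-parallel-is-constructible}); the parallel-composition case further needs a normal form (Claim~\ref{cl:parallel-standard-form}) and a ``rotation'' of the orientation when the length condition \eqref{eq:01a} holds. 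Your degree-two vertex corresponds to a parallel composition of a length-two path with the rest, so any correct completion of your argument would have to reproduce essentially this machinery; the vertex-deletion framing does not circumvent it.
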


This improves on previous work by Verbitsky \cite{Ver95} and 
Weissenberger\footnote{
As a matter of fact, the result of Weissenberger is even stronger: 
it gives an exponential upper bound
that depends only on $\val(\cG)$, but not on question or answer set size.
} \cite{Wei13} which showed exponential
parallel repetition for, respectively, trees and cycles.
We note that our proof dooes not seem to be a generalisation of 
the earlier ones, but rather a genuinely new approach.

In Section~\ref{sec:non-constructible} 
we show that
our technique of graph constructability is too weak to resolve the 
two-prover question positively: 
\begin{theorem}
\label{thm:non-constructible-simple}
There exists a bipartite graph that is not constructible by conditioning.
\end{theorem}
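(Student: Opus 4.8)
The plan is to exhibit one concrete bipartite graph $G_0$ and show it is not constructible by conditioning; the statement follows at once. Unwinding Definition~\ref{def:constructible-hypergraphs}, the family $\mathcal{C}$ of constructible question sets is the smallest family containing the base cases and closed under the two graph-theoretic operations, so if $G_0 \in \mathcal{C}$ then $G_0$ must be either a base case or the output of one of the two operations applied to strictly smaller graphs. There are two natural ways to rule this out, and I would try them in this order. The \emph{local} route: show $G_0$ is not a base case \emph{and} that $G_0$ lies in the image of neither operation for any inputs, so no derivation can end at it. The \emph{global} route: find a graph property $P$ that holds for every base case, is preserved by both operations, and fails for $G_0$, whence $\mathcal{C} \subseteq P$ while $G_0 \notin P$.

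Step one is to determine, for each of the two operations, the structural \emph{residue} it must leave in any graph it produces --- e.g.\ a disconnection, a pair of twin vertices (vertices with identical neighbourhoods), a nontrivial module, a cut vertex, or an embedded complete-bipartite gadget, according to the actual definitions. Step two is to pick $G_0$ to be a small, ``rigid'' bipartite graph that avoids all of these residues and is not a base case. A natural first candidate is the $3$-cube $Q_3$, which is the same as $K_{4,4}$ with a perfect matching removed: it is bipartite, $3$-regular, $3$-connected, twin-free, vertex- and edge-transitive, and has treewidth three, so it lies outside the reach of Theorem~\ref{thm:tw-constructible}. Should $Q_3$ turn out to be reachable, a more robust choice is the Heawood graph, the incidence graph of the Fano plane: it has girth $6$ (hence no twins) and a large, highly transitive automorphism group, making it resistant to any ``gluing'' operation. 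Step three is the case analysis: list the ways $G_0$ could arise as an output of each operation, collapse these to a few cases using the automorphisms of $G_0$, and in each case point to a residue that $G_0$ does not contain.

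The main obstacle is that the two operations are genuinely powerful: since free question sets --- complete bipartite graphs in the two-prover case --- are already constructible, the operations build bipartite graphs of arbitrarily large density, connectivity and treewidth. Consequently the obstruction that keeps $G_0$ out of $\mathcal{C}$ cannot be any of these coarse parameters but must be a fine feature of how the two operations compose, and the crux is to characterise the image of each operation (or the property $P$) tightly enough that $G_0$ provably escapes \emph{every} derivation, not merely the obvious ones. If a single small $G_0$ should still slip through some exotic derivation, the fallback is an amplification argument: take several vertex-disjoint rigid blocks and join them so that any conceivable final operation is forced to cut into one of the blocks, contradicting the structural description from step one.
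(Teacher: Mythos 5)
This is a research plan rather than a proof, and the gap is not merely one of detail: the two routes you sketch both run into obstacles that the paper has to work hard to circumvent. For the ``local'' route, the hardest predecessor to rule out is the collapse operation: $G_0$ is constructible whenever \emph{some} constructible graph $H$ contains $G_0$ as a section hypergraph and retracts onto it, so you must reason about all possible constructible supergraphs of $G_0$, not just about what a single operation ``leaves behind.'' The paper's Section~\ref{sec:all-induced} shows exactly why the induced-subgraph version of your ``residue'' idea fails: the set graphs $\fS_k$ are constructible and contain \emph{every} bipartite graph as an induced subgraph, so no residue visible at the level of induced substructure can separate $\mathcal{C}$ from its complement. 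For the ``global'' route you correctly observe that all coarse invariants (density, connectivity, treewidth) are useless, but you do not propose any fine invariant, and none is known --- the paper does not find one either. Your candidate graphs are also unverified guesses; the paper uses a specific $12$-vertex circulant $\fC_{12}$ (and notes that the closely related $\fC_8$ and $\fC_{10}$ \emph{are} constructible, which shows how delicate the choice is).

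What actually makes the argument go through in the paper, and what is missing from your plan, is a normalization of derivations: every constructible graph has a derivation consisting of doublings followed by a single final collapse (Lemma~\ref{lem:one-collapse}), and one may take such a derivation in which the graph $H_0$ before the last doubling is \emph{not} already collapsible onto the target (Lemma~\ref{lem:construction-collapsible}). The proof then classifies the vertices according to their fate in the last doubling-plus-collapse, proves structural constraints on that classification, and derives the contradiction that $H_0$ would have been collapsible onto $\fC_{12}$ after all. Even with this normalization the residual case analysis is large enough that the paper resorts to computer verification. So your proposal identifies the right difficulty but supplies neither the normalization lemma that tames the derivations, nor a candidate graph known to work, nor the case analysis itself; as it stands it does not constitute a proof.
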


\subsection{Comparison with information-theoretic bounds}
\label{sec:raz-compare}

The general two-prover bound by Raz \cite{Raz98} as improved by Holenstein
\cite{Hol09} gives
\begin{align}
\label{eq:03c}
\val(\cG^n) \le \exp\left(-\Omega\left( \epsilon^3 n / \log \left|\oA\right|
\right)\right)
\end{align} 
for a game $\cG$ with $\val(\cG) = 1-\epsilon$ and answer set
$\oA = A^{(1)} \times A^{(2)}$. There are numerous other bounds of this
form (i.e., with dependence in the exponent on some
power of $\epsilon$ and possibly $\left|\oA\right|$), in particular
for projection and free two-prover games \cite{Rao11, BRR09} and
for multi-prover games with quantum entanglement or no-signaling strategies,
e.g., \cite{BFS14, CWY15}.

On the other hand, we are interested in forbidden subgraph bounds that depend 
only on the question set $\oQ$. Those two types of bounds are not 
comparable to each other. Furthermore, it seems that the respective proof 
techniques are also quite different: while for the bounds like 
\eqref{eq:03c} information theory is usually employed,
the forbidden graph bounds use more combinatorial arguments.

We do not know if an ``information-theoretic'' 
bound holds for general games with more than two
provers (it is known only for free games, see~\cite{CWY15},
and for the so called \emph{anchored games} \cite{BVY15}).
We stress that our negative result in Theorem~\ref{thm:multi-no-exp} does
\emph{not} exclude the possibility of such a bound.

\section{Preliminaries}

\subsection{Notation}

Many of our results feature two-dimensional vectors. 
We adopt the following conventions:
\begin{itemize}
  \item Most of the time we consider two dimensions corresponding to
    $n$ independent coordinates and $r$ provers. 
    Usually $n$ is meant to be large compared to $r$.
  \item We index the $n$-dimension with $i$ in the subscript 
    and the $r$-dimension with $j$ in parentheses in the superscript.
    We denote aggregation over $i$ by underline and over
    $j$ by overline. For example:
    \begin{IEEEeqnarray*}{l}
      \overline{\underline{V}} = 
      (\underline{V}^{(1)}, \ldots, \underline{V}^{(j)}, \ldots, 
      \underline{V}^{(r)}) = (\overline{V}_1, \ldots, \overline{V}_i, \ldots, 
      \overline{V}_n),\\
      \underline{V}^{(j)} = (V^{(j)}_1, \ldots, V^{(j)}_i, \ldots, V^{(j)}_n),
      \quad
      \overline{V}_i = (V^{(1)}_i, \ldots, V^{(j)}_i, \ldots, V^{(r)}_i)
    \end{IEEEeqnarray*}
  \item We call the element collections aggregated over $i$ 
    (like $\underline{v}^{(j)}$)
    \emph{vectors} and the element collections aggregated over $j$ 
    (like $\overline{v}_i$) \emph{tuples}.
\end{itemize}

For sets $A$, $B$  we sometimes denote $A \cup B$ as $A \cupdot B$ to
emphasize that $A \cap B = \emptyset$.
For an event $\cE$ we denote its indicator function by $\mathbbm{1}_{\cE}$. 
Whenever we speak of a partition of a set, we allow empty classes in the
partition.
The powerset of $X$ is denoted by $2^X$. 
In accordance with the computer science tradition, the symbol $\log$
denotes the logarithm with base two.
For a string $\ux \in X^n$ and $y \in X$ we let
$w_y(\ux) := \left|\left\{ i \in [n]: x_i = y \right\}\right|$. 

\subsection{Definitions}
\label{sec:definitions}

In this section we provide formal definitions of the most important concepts
we use.

\begin{definition}[Multi-prover games] \label{def:game}
An \emph{$r$-prover game} $\cG = (\overline{Q}, \overline{A}, V)$ 
consists of the following elements:
\begin{itemize}
\item
$\overline{Q} \subseteq Q^{(1)} \times \ldots \times Q^{(r)}$ is a finite 
\emph{question set}. 

Note that $\overline{Q}$ does not have to consist of all possible tuples.
However, we will always assume that there are no ``impossible questions'',
i.e., that for each
element of a \emph{question alphabet} $q^{(j)} \in Q^{(j)}$ 
there exists at least one question tuple $\oq \in \oQ$ with $q^{(j)}$ as its
$j$-th element.

\item
$\overline{A} = A^{(1)} \times \ldots \times A^{(r)}$ is a finite 
\emph{answer set}.
\item $V: \overline{Q} \times \overline{A} \to \{0, 1\}$ is
a \emph{verification predicate}.
\end{itemize}

A \emph{strategy} $\overline{\cS} = (\cS^{(1)}, \ldots, \cS^{(r)})$ for a game
$\cG$ is a tuple of functions, where $\cS^{(j)}: Q^{(j)} \to A^{(j)}$.

Let $\overline{q} = (q^{(1)}, \ldots, q^{(r)})$ be a random variable
sampled uniformly from $\oQ$. For a strategy $\overline{\cS}$
let $\overline{\cS}(\overline{q}) := 
(\cS^{(1)}(q^{(1)}), \ldots, \cS^{(r)}(q^{(r)}))$.

We define the \emph{value} of a game $\cG$ as
\begin{align*}
\val(\cG) := \max_{\overline{\cS}} \Pr \left[ 
  V\left(\overline{q}, \overline{\cS}(\overline{q})\right) = 1 \right] \; .
\end{align*}

We say that a game is \emph{trivial} if its value is $1$. 

We also say it is \emph{free} if
$\overline{Q} = Q^{(1)} \times \ldots \times Q^{(r)}$.
Note that in our setting this is equivalent to the property that the
provers' questions are distributed independently.
\end{definition}

\begin{definition}[Parallel repetition]
The \emph{$n$-fold parallel repetition} $\cG^n$ of an 
$r$-prover game $\cG = (\overline{Q}, \overline{A}, V)$
is another $r$-prover game $\cG^n = 
\left(\overline{\underline{Q}}, \overline{\underline{A}},
\underline{V}\right)$ where
\begin{itemize}
\item The question alphabet for the $j$-th prover 
$\underline{Q}^{(j)} := \left(Q^{(j)}\right)^n$ is the $n$-fold product of the
original $Q^{(j)}$. Consequently, the question set $\overline{\underline{Q}}$
is the $n$-fold product of $\overline{Q}$.
\item In the same way, the answer alphabet for the $j$-th prover
$\underline{A}^{(j)}$ is the $n$-fold product of $A^{(j)}$.
\item The verification predicate $\underline{V}$ accepts if and only
if all of its $n$ single instances accept:
\begin{align*}
  \underline{V}\left(\underline{\overline{q}}, \underline{\overline{a}}\right) 
  = 1 \iff \forall i \in [n]:
  V\left( \overline{q}_i, \overline{a}_i \right) = 1 \; .
\end{align*}
\end{itemize}
\end{definition}

\begin{definition}
For an $r$-prover question set $\oQ$ we define 
$\omega_{\oQ}(n) := \max_{\cG} \val(\cG^n)$, where the maximum is over all 
non-trivial games $\cG$ with question set $\oQ$.

We say that $\oQ$ \emph{admits parallel repetition} if 
$\lim_{n \to \infty} \omega_{\oQ}(n) = 0$. We say that $\oQ$ 
\emph{admits exponential parallel repetition} if there exists $C_{\oQ} < 1$
such that for every $n \in \mathbb{N}$:
\begin{align*}
\omega_{\oQ}(n) \le (C_{\oQ})^n \; .
\end{align*}
\end{definition}

An important notion in our proofs is a homomorphism of $r$-regular,
$r$-partite hypergraphs:

\begin{definition}
\label{def:homomorphism}
Let $r \ge 2$ and $\oQ \subseteq Q^{(1)} \times \ldots \times Q^{(r)}$ be an
$r$-prover question set.
 Consider the $r$-regular, $r$-partite hypergraph 
$G = (Q^{(1)}, \ldots, Q^{(r)}, \overline{Q})$. 
We will often abuse the notation by identifying
this hypergraph with $\oQ$.

Given two hypergraphs $(Q^{(1)}, \ldots, Q^{(r)}, \oQ)$
and $(P^{(1)}, \ldots, P^{(r)}, \oP)$
we say that $f = (f^{(1)}, \ldots, f^{(r)})$,
$f^{(j)}: Q^{(j)} \to P^{(j)}$ 
 is a \emph{homomorphism} from $\oQ$ to $\oP$ if
$\oq = (q^{(1)}, \ldots, q^{(r)}) \in \oQ$ 
implies $f(\oq) := (f^{(1)}(q^{(1)}), \ldots, f^{(r)}(q^{(r)})) \in \oP$.

If $f$ is a homomorphism from $\oQ$ to $\oQ$ we will just say that
$f$ is a homomorphism of $\oQ$. We denote the set of homomorphisms
from $\oQ$ to $\oP$ by $\Hom(\oQ, \oP)$.
\end{definition}

There are some homomorphisms of an $r$-prover question set $\oQ$ that
are important to us: 

\begin{definition}
\label{def:id}
We denote the \emph{identity homomorphism} as $\Id$. We will also use the 
\emph{constant homomorphism} $\1_{\oq}$ for a hyperedge $\oq \in \oQ$,
where $\1_{\oq}$ maps every hyperedge to $\oq$.
\end{definition}

\subsection{Reduction of general parallel repetition to uniform case}
\label{sec:non-uniform}

We show how parallel repetition for a question distribution
that is not necessarily uniform over a question set $\oQ$
reduces to the uniform case. The proof is taken from \cite{FV02} and is
included here for completeness.

\begin{theorem}
Let $\oQ$ be an $r$-prover question set and let $\cQ$ be a probability
distribution with support $\oQ$ such that 
$\epsilon := \min_{\oq \in \oQ} \cQ(\oq)$ and
\begin{align*}
  \alpha := \frac{\epsilon}{1/\left|\oQ\right|} = \epsilon \left|\oQ\right| \; .
\end{align*}
Furthermore, assume that a function $f: \bbN_{>0} \to [0, 1]$ is such that
for every non-trivial game $\cH$ uniform over $\oQ$:
\begin{align*}
  \val(\cH^n) \le f(n) \; .
\end{align*}
Then, for every non-trivial game $\cG$ such that its questions are
sampled according to $\cQ$ we have
\begin{align*}
  \val(\cG^n) \le \exp\left(-\alpha^2 n / 2\right) + 
f\left(\alpha n / 2\right) \; .
\end{align*}
\end{theorem}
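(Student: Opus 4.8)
The plan is to reduce the non-uniform game $\cG$ to a uniform game $\cH$ over $\oQ$ by a conditioning / rejection-sampling argument, so that we can apply the hypothesis $\val(\cH^n) \le f(n)$. First I would set up the following sampling procedure: given $n$ independent tuples $\oq_1, \dots, \oq_n$ drawn from $\cQ$, call a coordinate $i$ \emph{good} if, conditioned on everything, it can be coupled with a uniform sample; more precisely, use the standard splitting $\cQ(\oq) = \alpha \cdot \frac{1}{|\oQ|} + (1-\alpha)\cdot \cR(\oq)$ for some probability distribution $\cR$ (this is where we use $\epsilon = \min_{\oq}\cQ(\oq)$, which guarantees $\alpha \le 1$ and that $\cR$ is a genuine distribution). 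For each coordinate independently, with probability $\alpha$ we declare it ``uniform'' and draw $\oq_i$ uniformly from $\oQ$; otherwise we draw from $\cR$. The set $T \subseteq [n]$ of uniform coordinates then has $|T| \sim \mathrm{Bin}(n,\alpha)$, and conditioned on $T$ and on the non-uniform coordinates, the coordinates in $T$ are i.i.d.\ uniform over $\oQ$.

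Next I would bound the value. Fix any prover strategy $\overline{\cS}$ for $\cG^n$. Condition on $|T| \ge \alpha n / 2$; by a Chernoff bound this event fails with probability at most $\exp(-\alpha^2 n / 2)$ (using the standard lower-tail estimate $\Pr[\mathrm{Bin}(n,\alpha) < \alpha n/2] \le \exp(-\alpha n/8)$ or $\le \exp(-\alpha^2 n/2)$ depending on the exact Chernoff form invoked; the statement as written wants $\exp(-\alpha^2 n/2)$, so I would pick the version of the bound that yields exactly that). On the complementary event, I would further condition on the realized value of $T$ and on the realized questions in the coordinates outside $T$. Under this conditioning, the provers' answers in the coordinates of $T$ are still governed by functions of all $n$ questions, but the questions on $T$ are uniform and independent over $\oQ$ — so we obtain a strategy for the $|T|$-fold repetition of the induced uniform game $\cH$ on coordinates $T$ (here $\cH$ is $\cG$ with the uniform distribution; it is non-trivial because $\cG$ is non-trivial and has the same verification predicate and support). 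Since the verifier of $\cG^n$ accepts only if \emph{all} coordinates accept, in particular all coordinates in $T$ must accept, so the conditional acceptance probability is at most $\val(\cH^{|T|}) \le f(|T|) \le f(\alpha n / 2)$, using that $f$ is (without loss of generality, or by the monotonicity one may assume) non-increasing — or, more carefully, noting $\val(\cH^m)$ is non-increasing in $m$ so $\val(\cH^{|T|}) \le \val(\cH^{\lceil \alpha n/2\rceil}) \le f(\lceil \alpha n/2 \rceil)$, and treating $f(\alpha n/2)$ as shorthand.

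Combining the two cases by the law of total probability gives
\begin{align*}
  \val(\cG^n) \le \Pr[|T| < \alpha n/2] \cdot 1 + \Pr[|T| \ge \alpha n/2] \cdot f(\alpha n/2) \le \exp(-\alpha^2 n/2) + f(\alpha n/2),
\end{align*}
which is the desired bound. The main obstacle, and the point requiring the most care, is the conditioning step: one must verify that after fixing $T$ and the non-$T$ questions, the questions on $T$ really are uniform and independent (this is immediate from the independence of coordinates and the mixture decomposition), and that restricting a $\cG^n$-strategy to the $T$-coordinates yields a legitimate $\cH^{|T|}$-strategy whose value is therefore bounded by $\val(\cH^{|T|})$ — the subtlety being that the provers' answers on $T$ may depend on questions outside $T$, but since those are fixed by the conditioning this dependence collapses into an ordinary (possibly randomized, hence by averaging deterministic) strategy on $T$. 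A secondary bookkeeping issue is matching the precise constant in the Chernoff bound to get exactly $\exp(-\alpha^2 n/2)$ and handling the non-integrality of $\alpha n/2$, both of which are routine.
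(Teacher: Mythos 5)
Your proposal is correct and follows essentially the same route as the paper: the mixture decomposition $\cQ = \alpha U_{\oQ} + (1-\alpha)\cQ'$, the coupling with the set of ``uniform'' coordinates, a Chernoff/Hoeffding bound on that set being small, and reduction to the uniform game repeated $|T|$ times (the paper phrases the conditioning step via a modified, more lenient verifier, which is the same argument). Your extra care about monotonicity of $\val(\cH^m)$ in $m$ and the integrality of $\alpha n/2$ addresses details the paper silently glosses over.
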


\begin{proof}
First, note that we can write $\cQ = \alpha U_{\oQ} + (1-\alpha) \cQ'$,
where $U_{\oQ}$ is the uniform distribution over $\oQ$ and $\cQ'$ some
other probability distribution. Consequently, we can define an i.i.d.~random
binary vector $\uB = (B_1, \ldots, B_n)$ coupled with an execution of $\cG^n$
such that $B_i = 1$ if the $i$-th question is sampled from $U_{\oQ}$ and
$B_i = 0$ if it was sampled from $\cQ'$.

Consider an execution of $\cG^n$ with a modified verifier. The new verifier 
first checks if $w_1(\uB) \ge \alpha n / 2$, i.e., if the number of coordinates
with $B_i = 1$ is at least half of the expectation $\alpha n$. She accepts
if this check fails.
If the first check succeeds, the new verifier accepts if the single-coordinate
verifier accepts on all coordinates with $B_i = 1$.

Let us call this modified game $\left(\cG^n\right)'$. 
Clearly, $\val(\cG^n) \le \val(\left(\cG^n\right)')$.
Furthermore, let $\cG^*$ be a game with the same verifier as $\cG$ but uniform 
over $\oQ$. Note that $\cG^*$ is non-trivial. Observe that conditioned on
a choice of $\uB$, the game $\left(\cG^n\right)'$ is the same as
$\cG^*$ repeated $w_1(\uB)$ times. Consequently, and using Chernoff bound,
\begin{align*}
\val(\cG^n) &\le \val((\cG^n)') = \EE \left[ \val((\cG^n)') \mid \uB \right]
\le \Pr \left[ w_1(\uB) < \alpha n / 2 \right] + f(\alpha n / 2)
\\ &\le \exp( - \alpha^2 n / 2) + f(\alpha n / 2) \; .  
\end{align*}
\end{proof}

\section{Parallel Repetition Implies Density Hales-Jewett}
\label{sec:pr-dhj}

In this section we prove Theorem~\ref{thm:dhj-game}. We present a 
self-contained proof here and then we explain how it is related to
the proof from \cite{FV02} in Section~\ref{sec:fv02}.

Let $r \ge 3$, $n \ge 1$ and $S \subseteq [r]^n$ with $\mu(S) = |S| / r^n$.
We want to define a game $\cG_S$ with question set $\oQ_r$ such that:
\begin{itemize}
  \item If $S$ does not contain a combinatorial line, then
    $\cG_S$ is non-trivial.
  \item $\val(\cG_S^n) \ge \mu(S)$.
\end{itemize}

Firstly, note that there is a natural bijection between the question
tuples in $\oQ_r$ and $[r]$. Namely, we can think of the verifier as
choosing the number of a \emph{special} prover 
$a \in [r]$ u.a.r.~and sending $1$
to the special prover and $0$ to all other provers.

The answer alphabet of the game $\cG_S$ is the same for all provers:
$A^{(j)} := 2^{[n]} \times [n]$.
Upon sampling a special prover $a$ and receiving answers 
$(T^{(1)},z^{(1)}),\ldots,(T^{(r)},z^{(r)})$, 
the verifier
checks the following conditions and accepts if all of them
are met:
\begin{itemize}
\item The sets $T^{(1)},\ldots,T^{(r)}$ form a partition of $[n]$.
\item $z^{(1)} = z^{(2)} = \ldots = z^{(r)} = z$.
\item $z \in T^{(a)}$.
\item Let $\us = (s_1, \ldots, s_n)$ be the string over $[r]^n$ such that
$s_i = j$ iff $i \in T^{(j)}$.
Then, $\us \in S$.
\end{itemize}

The next claim is not actually needed for the proof.
However, it provides some intuition for the construction of the game.
\begin{claim}
If $S$ has a combinatorial line, then the game $\cG_S$ is trivial.
\end{claim}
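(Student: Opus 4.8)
The plan is to show $\val(\cG_S) = 1$ by exhibiting a single deterministic strategy that makes the verifier accept for every choice of the special prover.

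First I would fix a combinatorial pattern $\ub = (b_1,\dots,b_n)$ with $L(\ub) \subseteq S$ and record the partition of the coordinate set that it induces: let $W := \{i \in [n] : b_i = \ordstar\}$ be its set of wildcard positions (nonempty, by definition of a pattern), and $P_j := \{i \in [n] : b_i = j\}$ for $j \in [r]$. Then $\{W, P_1, \dots, P_r\}$ is a partition of $[n]$, and for each $j \in [r]$ the point $\ub(j) \in L(\ub)$ is precisely the string taking value $j$ on $W \cup P_j$ and value $k$ on $P_k$ for every $k \neq j$.

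Next I would define the strategy. Fix any index $z^* \in W$. Let prover $j$ answer $(P_j, z^*)$ upon receiving the bit $0$, and answer $(W \cup P_j, z^*)$ upon receiving the bit $1$; since each answer depends only on the prover's own bit (and the fixed data $\ub$, $z^*$), this is a legal strategy. Now I would verify the four acceptance conditions for an arbitrary special prover $a$ (so prover $a$ sees $1$ and every other prover sees $0$): the answer sets are $T^{(a)} = W \cup P_a$ and $T^{(k)} = P_k$ for $k \neq a$, which still partition $[n]$ (we have only merged the block $W$ into $P_a$), giving the first condition; all provers output $z^*$, giving the second; $z^* \in W \subseteq T^{(a)}$, giving the third; and the string $\us$ induced by this partition, with $s_i = a$ on $W \cup P_a$ and $s_i = k$ on $P_k$, is exactly $\ub(a) \in L(\ub) \subseteq S$, giving the fourth. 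Hence the verifier always accepts and $\cG_S$ is trivial.

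The single observation driving the argument — and there is essentially no obstacle beyond it — is that a combinatorial line is the same data as a partition of $[n]$ carrying a distinguished nonempty ``wildcard'' block, and that absorbing this block into the special prover's part turns the line point $\ub(a)$ into precisely the string tested in the last acceptance condition. The only one of the four conditions that uses the hypothesis on $S$ is that last one, which holds because the whole line $L(\ub)$, and in particular $\ub(a)$, lies in $S$.
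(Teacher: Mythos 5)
Your proposal is correct and is essentially identical to the paper's proof: the paper also fixes a wildcard position $z$, defines $B(\sigma) := \{i : b_i = \sigma\}$ (your $W$ and $P_j$), and has prover $j$ answer $(B(j),z)$ on question $0$ and $(B(j)\cup B(\ordstar),z)$ on question $1$, then checks the four conditions exactly as you do. No differences worth noting beyond notation.
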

\begin{proof}
Let $\ub$ be a pattern with its combinatorial
line $L(\ub) \subseteq S$, and fix 
a position $z \in [n]$ with $b_z = \ordstar$.
For $\sigma \in [r] \cup \{\ordstar\}$, let 
$B(\sigma) := \{i :b_i  = \sigma\}$, the set of coordinates in 
which $\ub$ equals $\sigma$.
We consider the strategy in which prover $j$ responds with
\begin{align}
\cP^{(j)}(q^{(j)}) := 
\begin{cases}
\bigl(B(j),z) & \text{if $q^{(j)} = 0$,}\\
\bigl(B(j) \cup B(\ordstar),z) & \text{if $q^{(j)} = 1$.}
\end{cases}
\end{align}
Since the sets $B(1),\ldots,B(r),B(\ordstar)$ form a partition of 
$[n]$, the verifier will always accept the first condition.
The condition $z^{(1)} = \dots = z^{(r)}$ is obviously always met.
Also $z \in T^{(a)}$ is clear, since prover $a$ responds with $(B(a) \cup
B(\ordstar))$ and $z \in B(\ordstar)$.
Finally, $\us \in S$ is also clear since 
$\us$ is exactly the pattern $\ub$ with $a$ in place of stars,
i.e., $\us = \ub(a)$ and since $\ub(a) \in L(\ub) \subseteq S$.
\end{proof}
\begin{claim}\label{claim:valOneToLine}
If the game $\cG_S$ is trivial, then $S$ has a combinatorial line.
\end{claim}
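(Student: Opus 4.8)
The plan is to take a perfect strategy for $\cG_S$ and read off a combinatorial pattern $\ub$ whose line lies inside $S$. Write the two possible answers of prover $j$ as $\cP^{(j)}(0) = (U^{(j)}, y^{(j)})$ and $\cP^{(j)}(1) = (W^{(j)}, x^{(j)})$, with $U^{(j)}, W^{(j)} \subseteq [n]$ and $x^{(j)}, y^{(j)} \in [n]$. When the verifier designates $a \in [r]$ as the special prover, prover $a$ receives $1$ and answers $(W^{(a)}, x^{(a)})$, while every $j \ne a$ receives $0$ and answers $(U^{(j)}, y^{(j)})$; triviality of $\cG_S$ means all four acceptance conditions hold simultaneously for every choice of $a$.

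The first step is to pin down a single common index $z$. The equality condition for special prover $a$ forces $x^{(a)} = y^{(j)}$ for all $j \ne a$; calling this value $z_a$ and comparing two special provers $a \ne a'$ through some $j \notin \{a, a'\}$ (which exists since $r \ge 3$), we get $z_a = z_{a'}$, so there is a single $z$ with $x^{(a)} = y^{(j)} = z$ for all $a$ and $j$. The membership condition $z \in T^{(a)}$ then reads $z \in W^{(a)}$ for every $a$.

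The second step is the partition bookkeeping, which I expect to be the main (though routine) obstacle. The partition condition says that for each $a$ the family $\{W^{(a)}\} \cup \{U^{(j)} : j \ne a\}$ partitions $[n]$. Picking, for distinct $j, j'$, a third prover $a \notin \{j, j'\}$ (again using $r \ge 3$) shows $U^{(j)} \cap U^{(j')} = \emptyset$, so $U^{(1)}, \dots, U^{(r)}$ are pairwise disjoint. A short case check against the partition condition further yields: $B(\ordstar) := \bigcap_a W^{(a)}$ is disjoint from each $U^{(c)}$; every coordinate lying in no $U^{(c)}$ lies in every $W^{(a)}$, hence in $B(\ordstar)$; and $U^{(a)} \subseteq W^{(a)}$ for each $a$. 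Consequently $B(\ordstar), U^{(1)}, \ldots, U^{(r)}$ partition $[n]$, and $B(\ordstar)$ is nonempty because it contains $z$. Define $\ub$ by $b_i = \ordstar$ if $i \in B(\ordstar)$ and $b_i = c$ if $i \in U^{(c)}$; since $B(\ordstar) \ne \emptyset$ this is a genuine combinatorial pattern.

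Finally I would match, for each special prover $a$, the string $\us$ inspected by the fourth condition with $\ub(a)$. Coordinates in $B(\ordstar) \subseteq W^{(a)}$ and in $U^{(a)} \subseteq W^{(a)}$ get entry $a$ in $\us$, matching $\ub(a)$; coordinates in $U^{(c)}$ with $c \ne a$ get entry $c$, again matching. Hence $\us = \ub(a)$, and the fourth condition gives $\ub(a) \in S$ for every $a \in [r]$, i.e.\ $L(\ub) \subseteq S$, the desired combinatorial line. The two places where $r \ge 3$ is genuinely needed are precisely the pigeonhole steps forcing a common $z$ and forcing the sets $U^{(j)}$ to be pairwise disjoint; this matches the fact that Claim~\ref{claim:valOneToLine} (and the whole theorem) fails for $r = 2$.
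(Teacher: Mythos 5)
Your proof is correct and follows essentially the same route as the paper's: use $r \ge 3$ to force a common index $z$ and pairwise disjointness of the question-$0$ sets, define the pattern with wildcards on the leftover coordinates, and verify $\us = \ub(a) \in S$ for each choice of special prover. Your extra bookkeeping with $B(\ordstar) = \bigcap_a W^{(a)}$ just makes explicit what the paper handles by noting $T^{(a)}_{(1)}$ must be the complement of $\bigcup_{j\ne a} T^{(j)}_{(0)}$.
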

\begin{proof}
Let $\cP^{(1)}, \ldots, \cP^{(r)}$ be a
strategy for the provers that always wins.

For $q \in \{0,1\}$ and $j \in [r]$, 
we let $(T^{(j)}_{(q)},z^{(j)}_{(q)}) =: \cP^{(j)}(q)$ be the answer 
which prover $j$ gives on question $q$.
Since the verifier checks 
$z_{(0)}^{(j)} = z^{(a)}_{(1)}$ whenever $j \neq a$, we see that
$z_{(0)}^{(1)} = z^{(2)}_{(0)} = \ldots = z^{(r)}_{(0)} = 
z^{(1)}_{(1)} = \ldots = z^{(r)}_{(1)} =: z$ (note that we used $r \ge 3$).

Next, for any two $j \neq j'$, 
the sets $T^{(j)}_{(0)}$ and $T^{(j')}_{(0)}$ are pairwise disjoint.
Otherwise, if the verifier chooses $a$ 
which is different from both $j$ and $j'$, she will reject.

Furthermore, $z \notin T^{(1)}_{(0)} \cup \ldots \cup 
T^{(r)}_{(0)}$, since 
if $z \in T^{(j)}_{(0)}$, the verifier rejects if $a \neq j$.
Hence, the following defines a combinatorial pattern $\ub$:
\begin{align}
b_i := 
\begin{cases}
j & \text{if $i \in T_{(0)}^{(j)}$, for 
$j \in [r]$,}\\
\ordstar & \text{otherwise.}
\end{cases}
\end{align}

Fix now $a \in [r]$.
We show that $\ub(a) \in S$.
Suppose the verifier picks $a$ as the special prover.
Since the verifier checks that the sets
$T^{(\cdot)}$ form a partition, it must be
that prover $a$ responds with $T_{(1)}^{(a)} = [n] \setminus
\left(T_{(0)}^{(1)} \cup \ldots \cup T_{(0)}^{(a-1)} \cup 
T_{(0)}^{(a+1)} \cup \ldots T_{(0)}^{(r)}\right)$.
Since the verifier checks that the resulting string is in $S$ and
accepts, it must be that $\ub(a) \in S$.
This holds for every $a$, and thus $L(\ub) \subseteq S$.
\end{proof}

\begin{claim}\label{claim:valofPR}
The value of $\cG^n_S$ is at least $\mu(S)$.
\end{claim}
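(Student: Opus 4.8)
The plan is to exhibit an explicit strategy for the provers in $\cG_S^n$ whose success probability is at least $\mu(S)$. Recall that in $\cG_S^n$ each prover receives $n$ questions, one per coordinate, each question being $0$ or $1$; moreover the tuple of question-tuples is exactly one uniformly random element of $(\oQ_r)^n$, which by the bijection between $\oQ_r$ and $[r]$ amounts to picking a uniformly random string $\ua = (a_1, \ldots, a_n) \in [r]^n$, where $a_i$ names the special prover in coordinate $i$. Thus prover $j$'s input vector is the indicator vector $\uq^{(j)} \in \{0,1\}^n$ with $q^{(j)}_i = 1$ iff $a_i = j$. Observe that from his own input vector $\uq^{(j)}$ prover $j$ can read off exactly the set $\{i : a_i = j\}$.

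Here is the strategy. The provers fix in advance, for each string $\us \in [r]^n$, nothing special — instead each prover $j$, upon seeing $\uq^{(j)}$, lets $U^{(j)} := \{i : q^{(j)}_i = 1\} = \{i : a_i = j\}$. In coordinate $i$, prover $j$ must output a pair $(T^{(j)}_i, z^{(j)}_i) \in 2^{[n]} \times [n]$. The key point is that the partition the verifier wants to see in coordinate $i$ is precisely the partition of $[n]$ induced by the string $\ua$ itself: if every prover $j$ could somehow report the set $\{i' : a_{i'} = j\}$ in \emph{every} coordinate, then in coordinate $i$ the recovered string $\us$ would be $\ua$, which lies in $S$ with probability $\mu(S)$. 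Prover $j$ knows the set $U^{(j)}$ exactly, so he sets $T^{(j)}_i := U^{(j)}$ for all $i$; this requires no guessing. The only remaining coordinate-$i$ output is $z^{(j)}_i$, which must be the common value $z$ with $z \in T^{(a_i)}_i = U^{(a_i)}$, i.e.\ $z$ must be some index $i'$ with $a_{i'} = a_i$. The natural choice is $z^{(j)}_i := i$ for every $j$: then $z^{(1)}_i = \cdots = z^{(r)}_i = i$, and $i \in U^{(a_i)}$ since $a_i = a_i$. With this choice \emph{all four checks} in coordinate $i$ are satisfied exactly when $\ua \in S$: the $T^{(j)}_i$ form the partition induced by $\ua$, the $z$-values agree, $z = i \in T^{(a_i)}_i$, and the recovered string is $\ua$.

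Putting it together, under this strategy the verifier of $\cG_S^n$ accepts if and only if $\ua \in S$, where $\ua$ is the uniformly random element of $[r]^n$ describing the questions. Hence the acceptance probability is exactly $\Pr[\ua \in S] = |S|/r^n = \mu(S)$, so $\val(\cG_S^n) \ge \mu(S)$.

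I do not expect a genuine obstacle here; the content is entirely in unwinding the definitions and noticing that each prover can read his part of the partition directly off his own input. The one point requiring a little care is verifying that $z^{(j)}_i := i$ legitimately satisfies $z \in T^{(a_i)}_i$: this holds because $T^{(a_i)}_i = U^{(a_i)} = \{i' : a_{i'} = a_i\} \ni i$. One should also double-check the degenerate case where some residue class of $\ua$ is empty, but since the $T^{(j)}_i$ are allowed to be empty (the paper permits empty classes in partitions) this causes no trouble, and when $n \ge 1$ the index $i$ always witnesses membership in its own class.
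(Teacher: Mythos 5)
Your strategy is identical to the paper's: each prover $j$ reads off $T^{(j)}=\{i: q^{(j)}_i=1\}$ from his own input and answers $(T^{(j)},i)$ in coordinate $i$, so the verifier accepts precisely when the string of special provers lies in $S$, which happens with probability $\mu(S)$. The argument is correct and matches the paper's proof essentially line for line.
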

\begin{proof}
Let $T^{(j)}$ be the set of coordinates in which prover $j$ is special,
$T^{(j)} := \{i \in [n]: q_{i}^{(j)} = 1\}$.
In coordinate $i$, prover $j$ responds with $(T^{(j)}, i)$.

Let $a_1,\ldots,a_n$ be the sequence of special provers which the verifier picks.
We claim that if $(a_1,\ldots,a_n) \in S$ then the verifier accepts in 
all coordinates. Of course this happens with probability $\mu(S)$.

To see this, note first that the sets $T^{(1)},\ldots,T^{(r)}$
indeed form a partition of $[n]$ (because in each coordinate there
is exactly one special prover).
Next, $z_i^{(1)} = \ldots = z_i^{(r)} = i \in T^{(a_i)}$, 
by definition of $T^{(j)}$ and since prover $a_i$ is special in coordinate $i$. 

Finally, $\us \in S$, 
since for all $n$ coordinates $\us$ is exactly the string $(a_1,\ldots,a_n)$.
\end{proof}

\subsection{Connection to the universality proof in 
\texorpdfstring{\cite{FV02}}{[FV02]}}
\label{sec:fv02}

We explain the connection between our proof of 
Theorem~\ref{thm:dhj-game} and \cite{FV02}.
Recall our definition of a hypergraph homomorphism 
(Definitions~\ref{def:homomorphism} and~\ref{def:id}).

\begin{restatable}{definition}{qualifiedgood}
\label{def:qualified-good}
Let $\oQ$ be an $r$-prover question set and let
$\ouQ := \oQ^n$ be its $n$-fold parallel repetition.
Let $S \subseteq \ouQ$ with $\mu(S) = |S| / \left|\oQ\right|^n$
and let $\underline{f} = (f_1, \ldots, f_n)$ be a vector
of $n$ homomorphisms of $\oQ$. We say that
$\underline{f}$ is \emph{good} for $S$ if:
\begin{itemize}
  \item For every $\oq \in \oQ$ we have that 
    $\underline{f}(\oq) := (f_1(\oq), \ldots, f_n(\oq)) \in S$.
  \item There exists $i \in [n]$ such that $f_i$ is identity. 
\end{itemize}

We say that the question set $\oQ$ is $(n, \epsilon)$-good if
for every $S \subseteq \ouQ$ with $\mu(S) \ge \epsilon$ there exists
a vector of homomorphisms that is good for $S$. 
\end{restatable}

It turns out that if $\oQ$ is $(n, \epsilon)$-good, then
$\omega_{\oQ}(n) \le \epsilon$. This is the forbidden subgraph method
and it is presented in Section~\ref{sec:good}.
Verbitsky \cite{Ver95, FV02} discovered a related game construction: 
\begin{theorem}[\cite{FV02}]
\label{thm:fv-universal}
Let $\oQ$ be a connected, $r$-prover question set and 
$S \subseteq \ouQ$.
There exists an $r$-prover game $\cG_S$ with question set $\oQ$ such that:
\begin{itemize}
  \item If $\cG_S$ is trivial, then there exists a homomorphism vector 
    $\uf$ that is good for $S$.
  \item $\val(\cG_S^n) \ge \mu(S)$.
\end{itemize} 
\end{theorem}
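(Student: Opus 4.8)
The plan is to generalise the game of Section~\ref{sec:pr-dhj} from the special question set $\oQ_r$ to an arbitrary connected $\oQ$; in fact the game below specialises to the one of Section~\ref{sec:pr-dhj} when $\oQ = \oQ_r$ (identifying $\{0,1\}^n$ with $2^{[n]}$). I would take the answer alphabets to be $A^{(j)} := \left(Q^{(j)}\right)^n \times [n]$. The verifier of $\cG_S$ samples $\oq = (q^{(1)}, \ldots, q^{(r)}) \in \oQ$ uniformly, sends $q^{(j)}$ to prover $j$, and receives a pair $\bigl(\underline{p}^{(j)}, z^{(j)}\bigr)$ with $\underline{p}^{(j)} = (p^{(j)}_1, \ldots, p^{(j)}_n) \in \left(Q^{(j)}\right)^n$ and $z^{(j)} \in [n]$. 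Writing $\overline{p}_i := (p^{(1)}_i, \ldots, p^{(r)}_i)$ for the assembled tuples, she accepts iff (i) $z^{(1)} = \cdots = z^{(r)} =: z$, (ii) $p^{(j)}_z = q^{(j)}$ for all $j$, and (iii) $(\overline{p}_1, \ldots, \overline{p}_n) \in S$. The intended reading is that the provers commit to a homomorphism vector $\uf$ of $\oQ$ via $f^{(j)}_i(q^{(j)}) = p^{(j)}_i$, together with a coordinate $z$ on which $\uf$ equals $\Id$, and that the verifier spot-checks this claim on the single edge $\oq$.

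For the first bullet I would argue directly. Assume $\cG_S$ is trivial and fix a winning strategy, whose $j$-th component is a pair of functions $q^{(j)} \mapsto \bigl(\underline{p}^{(j)}(q^{(j)}), z^{(j)}(q^{(j)})\bigr)$. Define $f_i := (f^{(1)}_i, \ldots, f^{(r)}_i)$ by letting $f^{(j)}_i(q^{(j)})$ be the $i$-th coordinate of $\underline{p}^{(j)}(q^{(j)})$. Since the strategy wins on \emph{every} $\oq \in \oQ$, condition~(iii) together with $S \subseteq \ouQ = \oQ^n$ forces $f_i(\oq) = \overline{p}_i(\oq) \in \oQ$ for all $\oq$, so each $f_i$ is a homomorphism of $\oQ$, and moreover $\uf(\oq) = (\overline{p}_1(\oq), \ldots, \overline{p}_n(\oq)) \in S$. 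The remaining task is to produce an identity coordinate, and this is the step I expect to be the crux: this is exactly where connectedness of $\oQ$ is needed, to make the claimed coordinate $z$ global. Concretely, if two edges $\oq, \oq'$ share a vertex $q^{(j)} = q'^{(j)}$, then prover $j$ answers identically on both, so by~(i) the value $z$ agrees for $\oq$ and for $\oq'$; since $\oQ$ is connected --- any two edges are joined by a chain of edges, consecutive ones sharing a vertex --- this propagates to a single value $z^{\ast}$ valid for all edges. Condition~(ii) then says $p^{(j)}_{z^{\ast}}(q^{(j)}) = q^{(j)}$ for every $j$ and every edge through $q^{(j)}$; as there are no impossible questions this covers all of $Q^{(j)}$, giving $f^{(j)}_{z^{\ast}} = \Id$ for every $j$, i.e.\ $f_{z^{\ast}} = \Id$. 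Hence $\uf$ is good for $S$ in the sense of Definition~\ref{def:qualified-good}.

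For the second bullet I would exhibit an explicit strategy in $\cG_S^n$: prover $j$, receiving $\underline{q}^{(j)} = (q^{(j)}_1, \ldots, q^{(j)}_n)$, answers in the $i$-th copy with $\bigl(\underline{q}^{(j)}, i\bigr)$. Then in copy $i$ condition~(i) holds (all answers equal $i$), condition~(ii) holds (the $i$-th coordinate of $\underline{q}^{(j)}$ is the $i$-th question $q^{(j)}_i$), and condition~(iii) becomes $(\oq_1, \ldots, \oq_n) \in S$ with $\oq_l := (q^{(1)}_l, \ldots, q^{(r)}_l)$ --- crucially independent of $i$. So all $n$ copies accept exactly when the uniformly sampled $(\oq_1, \ldots, \oq_n) \in \ouQ$ lies in $S$, an event of probability $\mu(S)$, whence $\val(\cG_S^n) \ge \mu(S)$.

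Apart from the global-consistency step for $z^{\ast}$ --- the single place connectedness enters, and precisely the point where a disconnected $\oQ$ would break the argument --- everything reduces to bookkeeping. As a consistency check, instantiating $\oQ = \oQ_r$ should recover the game of Section~\ref{sec:pr-dhj} and the proofs of Claims~\ref{claim:valOneToLine} and~\ref{claim:valofPR}.
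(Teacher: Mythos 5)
Your construction and argument are correct, and they are exactly the natural generalization of the game the paper builds for $\oQ_r$ in Section~\ref{sec:pr-dhj} (the paper itself defers the general proof to \cite{FV02}, noting only that its $\oQ_r$ game is an instantiation of that construction). The connectedness step --- propagating the common coordinate $z$ along edges that share a vertex --- and the use of the ``no impossible questions'' convention to conclude $f^{(j)}_{z^\ast} = \Id$ on all of $Q^{(j)}$ are precisely the points that generalize Claim~\ref{claim:valOneToLine}, and your repeated-game strategy matches Claim~\ref{claim:valofPR}.
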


Note that Theorem~\ref{thm:fv-universal} implies that the forbidden subgraph
method is universal in the sense that it gives the best possible bounds
on $\omega_{\oQ}(n)$:
\begin{corollary}[\cite{FV02}]
Let $\oQ$ be a connected, $r$-prover question set. Then,
\begin{align*}
  \omega_{\oQ}(n) =
\inf \left\{ \epsilon: \oQ \text{ is } (n,\epsilon)\text{-good} \right\} \; .
\end{align*}
\end{corollary}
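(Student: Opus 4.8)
The statement is an equality, so the plan is to establish the two inequalities separately, each of which is essentially an immediate consequence of a result already on the table. Write $I := \inf\{\epsilon : \oQ \text{ is }(n,\epsilon)\text{-good}\}$. Before anything else I would record two trivial bookkeeping facts: (i) the set $\{\epsilon : \oQ \text{ is }(n,\epsilon)\text{-good}\}$ is nonempty --- indeed $\oQ$ is $(n,1)$-good, since for $S = \ouQ$ the all-identity vector $\uf = (\Id,\ldots,\Id)$ satisfies $\uf(\oq) = (\oq,\ldots,\oq) \in \oQ^n = S$ and trivially contains an identity coordinate; and (ii) $(n,\epsilon)$-goodness is monotone in $\epsilon$ (if $\epsilon \le \epsilon'$ then $(n,\epsilon)$-good implies $(n,\epsilon')$-good), directly from the definition, since the quantifier ranges over sets $S$ with $\mu(S) \ge \epsilon$. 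Hence the set in question is an up-set, and $I$ is well-defined in $[0,1]$.

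For the inequality $\omega_{\oQ}(n) \le I$, I would invoke the forbidden subgraph bound from Section~\ref{sec:good}: whenever $\oQ$ is $(n,\epsilon)$-good, one has $\omega_{\oQ}(n) \le \epsilon$. Since this holds for every $\epsilon$ in a nonempty set whose infimum is $I$, taking the infimum over all such $\epsilon$ yields $\omega_{\oQ}(n) \le I$.

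For the reverse inequality $\omega_{\oQ}(n) \ge I$, I would argue pointwise below $I$. If $I = 0$ there is nothing to prove, so assume $I > 0$ and fix any $\epsilon$ with $0 < \epsilon < I$. By definition of $I$ and monotonicity, $\oQ$ is not $(n,\epsilon)$-good, so there is a set $S \subseteq \ouQ$ with $\mu(S) \ge \epsilon$ for which \emph{no} homomorphism vector of $\oQ$ is good. Now apply Theorem~\ref{thm:fv-universal} (here is where the hypothesis that $\oQ$ is connected enters) to obtain the game $\cG_S$ with question set $\oQ$. By the first bullet of that theorem, if $\cG_S$ were trivial there would exist a good homomorphism vector for $S$ --- contradiction; hence $\cG_S$ is non-trivial. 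By the second bullet, $\val(\cG_S^n) \ge \mu(S) \ge \epsilon$. Since $\cG_S$ is a non-trivial game with question set $\oQ$, this forces $\omega_{\oQ}(n) \ge \epsilon$. Letting $\epsilon \uparrow I$ gives $\omega_{\oQ}(n) \ge I$, and combining the two inequalities completes the proof.

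There is no serious obstacle here: the corollary is pure assembly of the forbidden subgraph upper bound and the Feige--Verbitsky universality construction (Theorem~\ref{thm:fv-universal}), both of which do the real work elsewhere. The only points requiring a moment of care are the infimum bookkeeping above (nonemptiness and monotonicity of the good-parameter set, and the degenerate case $I=0$) and the observation that the game $\cG_S$ produced by Theorem~\ref{thm:fv-universal} is genuinely non-trivial, which is exactly what the absence of a good homomorphism vector buys us via the contrapositive of that theorem's first bullet.
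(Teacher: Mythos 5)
Your proposal is correct and follows exactly the route the paper intends: the paper states this corollary as an immediate consequence of Lemma~\ref{lem:good-implies-pr} (the forbidden subgraph upper bound) for one inequality and Theorem~\ref{thm:fv-universal} for the other, which is precisely your two-inequality assembly. The bookkeeping you add (nonemptiness via the all-identity vector, monotonicity in $\epsilon$, and the contrapositive of the first bullet of Theorem~\ref{thm:fv-universal} to get non-triviality of $\cG_S$) is exactly the right way to fill in the details the paper leaves implicit.
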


There are a couple of caveats with regards to our formulation of 
Theorem~\ref{thm:fv-universal}:
First, we state it in terms of
homomorphisms instead of forbidden subgraphs as in \cite{FV02}. 
However, (with hindsight) both statements are easily seen to be equivalent. 
Second, the proof in \cite{FV02} is only for the two-prover case. However, it
generalizes to multiple provers in a natural way.

It turns out that our construction of the game $\cG_S$ for 
Theorem~\ref{thm:dhj-game} 
is an instantiation of the construction from Theorem~\ref{thm:fv-universal} for
the question set $\oQ_r$. Consequently, Theorem~\ref{thm:fv-universal}
has a shorter proof that assumes Theorem~\ref{thm:dhj-game}.
We find it instructive to present it below:

\begin{claim}
\label{cl:id-constant}
Let $r \ge 3$. The only homomorphisms of the question set $\oQ_r$ are
identity and constants. 
\end{claim}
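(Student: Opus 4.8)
The plan is to analyze an arbitrary homomorphism $f = (f^{(1)}, \ldots, f^{(r)})$ of $\oQ_r$ directly from the combinatorial structure of the hyperedges. Recall that $\oQ_r$ has exactly $r$ hyperedges, which I will label $e_1, \ldots, e_r$, where $e_a$ is the tuple with $1$ in position $a$ and $0$ elsewhere. Each vertex set $Q^{(j)} = \{0,1\}$, so $f^{(j)}$ is one of four functions on $\{0,1\}$: the identity, the swap, the constant $0$, or the constant $1$. The homomorphism condition says that for every $a \in [r]$, applying $f$ coordinatewise to $e_a$ must again be one of the $e_b$'s; that is, $(f^{(1)}(\,[a=1]\,), \ldots, f^{(r)}(\,[a=r]\,))$ must have exactly one coordinate equal to $1$.

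First I would fix $a$ and ask which coordinate of $f(e_a)$ carries the $1$. For $j \ne a$, the input in coordinate $j$ of $e_a$ is $0$, and for $j = a$ it is $1$. So $f(e_a) = e_b$ means that exactly one index $j$ satisfies: either $j \ne a$ and $f^{(j)}(0) = 1$, or $j = a$ and $f^{(a)}(1) = 1$. I would first dispose of the case where some $f^{(j_0)}$ has $f^{(j_0)}(0) = 1$ (i.e. $f^{(j_0)}$ is the swap or the constant $1$). Then for every edge $e_a$ with $a \ne j_0$, coordinate $j_0$ of $f(e_a)$ is already $1$, so to land in $\oQ_r$ all other coordinates of $f(e_a)$ must be $0$; in particular, taking two distinct such edges and using that $r \ge 3$ guarantees at least two choices of $a \ne j_0$, I can conclude $f^{(j)}(0) = 0$ for all $j \ne j_0$ and (looking at coordinate $a$ of $f(e_a)$) that $f^{(a)}(1) = 0$ for all $a \ne j_0$. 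This already pins down every $f^{(j)}$ with $j \ne j_0$ to be the constant $0$ map, and forces $f^{(j_0)}(0) = 1$; checking the remaining edge $e_{j_0}$ then forces $f^{(j_0)}(1) = 1$ as well, so $f \equiv \1_{e_{j_0}}$, a constant homomorphism.

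In the complementary case, $f^{(j)}(0) = 0$ for every $j$. Then the $1$ in $f(e_a)$ can only come from coordinate $a$ itself, so we need $f^{(a)}(1) = 1$ for every $a$ (otherwise $f(e_a)$ is the all-zero tuple, which is not in $\oQ_r$). Combined with $f^{(j)}(0) = 0$ for all $j$, this says every $f^{(j)}$ is the identity on $\{0,1\}$, i.e. $f = \Id$. Putting the two cases together gives exactly the claimed dichotomy. I do not anticipate a genuine obstacle here; the only point requiring care is the use of $r \ge 3$ to ensure that, once one prover's map sends $0 \mapsto 1$, there are still at least two other edges available to constrain the remaining maps — for $r = 2$ the swap on both coordinates would be an extra (non-identity, non-constant) homomorphism, so the hypothesis is genuinely needed.
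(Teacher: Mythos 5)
Your proof is correct and follows essentially the same direct case analysis as the paper: the paper pivots on ``some edge $b$ is mapped to a different edge $a$'' while you pivot on ``some $f^{(j_0)}(0)=1$'', but these are the same dichotomy and the ensuing deductions (forcing all other coordinate maps to be constant $0$, then checking the remaining edge) are identical. Your write-up is in fact slightly more complete, since you explicitly verify the identity case that the paper leaves implicit.
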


\begin{proof}
As previously, we identify the edges in $\oQ_r$ with numbers $a \in [r]$.
Assume that there exists a homomorphism $f = (f^{(1)}, \ldots, f^{(r)})$
that maps edge $b$ to some $a \ne b$. We show that it must be $f = \1_a$.

First, since $b$ is mapped to $a$, we must have
$f^{(b)}(1) = 0$, $f^{(a)}(0) = 1$ and 
$f^{(j)}(0) = 0$ for every $j \notin \{a, b\}$.
$f^{(a)}(0) = 1$ implies that every edge $j \ne a$ is also
mapped to $a$ and, consequently, $f^{(j)}(0) = f^{(j)}(1) = 0$
for every $j \ne a$. But from this it follows that also edge $a$
must be mapped onto itself, hence $f = \1_a$.
\end{proof}

\begin{claim}
\label{cl:good-implies-cl}
Let $r \ge 3$ and $S \subseteq \ouQ_r \cong [r]^n$ such that 
there exists a homomorphism vector $\uf$ that is good for $S$.
Then, $S$ contains a combinatorial line. 
\end{claim}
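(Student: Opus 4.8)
The plan is to combine the previous claim (Claim~\ref{cl:id-constant}), which says that every homomorphism of $\oQ_r$ is either the identity or a constant $\1_a$ for some $a \in [r]$, with the definition of a good homomorphism vector. So let $\uf = (f_1, \ldots, f_n)$ be good for $S$. First I would classify each coordinate $i \in [n]$: either $f_i = \Id$, or $f_i = \1_{a_i}$ for a uniquely determined $a_i \in [r]$ (here I identify edges of $\oQ_r$ with elements of $[r]$, as in the proof of Claim~\ref{cl:id-constant}). Let $W := \{i \in [n] : f_i = \Id\}$ be the set of ``identity'' coordinates; by the second bullet of Definition~\ref{def:qualified-good} (goodness), $W \neq \emptyset$. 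I then define a combinatorial pattern $\ub \in ([r] \cup \{\ordstar\})^n$ by setting $b_i := \ordstar$ if $i \in W$ and $b_i := a_i$ otherwise. Since $W \neq \emptyset$, this $\ub$ contains at least one wildcard, so it is a legitimate combinatorial pattern.

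Next I would check that $L(\ub) \subseteq S$, using the first bullet of goodness. For each $q \in [r]$ (thinking of $q$ as the edge of $\oQ_r$ with a $1$ in position $q$), the goodness condition says $\uf(q) = (f_1(q), \ldots, f_n(q)) \in S$. Now I unpack coordinate by coordinate: in a coordinate $i \notin W$, $f_i(q) = \1_{a_i}(q) = a_i = b_i$, independently of $q$; and in a coordinate $i \in W$, $f_i(q) = \Id(q) = q$. Thus $\uf(q)$ is exactly the string obtained from $\ub$ by replacing every $\ordstar$ with $q$, i.e.\ $\uf(q) = \ub(q)$. Hence $\ub(q) \in S$ for every $q \in [r]$, which is precisely $L(\ub) = \{\ub(1), \ldots, \ub(r)\} \subseteq S$. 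Therefore $S$ contains the combinatorial line $L(\ub)$.

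I do not anticipate a genuine obstacle here — the claim is essentially a bookkeeping translation once Claim~\ref{cl:id-constant} is in hand. The one point that needs a little care is the bijection between $\ouQ_r$ and $[r]^n$ and between a single edge of $\oQ_r$ and an element of $[r]$: I should be explicit that applying a homomorphism $f_i$ to the edge ``$q$'' of $\oQ_r$ yields the edge ``$q$'' again when $f_i = \Id$ and the edge ``$a_i$'' when $f_i = \1_{a_i}$, so that the componentwise description of $\uf(q)$ matches the definition of $\ub(q)$. Once that identification is pinned down, the rest is immediate. (One could also note that this argument is exactly the ``reverse'' of the proof of Claim~\ref{claim:valOneToLine}: there a winning strategy produced such an $\uf$; here $\uf$ is given and produces the line directly.)
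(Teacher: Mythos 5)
Your proposal is correct and matches the paper's proof essentially verbatim: both define the pattern $\ub$ by putting $\ordstar$ at the identity coordinates and the constant's value elsewhere (well-defined by Claim~\ref{cl:id-constant}), use the second goodness condition to guarantee a wildcard, and the first to conclude $\ub(q) \in S$ for all $q$. The extra coordinate-by-coordinate unpacking you include is a fine elaboration of the same argument.
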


\begin{proof}
Define a pattern $\ub$ as
\begin{align*}
b_i := \begin{cases}
j & \text{if $f_i = \1_j$,}\\
\ordstar & \text{if $f_i = \Id$.}
\end{cases}
\end{align*}
By Claim~\ref{cl:id-constant}, $\ub$ is well-defined. From the
second point in a definition
of a good vector, there is a star on at least one coordinate.
From the first point in that definition, $\ub(j) \in S$
for every $j \in [r]$. Consequently, $L(\ub) \subseteq S$.
\end{proof}

\begin{proof}[Proof of Theorem~\ref{thm:dhj-game}]
Let $r \ge 3$ and $S \subseteq [r]^n \cong \ouQ_r$ such that $S$ does not have
a combinatorial line. By Claim~\ref{cl:good-implies-cl},
there is no homomorphism vector good for $S$ (viewed as a subset of $\ouQ_r$).
But now the game from Theorem~\ref{thm:fv-universal} does the job. 
\end{proof}

\section{Lower Bounds on Multi-Prover Parallel Repetition}
\label{sec:lower-bounds}

In this section we explore some lower bounds on parallel repetition implied by
Theorem~\ref{thm:dhj-game}. 
Our main observation is Theorem~\ref{thm:multi-no-exp}:
for more than two provers
there exist question sets that do not admit exponential parallel
repetition.

\begin{proof}[Proof of Theorem~\ref{thm:multi-no-exp}]
Let $n$ be divisible by $r$ and let $S$ contain all strings with equidistributed
alphabet elements, i.e.,
\begin{align*}
S := \{ \ux \in [r]^n: w_1(\ux) = \ldots = w_r(\ux) = n/r \} \; .
\end{align*}
It is clear that $S$ does not contain a combinatorial line. At the same
time, by Stirling's approximation, $\mu(S) \ge \Omega(1 / n^{(r-1)/2})$
(where the constant in the $\Omega()$ notation depends on $r$) and therefore
$\omega_r^{\DHJ}(n)$ cannot decrease exponentially.

By Theorem \ref{thm:pr-dhj}, $\omega_{\oQ_r}(n)$ cannot decrease exponentially
either.
\end{proof}

Better lower bounds for $\omega_r^{\DHJ}(n)$ are known, with the best ones
established by the Polymath project \cite{Pol10}.

\begin{theorem}[\cite{Pol10}, Theorem 1.3]
\label{thm:dhj-lower-bound}
Let $\ell \ge 1$ and $r := 2^{\ell-1}+1$. There exists
$C_\ell > 0$ such that for every $n \ge 2$ there exists
a set $S \subseteq [r]^n$ with
\begin{align*}
\mu(S) \ge \exp \left(
- C_{\ell} \left(\log n\right)^{1/\ell} \right) \; 
\end{align*}
such that $S$ does not contain a combinatorial line.
\end{theorem}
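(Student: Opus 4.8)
The plan is to reduce the construction of a line-free subset of $[r]^n$ to the classical problem of finding a dense subset of an interval containing no $r$-term arithmetic progression. The bridge is the coordinate-sum map $\phi\colon [r]^n \to \{n, n+1, \dots, rn\}$, $\phi(\ux) := \sum_{i=1}^n x_i$. The point is that $\phi$ turns combinatorial lines into arithmetic progressions: if $\ub$ is a pattern with $d \ge 1$ wildcards whose non-wildcard coordinates sum to $c$, then $\phi(\ub(j)) = c + jd$ for every $j \in [r]$, so $\phi(L(\ub)) = \{c+d, c+2d, \dots, c+rd\}$ is an $r$-term progression of common difference $d \ge 1$, and its $r$ terms are distinct precisely because $d \ge 1$. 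Hence, if $B \subseteq \mathbb{Z}$ contains no $r$-term arithmetic progression, then $S := \phi^{-1}(B)$ contains no combinatorial line. The whole task is thus to choose such a $B$ so that $\phi^{-1}(B)$ is as dense as possible.

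For the density, I would first record that $\phi(\ux)$ is a sum of $n$ i.i.d.\ variables uniform on $[r]$, so by the local central limit theorem there is, for all large $n$, a window $I_n$ of $\Theta(\sqrt n)$ consecutive integers centred near $rn/2$ on which $\Pr[\phi(\ux) = v] = \Omega(1/\sqrt n)$ (with constants depending only on $r$). Therefore $\mu\bigl(\phi^{-1}(B)\bigr) = \sum_{v \in B}\Pr[\phi(\ux) = v] \ge \Omega\bigl(|B \cap I_n|/\sqrt n\bigr)$ for any $B$. So it suffices to exhibit, inside an interval of length $M := |I_n| = \Theta(\sqrt n)$, a set $B$ with no $r$-term progression and $|B| \ge M\exp\bigl(-c_\ell(\log M)^{1/\ell}\bigr)$: then $\mu(S) \ge \Omega\bigl(\exp(-c_\ell(\log M)^{1/\ell})\bigr)$, and since $M \le n$ this is at least $\exp\bigl(-C_\ell(\log n)^{1/\ell}\bigr)$ for a suitable $C_\ell$, the remaining polynomial and constant factors being harmless.

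The set $B$ is exactly the classical Behrend--Rankin construction of dense progression-free sets. For $r = 3$ (so $\ell = 2$) it is Behrend's sphere construction, of density $\exp(-\Theta(\sqrt{\log M}))$; for general $r = 2^{\ell-1}+1$ one uses Rankin's generalisation, placing lattice points on a suitable high-degree level surface inside a box $[M_0]^t$ (no $r$ of which can form a progression), Freiman-embedding this box into $[M]$, and optimising the auxiliary dimension $t$ against the box side --- the exponent $1/\lceil\log_2 r\rceil = 1/\ell$ being created precisely by this optimisation. The degenerate case $\ell = 1$ (i.e.\ $r = 2$) is not covered by $\phi$, since a $2$-progression-free set is a single point; there one instead takes $S$ to be a maximum antichain of $[2]^n$, which is line-free and, by Sperner's theorem, has $\mu(S) = \Theta(1/\sqrt n) \ge \exp(-C_1\log n)$. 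Finally, the finitely many small $n$ left uncovered by the local-CLT estimate are absorbed by enlarging $C_\ell$: a single point is trivially line-free and has measure $r^{-n}$, which exceeds $\exp(-C_\ell(\log n)^{1/\ell})$ once $C_\ell$ is large enough relative to that finite range.

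The reduction itself is short and loses only the harmless $\sqrt n$ factor, so the real obstacle --- if one wants a self-contained argument rather than citing Behrend and Rankin --- is reproducing the progression-free construction with the correct exponent $1/\ell$, and in particular the balancing of the auxiliary dimension in Rankin's sphere argument, which is where the dependence $1/\lceil \log_2 r\rceil$ genuinely comes from.
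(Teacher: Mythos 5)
The paper does not prove this statement; it is quoted verbatim from [Pol10], so there is no internal proof to compare against. Your outline is a correct reconstruction of the standard construction, and it is in essence the same one used in [Pol10]: the coordinate-sum map $\phi(\ux)=\sum_i x_i$ sends the line of a pattern with $d\ge 1$ wildcards to the $r$-term progression $c+d,c+2d,\dots,c+rd$ with common difference $d\ge 1$, so pulling back an $r$-AP-free set yields a line-free set; the local limit theorem (or a direct Stirling computation on the slices) shows the pullback loses only a $\Theta(\sqrt n)$ factor; and Behrend--Rankin supplies a subset of an interval of length $M=\Theta(\sqrt n)$ of density $\exp(-c_\ell(\log M)^{1/\ell})$ free of $r$-term progressions, with $\lceil\log_2 r\rceil=\ell$ exactly when $r=2^{\ell-1}+1$. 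Your handling of the degenerate case $\ell=1$ via Sperner and of small $n$ by enlarging $C_\ell$ is also fine. Two small remarks: the window of $\Theta(\sqrt n)$ values carrying mass $\Omega(1/\sqrt n)$ each is centred at the mean $n(r+1)/2$ of $\phi$, not at $rn/2$ (which is $n/2$ away and would carry exponentially small mass if taken literally); and, as you acknowledge, the genuinely hard input --- Rankin's progression-free construction with exponent $1/\ell$ --- is cited rather than proved, which is consistent with the paper's own treatment of the whole theorem as a citation.
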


That is, for $r = 2^{\ell-1} + 1$, we have
\begin{align} \label{eq:02c}
  \omega_{\oQ_r}(n) \ge \exp \left(
    - C_r \left( \log n \right)^{1 / \lceil \log r \rceil}
  \right) \; .
\end{align}

Inequality \eqref{eq:02c} is also interesting in the context of
the two-prover parallel repetition lower bound by Feige and Verbitsky
\cite{FV02}. Recall that the upper bound of  Raz (cf.~\eqref{eq:03c})
exhibits a dependence on the answer set size. More specifically,
it contains $1 / \log \left|\oA\right|$ term in the exponent. 
The example from \cite{FV02}
shows that if an exponential two-prover parallel repetition bound depends only 
on $\epsilon$ and $\left|\oA\right|$, this term cannot be larger than 
$\log \log \left|\oA\right| / \log \left|\oA\right|$.

Our example implies that we can bring down this last term to 
$\left(\log \log \left|\oA\right|\right)^{\epsilon}
/ \log \left|\oA\right|$ for any $\epsilon > 0$,
at the price of increasing the number of provers:

\begin{theorem}
Let $\ell \ge 2$, $r := 2^{\ell-1}+1$. There exists a constant $C_\ell > 0$
such that for each $n \ge 2$ there exists
an $r$-prover game $\cG$ with question set $\oQ_r$,
$\val(\cG) \le 1-1/r$ and an answer set $\oA$
with size $\left|\oA\right| \in [2^{rn}, 2^{2rn}]$
such that
\begin{align}
\label{eq:08a}
  \val(\cG^n) \ge \exp\left(-C_{\ell}n \cdot 
  \frac{\left( \log \log \left|\oA\right|  \right)^{1/\ell}}{\log \left|\oA\right|}\right) \; . 
\end{align}
\end{theorem}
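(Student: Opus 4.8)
The plan is to instantiate the combinatorial-line-free set from Theorem~\ref{thm:dhj-lower-bound} into the game construction of Theorem~\ref{thm:dhj-game} and then translate the resulting bound on $\val(\cG^n)$ into the claimed bound expressed in terms of the answer-set size $\left|\oA\right|$. Concretely, fix $\ell \ge 2$ and $r := 2^{\ell-1}+1$, let $C_\ell$ and the set $S \subseteq [r]^n$ be as given by Theorem~\ref{thm:dhj-lower-bound} so that $\mu(S) \ge \exp(-C_\ell(\log n)^{1/\ell})$ and $S$ has no combinatorial line, and let $\cG := \cG_S$ be the game from Theorem~\ref{thm:dhj-game}. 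Then $\val(\cG) \le 1 - 1/r$ and $\val(\cG^n) \ge \mu(S) \ge \exp(-C_\ell(\log n)^{1/\ell})$, while the answer alphabets are $A^{(j)} = 2^{[n]} \times [n]$, so $\left|A^{(j)}\right| = n\cdot 2^n$ and $\left|\oA\right| = (n\cdot 2^n)^r$. This immediately gives $2^{rn} \le \left|\oA\right| \le 2^{2rn}$ for $n$ large enough (and one checks the small cases separately, or absorbs them into the constant), which is the size bound claimed in the statement.

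Next I would convert the exponent $-C_\ell(\log n)^{1/\ell}$ into the form $-C'_\ell n\cdot (\log\log\left|\oA\right|)^{1/\ell}/\log\left|\oA\right|$. From the size bounds we have $\log\left|\oA\right| = \Theta(rn)$, hence $\log\left|\oA\right|/n = \Theta(r)$, a constant depending only on $\ell$; and $\log\log\left|\oA\right| = \log(\Theta(rn)) = \Theta(\log n)$ (again for $n$ not too small, with the implied constants depending on $r$ and hence on $\ell$). Therefore
\begin{align*}
  n \cdot \frac{(\log\log\left|\oA\right|)^{1/\ell}}{\log\left|\oA\right|}
  = \frac{n}{\log\left|\oA\right|}\cdot (\log\log\left|\oA\right|)^{1/\ell}
  = \Theta\!\left(\frac{1}{r}\right)\cdot \Theta\!\left((\log n)^{1/\ell}\right)
  = \Theta\!\left((\log n)^{1/\ell}\right),
\end{align*}
so there is a constant $C'_\ell > 0$ with $C_\ell(\log n)^{1/\ell} \le C'_\ell\, n\,(\log\log\left|\oA\right|)^{1/\ell}/\log\left|\oA\right|$ for all $n \ge 2$. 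Combining with the lower bound on $\val(\cG^n)$ yields $\val(\cG^n) \ge \exp\big(-C'_\ell\, n\,(\log\log\left|\oA\right|)^{1/\ell}/\log\left|\oA\right|\big)$, which is \eqref{eq:08a} after renaming $C'_\ell$ back to $C_\ell$.

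I do not expect any deep obstacle here; the content of the theorem is entirely in Theorems~\ref{thm:dhj-game} and~\ref{thm:dhj-lower-bound}, and what remains is bookkeeping. The one point that needs a little care is the handling of small $n$: the asymptotic identities $\log\log\left|\oA\right| = \Theta(\log n)$ and the size window $\left|\oA\right| \in [2^{rn}, 2^{2rn}]$ require $n$ to be at least some $\ell$-dependent threshold (e.g. one needs $n\cdot 2^n \le 2^{2n}$, i.e. $n \le 2^n$, which holds for all $n \ge 1$, but the two-sided estimate on $\log\log\left|\oA\right|$ and the constants in the $\Theta$'s are cleanest for $n$ bounded away from $1$). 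For the finitely many remaining values of $n \ge 2$ below the threshold, the inequality \eqref{eq:08a} holds trivially after shrinking $C_\ell$, since the right-hand side is a fixed positive quantity bounded away from $1$ while $\val(\cG^n) \ge \mu(S) > 0$; alternatively, note $\val(\cG^n)\ge \mu(S) \ge r^{-n}$ always, which already beats the target exponent for an appropriate constant. Folding all of this into a single constant $C_\ell$ completes the argument.
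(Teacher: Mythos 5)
Your proposal is correct and follows essentially the same route as the paper: instantiate Theorem~\ref{thm:dhj-game} on the line-free set of Theorem~\ref{thm:dhj-lower-bound}, compute $\left|\oA\right| = (2^n n)^r \in [2^{rn}, 2^{2rn}]$, and convert the exponent using $n \ge \log\left|\oA\right|/2r$ and $\log n \le \log\log\left|\oA\right|$. The worry about small $n$ is unnecessary, since these two one-sided inequalities hold for every $n \ge 2$ and give the bound directly with $C'_\ell = 2rC_\ell$, exactly as in the paper's proof.
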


\begin{proof}
Fix $\ell$ and $n$ and take the $r$-prover game $\cG_S$ from 
Theorem \ref{thm:dhj-game} for the set $S \subseteq [r]^n$ from 
Theorem \ref{thm:dhj-lower-bound}.
One verifies that $\cG_S$ has question set $\oQ_r$ and that
the answer alphabet
size is $\left|\oA\right| = (2^n \cdot n)^r \in [2^{rn}, 2^{2rn}]$.

Since $S$ has no combinatorial line, we have $\val(\cG_S) \le 1-1/r$
and 
$\val(\cG_S^n) \ge \mu(S) \ge 
\exp\left(-C_\ell \left(\log n\right)^{1/\ell}\right)$.

Noting that $n \ge \log\left|\oA\right| / 2r$ and
$\log n \le \log \log \left|\oA\right|$,
we can establish (\ref{eq:08a}):
\begin{align*}
\val(\cG_S^n) \
& \ge
\exp\left(-C_\ell \left(\log n\right)^{1/\ell}\right)
=
\exp \left(-C_{\ell}n \cdot \frac{\left(\log n\right)^{1/\ell}}{n} \right)
\\ & \ge
\exp \left(-C'_{\ell}n \cdot 
  \frac{\left(\log \log \left|\oA\right|\right)^{1/\ell}}
       {\log \left|\oA\right|} \right) \; .
\end{align*}
\end{proof}

As a final note, we reiterate that our lower bounds do \emph{not} exclude
the possibility of an ``information theoretic'' 
(see Section~\ref{sec:raz-compare})
parallel repetition bound. Furthermore,
all results of this section concern games with at least three provers.

\section{The Hales-Jewett Theorem and Coloring Games}
\label{sec:coloring}

As the name suggests, the density Hales-Jewett theorems is the density
version of the earlier Hales-Jewett theorem \cite{HJ63}. Inspired by
Theorem~\ref{thm:pr-dhj}, in this section we define \emph{coloring games}
and prove equivalence of their parallel repetition and the Hales-Jewett theorem.
We are not aware of any previous works concerning coloring games.  

\subsection{The Hales-Jewett theorem}
\begin{definition}
Let $r, c, n \in \bbN_{>0}$ and $C: [r]^n \to [c]$ be a coloring of $[r]^n$
with $c$ colors. We say that there is a \emph{monochromatic line}
in $C$ if there exists a combinatorial pattern $\ub$ such that
$C(\ub(1)) = \ldots = C(\ub(r))$.

For $r, n \in \bbN_{>0}$, let
\begin{align*}
\omega^{\HJ}_r(n) := \min \left\{
  c: \exists C: [r]^n \to [c] \text{ with no
  monochromatic lines}
\right\} \; .
\end{align*}
\end{definition}

\begin{theorem}[Hales-Jewett theorem]\label{thm:halesjewett}
For every $r \ge 2$:
\begin{align*}
\lim_{n \to \infty} \omega^{\HJ}_r(n) = +\infty \; .
\end{align*}
\end{theorem}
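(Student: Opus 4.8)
The plan is to derive the coloring Hales-Jewett theorem from the density version, Theorem~\ref{thm:dhj}, via an averaging (pigeonhole) argument. Fix $r \ge 2$ and suppose for contradiction that $\lim_{n\to\infty}\omega^{\HJ}_r(n) \ne +\infty$. Since $\omega^{\HJ}_r(n)$ is a sequence of positive integers, failure of divergence means there is a finite bound $c$ and infinitely many $n$ with $\omega^{\HJ}_r(n) \le c$; in fact, because a coloring of $[r]^n$ with no monochromatic line restricts to such a coloring on any sub-cube $[r]^{n'}$ with $n' \le n$ (embed $[r]^{n'}$ into $[r]^n$ by fixing the last $n-n'$ coordinates to, say, the all-$1$s string — a combinatorial line in the sub-cube extends to one in the big cube), the quantity $\omega^{\HJ}_r(n)$ is non-decreasing in $n$, so $\omega^{\HJ}_r(n) \le c$ for \emph{all} $n$. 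Thus for every $n$ there is a coloring $C_n : [r]^n \to [c]$ with no monochromatic combinatorial line.

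Next I would apply pigeonhole to a fixed such coloring $C_n$: one of the $c$ color classes $C_n^{-1}(k)$ has measure at least $1/c$. Call it $S_n \subseteq [r]^n$, so $\mu(S_n) \ge 1/c$. Since $C_n$ has no monochromatic line and $S_n$ is a single color class, $S_n$ contains no combinatorial line at all. Hence $\omega^{\DHJ}_r(n) \ge \mu(S_n) \ge 1/c$ for every $n$, which contradicts Theorem~\ref{thm:dhj}, namely $\lim_{n\to\infty}\omega^{\DHJ}_r(n) = 0$. This contradiction establishes $\lim_{n\to\infty}\omega^{\HJ}_r(n) = +\infty$.

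I do not expect any real obstacle here, since the implication "density version $\Rightarrow$ coloring version" is the standard soft direction for Hales-Jewett-type statements and Theorem~\ref{thm:dhj} is quoted as given. The only points needing a line of care are (i) the monotonicity of $\omega^{\HJ}_r(n)$ in $n$, used to turn "$\omega^{\HJ}_r(n) \le c$ infinitely often" into "for all $n$" — alternatively one can avoid monotonicity entirely by noting that if $\omega^{\HJ}_r$ does not tend to infinity then some value $c$ is attained infinitely often, pick such an $n$ for each needed scale, and run the pigeonhole argument along that subsequence, still contradicting $\omega^{\DHJ}_r(n)\to 0$; and (ii) making sure the definition of "monochromatic line" (all $r$ points of the line share a color) matches "a color class contains a combinatorial line", which is immediate. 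One could equally phrase the whole thing as a direct (contrapositive-free) argument: given $c$, apply Theorem~\ref{thm:dhj} to get $N$ with $\omega^{\DHJ}_r(N) < 1/c$, and conclude that any $c$-coloring of $[r]^N$ has a color class of measure $\ge 1/c > \omega^{\DHJ}_r(N)$, hence containing a line, so $\omega^{\HJ}_r(N) > c$; since $c$ was arbitrary, $\omega^{\HJ}_r(n) \to \infty$.
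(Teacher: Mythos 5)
Your proposal is correct. The paper does not actually prove Theorem~\ref{thm:halesjewett} --- it is quoted as a classical result of \cite{HJ63}, and the remark immediately following it notes precisely the derivation you give, namely that the coloring statement ``follows easily from the density version'': your pigeonhole argument (a color class of measure at least $1/c$ must contain a combinatorial line once $\omega^{\DHJ}_r(n) < 1/c$, and such a line is monochromatic) together with the monotonicity of $\omega^{\HJ}_r(n)$ in $n$ is the standard and intended argument, and both steps are carried out correctly.
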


\begin{remark}
Even though the Hales-Jewett theorem follows easily from the density version,
better, primitive recursive bounds for $\omega^{\HJ}_r(n)$ are known
\cite{She88} compared to $\omega^{\DHJ}_r(n)$.
\end{remark}

\subsection{Coloring games}
\begin{definition}
An $r$-prover coloring game $\cG = (\oQ,\oA, V)$ is given by a 
question set $\oQ \subseteq Q^{(1)} \times \ldots \times Q^{(r)}$,
an answer set $\oA = A^{(1)} \times \ldots \times A^{(r)}$ and a function
$V: \oQ \times \oA \to \cX$ for some set $\cX$.

The \emph{color-value} of a game is 
\begin{align*}
\cVal(\cG) := \min_{\ocS = (\cS^{(1)},\ldots,\cS^{(r)})} 
\left|
V\left(\oQ, \ocS(\oQ)\right)
\right| \; .
\end{align*}
In other words, the provers are supposed to minimize the number of colors
that the verifier can output instead of maximizing the probability of 
acceptance.

Given a coloring game $\cG$, we define the parallel repetition similar
to before. 
The only change is that the function $V^n$ outputs a vector of $n$ values
which is obtained by applying $V$ to every coordinate individually.
\end{definition}

\begin{definition}
For an $r$-prover question set $\oQ$ let
\begin{align*}
\omega^{\CPR}_{\oQ}(n) := \min_{\cG} \cVal(\cG^n) \; ,
\end{align*}
where the minimum is over all coloring games $\cG$ with question set $\oQ$
and $\cVal(\cG) \ge 2$.
\end{definition}

\begin{theorem}\label{thm:coloringPR}
For every $r$-prover question set $\oQ$:
\begin{align*}
\lim_{n \to \infty} \omega^{\CPR}_{\oQ}(n) = +\infty \; .
\end{align*}
\end{theorem}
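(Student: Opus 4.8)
The plan is to reduce Theorem~\ref{thm:coloringPR} to the (coloring) Hales-Jewett theorem, Theorem~\ref{thm:halesjewett}, in a way that closely mirrors the reduction of ordinary parallel repetition to the density version carried out in Section~\ref{sec:pr-dhj}. Concretely, I would prove the two-sided comparison $\omega^{\CPR}_{\oQ}(n) \ge h\big(\omega^{\HJ}_r(n)\big)$ for a suitable unbounded function $h$ (or directly: if $\omega^{\HJ}_r(m) \to \infty$ then $\omega^{\CPR}_{\oQ}(n) \to \infty$), where $r = |\oQ|$ — actually it will be cleaner to phrase the bound in terms of the number of hyperedges $|\oQ|$, since that is what plays the role of the alphabet. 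The key point is that a coloring game with small color-value in $n$ parallel rounds yields, via the prover strategies, a coloring of $[|\oQ|]^n$ (or of $\oQ^n$, which we identify with $[|\oQ|]^n$ after fixing a bijection) with few colors and no monochromatic combinatorial line, contradicting Hales-Jewett once $n$ is large.

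The main steps, in order, are as follows. First, fix the question set $\oQ$ and identify $\oQ$ with $[\rho]$ where $\rho := |\oQ|$, and hence $\oQ^n$ with $[\rho]^n$. Second, given any coloring game $\cG = (\oQ, \oA, V)$ with $\cVal(\cG) \ge 2$, I would show that it has a ``line-detecting'' property analogous to Claim~\ref{claim:valOneToLine}: I will need an analogue of the $\cG_S$ construction, so really the cleaner route is to use the \emph{universality} machinery — build a coloring game $\cG_C$ from a coloring $C \colon [\rho]^n \to [c]$ with no monochromatic line, such that $\cVal(\cG_C) \ge 2$ and $\cVal(\cG_C^n) \le c$, using exactly the partition-and-pointer answer format $A^{(j)} = 2^{[n]} \times [n]$ from Section~\ref{sec:pr-dhj} but with the verifier \emph{outputting} the color $C(\us)$ instead of accepting/rejecting on $\us \in S$; and dually, show any $\cG$ with $\cVal(\cG^n) = c$ small forces a $c$-coloring of $[\rho]^n$ with no monochromatic line. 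Third, combine: if $\omega^{\CPR}_{\oQ}(n)$ did not tend to infinity, there would be $c$ and arbitrarily large $n$ with a coloring game $\cG$, $\cVal(\cG) \ge 2$, $\cVal(\cG^n) \le c$; extracting the prover strategy that achieves the minimum gives functions $\cP^{(j)} \colon Q^{(j)} \to A^{(j)}$ whose behaviour on coordinate patterns defines, exactly as in the proof of Claim~\ref{claim:valOneToLine}, a combinatorial pattern $\ub$ on which the output color is constant — contradicting that the coloring has no monochromatic line for $n \ge n_0(r,c)$ from Theorem~\ref{thm:halesjewett}. One subtlety: I must handle connectedness of $\oQ$ (the Feige-Verbitsky construction assumes it), but a disconnected question set only makes things easier — restrict to one connected component, or note that a product-of-components coloring game still forces a monochromatic line in some factor.

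I expect the main obstacle to be making the ``line-detecting'' step fully rigorous for an \emph{arbitrary} connected question set $\oQ$ rather than the special $\oQ_r$ of Section~\ref{sec:pr-dhj}: in the special case one reads off the pattern $\ub$ directly from the answers of the single special prover, but for general $\oQ$ one must invoke the homomorphism formalism of Definition~\ref{def:qualified-good} — the provers' coordinatewise strategies induce, for each coordinate $i$, a homomorphism $f_i$ of $\oQ$, and one argues that the vector $\uf = (f_1, \dots, f_n)$ is forced to be ``good'' in a colored sense whenever the color-value is small, so that the induced coloring $C(\uf(\oq))$ is constant on a combinatorial-line-like structure in $\Hom(\oQ,\oQ)^n$. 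Pushing this through the colored analogue of Theorem~\ref{thm:fv-universal} is where the real work sits; the rest is bookkeeping of quantifiers ($n$ large enough as a function of $r$ and the target number of colors) and an application of Theorem~\ref{thm:halesjewett}. If the universality route proves too heavy, the fallback is to prove the theorem only via the explicit game $\cG_C$ construction (which suffices for the stated limit, since $\omega^{\CPR}_{\oQ}(n) \le \omega^{\HJ}_r(n)$ from that direction, and the reverse — $\omega^{\CPR}_{\oQ}(n) \to \infty$ — is what we actually want and follows from the line-detection argument applied to whichever minimizing game achieves $\omega^{\CPR}_{\oQ}(n)$).
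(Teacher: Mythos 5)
Your core idea --- fix a strategy for $\cG^n$ achieving few colors, read off from it a coloring $C\colon \oQ^n \to [c]$ with $c < \omega^{\HJ}_{|\oQ|}(n)$, apply Hales--Jewett over the alphabet $\oQ$ to get a monochromatic line, and convert that line into a one-color single-round strategy contradicting $\cVal(\cG)\ge 2$ --- is exactly the paper's proof, and it already suffices. The obstacle you anticipate is not there: no homomorphism formalism, no colored analogue of Theorem~\ref{thm:fv-universal}, and no connectedness hypothesis are needed, because the Hales--Jewett alphabet here is the set of hyperedges $\oQ$ itself rather than the provers' question alphabets. Given a monochromatic pattern $\ub \in (\oQ\cup\{\ordstar\})^n$, prover $j$ on input $q^{(j)}$ simply forms the question vector whose $i$-th entry is the $j$-th component of the hyperedge $b_i$ when $b_i\ne\ordstar$ and is $q^{(j)}$ when $b_i=\ordstar$, runs the repeated strategy on it, and answers with the output at the first star coordinate; for every $\oq\in\oQ$ the repeated verifier then sees exactly $\ub(\oq)$, whose color vector is constant along the line, so the single-round color is constant. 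This works verbatim for an arbitrary (even disconnected) question set.

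Separately, you conflate the two directions of the equivalence: the $\cG_C$ / $\cG_S$-style construction with answers in $2^{[n]}\times[n]$ and Claim~\ref{claim:valOneToLine} belong to the converse implication (coloring parallel repetition implies Hales--Jewett, via the special question set $\oQ_r$), and the inequality it yields, $\omega^{\HJ}_r(n)\ge\omega^{\CPR}_{\oQ_r}(n)$, bounds $\omega^{\CPR}$ from \emph{above} --- the wrong direction for proving divergence. Your ``fallback'' is therefore the actual proof, not a fallback, and the ``real work'' you defer to a colored universality theorem is the short argument above.
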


\subsection{Equivalence of the theorems}
\begin{proof}[Proof (Theorem~\ref{thm:halesjewett} implies 
Theorem~\ref{thm:coloringPR})]
In particular, we will show that
\begin{align}
\omega_{\oQ}^{\CPR}(n) \ge \omega_{r}^{\HJ}(n) \; ,
\end{align}
where $r = \left|\oQ\right|$.

Consider a coloring game $\cG$ with the question set $\oQ$
and suppose for a contradiction
that there is a strategy for $\cG^n$ 
where the verifier uses $c < \omega_r^{HJ}(n)$ colors.
Fix such a strategy and identify the colors used
in it with $[c]$.
Consider now the map $C: \oQ^n \to [c]$ which
tells us what color the verifier will output in the repeated
game for this strategy.

Theorem~\ref{thm:halesjewett}
implies that there is a pattern~$\ub$ such that $C(\ub(1)) = \ldots 
= C(\ub(r))$. 
This, however, implies that~$\cG$ has coloring value~$1$.
To see this, consider the following prover strategy:
on input $q^{(j)}$, prover $j$ applies the repeated strategy
with all $\star$ symbols in the pattern replaced with $q^{(j)}$,
and for each other position she computes the question on the pattern input.
Then, she responds with the response in the first star coordinate.
\end{proof}

\begin{proof}[Proof (Theorem~\ref{thm:coloringPR} implies
Theorem~\ref{thm:halesjewett})]
In case of doubts
the reader is advised to read the proof of Theorem~\ref{thm:dhj-game} first.
Recall the question set $\oQ_r$ from Definition~\ref{def:qr}.
In fact, we will prove that
\begin{align}
\omega^{\HJ}_r(n) \ge \omega^{\CPR}_{\oQ_r}(n) \; .
\end{align}

Let $c < \omega_{\oQ_r}^{\CPR}(n)$ and fix an arbitrary function 
$C: [r]^n \to [c]$. We want to show that $C$ has a monochromatic line.
To this end,
consider the following $r$-prover coloring game $\cG_C$ 
over $\oQ_r$: 

The answer alphabet is
$A^{(j)} := 2^{[n]} \times [n]$.
After receiving answers $(T^{(j)}, z^{(j)})$, the verifier does checks similar 
as in the proof of 
Theorem~\ref{thm:dhj-game}: The sets $T^{(j)}$ should partition $[n]$
inducing a string $\us \in [r]^n$ and there should be a special coordinate
$z = z^{(1)} = \ldots = z^{(r)}$ such that $z \in T^{(a)}$, where $a$ is
the special prover (i.e., the one that received 1).
 
If all the checks are passed, our verifier applies $C$ on $\us$
and outputs the resulting element of $[c]$.
Otherwise, the verifier outputs his question tuple $\oq$ 
(which we assume to be not in $[c]$).

There is a strategy for $\cG_C^n$ with coloring value $c$:
letting 
$T^{(j)}$ be the set of coordinates in which prover $j$ is special,
prover $j$ responds on coordinate $i$ with $(T^{(j)}, i)$.
Because $c < \omega_{\oQ_r}^{\CPR}(n)$, we have
$\cVal(\cG_C) = 1$. 

Consider a strategy for $\cG_C$ that uses only one color. Since there
are at least two questions, the verifier can never output $\oq$.
Consequently, $r \ge 3$ implies that there is a special coordinate
$z$ that the provers always output. Furthermore, the
sets $T^{(j)}(0)$ are pairwise disjoint and the special prover $a$
always outputs $T^{(a)}(0) \cup T(1)$, where
$T(1) := [n] \setminus \left( T^{(1)}(0) \cup \ldots \cup T^{(r)}(0) \right)$
with $z \in T(1)$. 

The coordinate sets $T^{(j)}(0)$ and $T(1)$ define a monochromatic
combinatorial line.
\end{proof}

\section{Constructability Implies Parallel Repetition}
\label{sec:constructability-pr}

In this section we first define a class of constructible hypergraphs
and then establish that all constructible question sets admit exponential
parallel repetition. The main result of this section is 
Theorem~\ref{thm:constructible-is-good-multi}.

\subsection{Constructing hypergraphs by conditioning}
\label{sec:constructing}

We define constructability in the general case, but for intuition the reader
is invited to think about bipartite graphs (i.e., $r=2$).
Recall our definitions of hypergraph homomorphisms
(Definitions~\ref{def:homomorphism} and~\ref{def:id}).

\begin{definition}
Given an $r$-regular, $r$-partite hypergraph
$(Q^{(1)}, \ldots, Q^{(r)}, \oQ)$
and sets $P^{(j)} \subseteq Q^{(j)}$
we define its \emph{section hypergraph}
$(P^{(1)}, \ldots, P^{(r)}, \oP)$,
where $\oP \subseteq \oQ$ consists of those
hyperedges whose vertices are all in
$P^{(1)} \cup \ldots \cup P^{(r)}$. 
\end{definition}

In the graph case the section hypergraph corresponds to an induced subgraph.

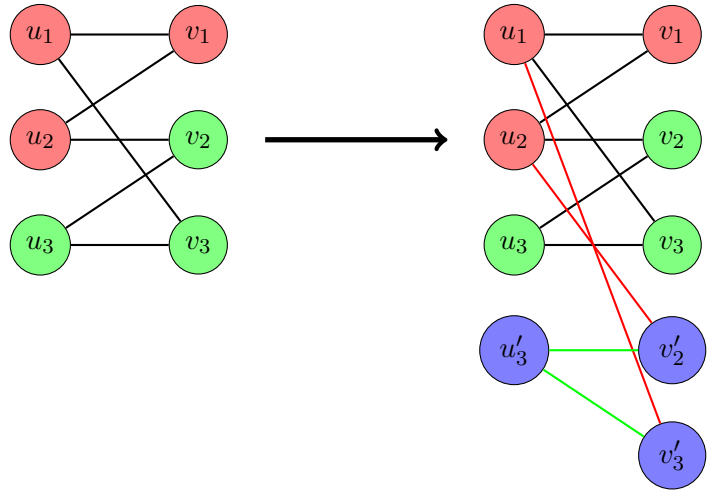
\begin{figure}\centering
\caption{Doubling a bipartite graph. Fixed vertices in red, old vertices in green,
new vertices in blue.}
\label{fig:doubling}
\begin{tikzpicture}[xscale=1.4, yscale=1.4]
\tikzset{every node/.style={circle, draw, text=black}}

\node at (0, 0.5) [draw=none] {};

\node (a0) at (0, 0) [fill=red!50] {$u_1$};
\node (a1)  at (0, -1) [fill=red!50] {$u_2$};
\node (a2)  at (0, -2) [fill=green!50] {$u_3$};
\node (b0)  at (1.5, 0) [fill=red!50] {$v_1$};
\node (b1)  at (1.5, -1) [fill=green!50] {$v_2$};
\node (b2) at (1.5, -2) [fill=green!50] {$v_3$};
\draw [thick] (a0) -- (b0);
\draw [thick] (a0) -- (b2);
\draw [thick] (a1) -- (b0);
\draw [thick] (a1) -- (b1);
\draw [thick] (a2) -- (b1);
\draw [thick] (a2) -- (b2);

\node (i1) at (2, -1) [draw=none]  {};
\node (i2) at (4, -1) [draw=none] {};
\draw [line width = 2, ->] (i1) -- (i2);

\node (c0) at (4.5, 0) [fill=red!50] {$u_1$};
\node (c1) at (4.5, -1) [fill=red!50] {$u_2$};
\node (c2) at (4.5, -2) [fill=green!50] {$u_3$};
\node (c3) at (4.5, -3) [fill=blue!50] {$u'_3$};
\node (d0) at (6, 0) [fill=red!50] {$v_1$};
\node (d1) at (6, -1) [fill=green!50] {$v_2$};
\node (d2) at (6, -2) [fill=green!50] {$v_3$};
\node (d3) at (6, -3) [fill=blue!50] {$v'_2$};
\node (d4) at (6, -4) [fill=blue!50] {$v'_3$};
\draw [thick] (c0) -- (d0);
\draw [thick] (c0) -- (d2);
\draw [thick] (c1) -- (d0);
\draw [thick] (c1) -- (d1);
\draw [thick] (c2) -- (d1);
\draw [thick] (c2) -- (d2);
\draw [color=red, thick] (c0) -- (d4);
\draw [color=red, thick] (c1) -- (d3);
\draw [color=green, thick] (c3) -- (d3);
\draw [color=green, thick] (c3) -- (d4);
\end{tikzpicture}
\end{figure}

\begin{definition}
\label{def:constructible-hypergraphs}
Let $r \ge 2$.
We recursively define the class of $r$-regular, 
$r$-partite hypergraphs that are 
\emph{constructible by conditioning}:
\begin{enumerate}
  \item A single hyperedge 
    $\left(\{q^{(1)}\}, \ldots, \{q^{(r)}\}, 
      \{(q^{(1)}, \ldots, q^{(r)})\}\right)$ 
    is constructible.
  \item If $(P^{(1)} \cupdot Q^{(1)}, \ldots,
    P^{(r)} \cupdot Q^{(r)}, \oP)$ is constructible, then
    $(P^{(1)} \cupdot Q^{(1)} \cupdot R^{(1)},
    \allowbreak \ldots, \allowbreak
    P^{(r)} \cupdot \allowbreak Q^{(r)} \cupdot R^{(r)},
    \allowbreak
    \oP \cup \oQ)$ is
    constructible, where:
    \begin{itemize}
    \item $R^{(j)} := \{ q': q \in Q^{(j)} \}$ is a set of copies of vertices 
      from $Q^{(j)}$.
      We say that the vertices in $P^{(j)}$ are
      \emph{fixed}, vertices from $Q^{(j)}$ are \emph{old}
      and vertices from $R^{(j)}$ are \emph{new}.
      \item
        We say that a hyperedge is fixed if all of its vertices are fixed.
        For a hyperdge $\oq \in \oP$ that is not fixed we define
        $\oq'$ as the hyperedge formed from $\oq$ by
        replacing all of its old vertices by their respective copies.

        Then, $\oQ := \left\{ \oq': \oq \in \oP, 
        \oq \text{ is not fixed} \right\}$.
    \end{itemize}
  In this case we say that $\oP \cup \oQ$ was constructed 
  from $\oP$ by \emph{doubling} $Q^{(1)} \cup \ldots \cup Q^{(r)}$.

  Figure \ref{fig:doubling} can be consulted for an example in the graph
  case.
  \item 
    If $(P^{(1)} \cupdot Q^{(1)}, \ldots, P^{(r)} \cupdot Q^{(r)}, \oQ)$
    is constructible and
    $(P^{(1)}, \ldots, P^{(r)}, \oP)$ is a section hypergraph of $\oQ$ 
    such that there exists a homomorphism from $\oQ$ to $\oP$ which is identity 
    on $P^{(1)} \cup \ldots \cup P^{(r)}$, then $\oP$ is constructible.

    In such case we say that \emph{$\oQ$ collapses onto $\oP$}.
\end{enumerate} 
\end{definition}

Observe that the doubling operation never produces hyperedges incident to
both old and new vertices.

To give some intuition on the conditioning operations we state two
simple properties.

\begin{claim}
Let $r \ge 2$. Every $r$-partite
hypergraph can be collapsed onto one of its hyperedges.
\end{claim}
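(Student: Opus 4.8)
The plan is to exhibit an explicit homomorphism from an arbitrary $r$-regular, $r$-partite hypergraph $\oQ$ to any single hyperedge $\oq = (q^{(1)}, \ldots, q^{(r)}) \in \oQ$ of it, and verify that this homomorphism takes values in the one-hyperedge section hypergraph supported on $\{q^{(1)}\} \cup \ldots \cup \{q^{(r)}\}$. In the language of Definition~\ref{def:constructible-hypergraphs}(3), with $P^{(j)} := \{q^{(j)}\}$ and $Q^{(j)} := Q^{(j)} \setminus \{q^{(j)}\}$, I want to show that $\oQ$ collapses onto the section hypergraph $\oP$ on $(\{q^{(1)}\}, \ldots, \{q^{(r)}\}, \{\oq\})$.

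The homomorphism is the obvious one: take $f = \1_{\oq}$, the constant homomorphism of Definition~\ref{def:id} that sends every hyperedge to $\oq$; concretely $f^{(j)}: Q^{(j)} \to \{q^{(j)}\}$ is the constant map. First I would check $f$ is a homomorphism from $\oQ$ to $\oP$: if $\oq'' = (q''^{(1)}, \ldots, q''^{(r)}) \in \oQ$ then $f(\oq'') = (q^{(1)}, \ldots, q^{(r)}) = \oq \in \oP$, which holds trivially. Second I would check the two conditions required by item (3) of the definition. The map $f$ is a homomorphism from $\oQ$ to its section hypergraph $\oP$ — here $\oP$ genuinely is a section hypergraph since its vertex set $\{q^{(1)}, \ldots, q^{(r)}\}$ is a subset of the vertex set of $\oQ$ and $\{\oq\}$ is exactly the set of hyperedges of $\oQ$ all of whose vertices lie among those — and $f$ restricted to $P^{(1)} \cup \ldots \cup P^{(r)} = \{q^{(1)}, \ldots, q^{(r)}\}$ is the identity, because $f^{(j)}(q^{(j)}) = q^{(j)}$. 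Hence, by Definition~\ref{def:constructible-hypergraphs}(3), $\oP$ is obtained from $\oQ$ by collapsing. Since $\oQ$ was itself arbitrary (in particular, if $\oQ$ is constructible this shows the collapse is legitimate, but the claim as stated is about \emph{every} hypergraph being \emph{collapsible} onto a hyperedge, which is just the statement that this $f$ is a valid collapse witness), we are done.

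There is essentially no obstacle here; the only thing to be careful about is bookkeeping with the definition — making sure the partition $Q^{(j)} = P^{(j)} \cupdot (Q^{(j)} \setminus \{q^{(j)}\})$ matches the pattern $P^{(j)} \cupdot Q^{(j)}$ in item (3) (note the mild notational clash that the ambient parts are also called $Q^{(j)}$, so I would rename them, say $U^{(j)} := Q^{(j)} \setminus \{q^{(j)}\}$, to avoid confusion), and confirming that the single-hyperedge section hypergraph on $\{q^{(1)}, \ldots, q^{(r)}\}$ contains no hyperedge of $\oQ$ other than $\oq$ (true because any other hyperedge of $\oQ$, being a distinct $r$-tuple, must use a vertex outside $\{q^{(1)}, \ldots, q^{(r)}\}$ in some coordinate). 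With those points noted, the constant homomorphism $\1_{\oq}$ does the job.
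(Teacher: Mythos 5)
Your proof is correct and is exactly the intended argument: the paper states this claim without proof precisely because the constant homomorphism $\1_{\oq}$ of Definition~\ref{def:id} is identity on the vertices of $\oq$ and maps every hyperedge into the single-hyperedge section hypergraph, which is all that item~3 of Definition~\ref{def:constructible-hypergraphs} requires. Your bookkeeping remarks (renaming the ambient parts, and noting that the section hypergraph on $\{q^{(1)},\ldots,q^{(r)}\}$ contains only $\oq$ because a hyperedge is determined by its one vertex per part) are accurate and complete the verification.
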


\begin{definition}
The \emph{complete} hypergraph on $Q^{(1)}, \ldots, Q^{(r)}$
is $(Q^{(1)}, \ldots, Q^{(r)}, \allowbreak
Q^{(1)} \times \allowbreak \ldots \allowbreak \times Q^{(r)})$.
\end{definition}

\begin{claim}
\label{cl:full-constructible}
Let $r \ge 2$ and $Q^{(1)}, \ldots, Q^{(r)}$ be finite sets.
The complete hypergraph on $Q^{(1)}, \ldots, Q^{(r)}$ is constructible.
\end{claim}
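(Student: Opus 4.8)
The plan is to prove Claim~\ref{cl:full-constructible} by induction on $N := |Q^{(1)}| + \ldots + |Q^{(r)}|$, using the doubling and collapsing operations to build up the complete hypergraph from a single edge one vertex at a time. The base case is when each $Q^{(j)}$ is a singleton: then the complete hypergraph is exactly a single hyperedge, which is constructible by item~1 of Definition~\ref{def:constructible-hypergraphs}.

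For the inductive step, suppose the claim holds for all vertex-set sizes below $N$, and fix parts $Q^{(1)}, \ldots, Q^{(r)}$ with $\sum_j |Q^{(j)}| = N$. Pick any part, say $Q^{(1)}$, containing at least two vertices (if no part has two vertices we are in the base case), and single out one vertex $v \in Q^{(1)}$. By the inductive hypothesis, the complete hypergraph $H_0$ on $Q^{(1)} \setminus \{v\}, Q^{(2)}, \ldots, Q^{(r)}$ is constructible. The idea is to obtain the complete hypergraph on all of $Q^{(1)}, \ldots, Q^{(r)}$ from $H_0$ by a single doubling followed by a collapse. Concretely, in $H_0$ fix all of $Q^{(2)}, \ldots, Q^{(r)}$ and designate the entire part $Q^{(1)} \setminus \{v\}$ as ``old'' (so here $P^{(j)} = Q^{(j)}$ for $j \ge 2$ and the ``old'' set in the first coordinate is $Q^{(1)} \setminus \{v\}$, with the fixed first-coordinate set empty). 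Doubling this old set creates one fresh copy $u'$ of each $u \in Q^{(1)} \setminus \{v\}$, and for every non-fixed edge $(u, q^{(2)}, \ldots, q^{(r)})$ of $H_0$ — which is every edge, since no edge is fixed when the first-coordinate fixed set is empty — it adds the edge $(u', q^{(2)}, \ldots, q^{(r)})$. Since $H_0$ was complete, the resulting hypergraph $H_1$ is exactly the complete hypergraph on $(Q^{(1)} \setminus \{v\}) \cupdot R^{(1)}, Q^{(2)}, \ldots, Q^{(r)}$, where $R^{(1)}$ is the set of new copies; note $|R^{(1)}| = |Q^{(1)}| - 1$, so $R^{(1)}$ has exactly one more vertex than we need.

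Now I would collapse $H_1$ down to the complete hypergraph on $Q^{(1)}, \ldots, Q^{(r)}$ by merging away one new vertex. Choose the section hypergraph on $P^{(1)} := (Q^{(1)} \setminus \{v\}) \cupdot (R^{(1)} \setminus \{w'\})$ for some fixed $w' \in R^{(1)}$, together with $P^{(j)} := Q^{(j)}$ for $j \ge 2$; identifying $P^{(1)}$ with $Q^{(1)}$ via an arbitrary bijection, this section hypergraph is complete on $|Q^{(1)}|$ first-coordinate vertices and hence is the complete hypergraph on $Q^{(1)}, \ldots, Q^{(r)}$ up to renaming. It remains to exhibit a homomorphism from $H_1$ to this section hypergraph that is the identity on $P^{(1)} \cup \ldots \cup P^{(r)}$: take $f^{(j)} = \Id$ for $j \ge 2$, and let $f^{(1)}$ fix every vertex of $P^{(1)}$ and send the one leftover vertex $w'$ to any vertex of $P^{(1)}$ (for instance back to the original $w \in Q^{(1)} \setminus \{v\}$). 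Since the target is the complete hypergraph, the image of any edge of $H_1$ is automatically an edge, so $f$ is a valid homomorphism, $H_1$ collapses onto the section hypergraph by item~3, and we are done.

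The main obstacle — and the only point requiring care — is bookkeeping: one must verify that the doubling operation applied to $H_0$ really produces the complete hypergraph $H_1$ (in particular that ``every edge is non-fixed'' so that all expected new edges appear and no old--new edges are created, which is consistent with the remark following the definition), and that the section hypergraph selected for the collapse is genuinely complete on the right number of vertices so that the naive homomorphism lands in it. Both are immediate from the fact that completeness is preserved under deleting a vertex and that every map into a complete hypergraph is a homomorphism; there are no quantitative estimates involved.
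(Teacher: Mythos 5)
Your overall strategy is sound and close in spirit to the paper's, but there is a concrete miscount in the collapse step that breaks the argument as written. After doubling the old set $Q^{(1)}\setminus\{v\}$ you have $|Q^{(1)}|-1$ old vertices and $|R^{(1)}|=|Q^{(1)}|-1$ new ones in the first part, and you need to end with exactly $|Q^{(1)}|$ of them, i.e.\ you must keep exactly \emph{one} new vertex and collapse away the other $|Q^{(1)}|-2$. Your claim that ``$R^{(1)}$ has exactly one more vertex than we need'' is only true when $|Q^{(1)}|=3$, and the section you choose, $P^{(1)}=(Q^{(1)}\setminus\{v\})\cupdot(R^{(1)}\setminus\{w'\})$, has $2|Q^{(1)}|-3$ vertices, which is not $|Q^{(1)}|$ in general (for $|Q^{(1)}|=2$ it even discards the one new vertex you needed). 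The repair is immediate: take $P^{(1)}=(Q^{(1)}\setminus\{v\})\cupdot\{w'\}$ for a single new vertex $w'$ and send every other new vertex anywhere in $P^{(1)}$; exactly as you argue, any map into a complete hypergraph that is the identity on the other parts is a homomorphism, so the collapse is valid and the induction goes through.

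For comparison, the paper avoids the collapse entirely: it starts from a single hyperedge and, in the $j$-th stage, doubles the single vertex $q_1^{(j)}$ (with everything else fixed) $k^{(j)}-1$ times. Doubling one vertex of a complete hypergraph adds one new vertex with the same neighborhood, so completeness is preserved at every step and no clean-up is needed; this also gives the doubling count used in Remark~\ref{rem:full-constructible}. Your doubling-a-whole-part variant is workable once the section is fixed, but it is both less economical (it needs collapses) and, because it doubles more vertices per step, does not obviously yield the $rk$-doubling bound for parts of size $2^k$ that the paper relies on later.
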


\begin{proof}
Let $k^{(j)} := |Q^{(j)}|$, 
$Q^{(j)} = \{q^{(j)}_1, \ldots, q^{(j)}_{k^{(j)}}\}$
and $P^{(j)} := \{ q_1^{(j)} \}$. Start the construction with a single
hyperedge 
$(P^{(1)}, \ldots, P^{(r)}, P^{(1)} \times \ldots \times P^{(r)})$.

The complete hypergraph is constructed in $r$ stages. In the $j$-th
stage vertex $q^{(j)}_1$ is doubled $k^{(j)}-1$ times with all other vertices
fixed. Observe that after the $j$-th stage the current hypergraph is
the complete hypergraph on $Q^{(1)}, \ldots, Q^{(j)}, P^{(j+1)}, \ldots, P^{(r)}$.
\end{proof}

\begin{remark}
\label{rem:full-constructible}
As a matter of fact, if $|Q^{(1)}| = \ldots = |Q^{(r)}| = 2^k$, then it
is not difficult to see that the complete hypergraph on 
$Q^{(1)}, \ldots, Q^{(r)}$ can be constructed with 
$rk$ doublings.
\end{remark}

\subsection{Constructability implies parallel repetition}
\label{sec:constructability-pr-proof}

Our goal in this section is the following quantitative version
of Theorem~\ref{thm:constructible-good-simple}:

\begin{theorem}
\label{thm:constructible-is-good-multi}
Let $\oQ$ be an $r$-prover question set that 
is constructible by conditioning using $k$ doublings 
(and an arbitrary number of collapses).
Let $M := \left| \oQ \right|$. Then,
\begin{align*}
\omega_{\oQ}(n) \le 3\exp\left(-n / M^{2^{k+1}}\right) \; .
\end{align*}
In particular, $\oQ$ admits exponential parallel repetition.
\end{theorem}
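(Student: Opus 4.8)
The plan is to prove the bound by induction on the construction of $\oQ$, showing that the quantity $\omega_{\oQ}(n)$ decays exponentially with a rate that degrades in a controlled way at each doubling and is (essentially) unaffected by collapses. The right language for this is the forbidden-subgraph / homomorphism framework of Definition~\ref{def:qualified-good}: recall that if $\oQ$ is $(n,\epsilon)$-good then $\omega_{\oQ}(n)\le\epsilon$. So it suffices to show that an $r$-prover question set constructible with $k$ doublings is $(n,\epsilon_n)$-good for $\epsilon_n=3\exp(-n/M^{2^{k+1}})$. Equivalently, I will track a function $\phi_{\oQ}(n):=$ the least $\epsilon$ for which $\oQ$ is $(n,\epsilon)$-good, and prove the recursive estimate by structural induction. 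For the base case, a single hyperedge has $\omega(n)=0$ trivially (the only game is trivial), so the claim holds vacuously. For a collapse $\oQ$ collapsing onto $\oP$, a homomorphism $h:\oQ\to\oP$ identity on $P^{(1)}\cup\cdots\cup P^{(r)}$ lets one pull back any good homomorphism vector for a set $S\subseteq\oP^n$ to a good vector for $h^{-1}(S)\subseteq\oQ^n$ (composition of homomorphisms is a homomorphism, and $h$ fixes the $P$-part so the identity coordinate is preserved); since $\mu_{\oQ}(h^{-1}(S))\ge\mu_{\oP}(S)\cdot(\text{something controlled})$ — more carefully, one works with the surjection and notes $|h^{-1}(S)|\ge|S|$ — collapses do not worsen the rate, only $M$ matters. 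The technically substantive step is the doubling step.

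For the doubling step, suppose $\oP\cup\oQ$ is obtained from $\oP$ by doubling $Q^{(1)}\cup\cdots\cup Q^{(r)}$, and we know the good-bound $\epsilon$ for $\oP$ with $M_{\oP}:=|\oP|$ edges. Let $M:=|\oP\cup\oQ|\le 2M_{\oP}$. Given $S\subseteq(\oP\cup\oQ)^n$ with $\mu(S)\ge\epsilon'$, I want to extract a good homomorphism vector. The idea, following the technique of \cite{HHM15} alluded to in the text, is that each of the $n$ coordinates independently is an edge of $\oP\cup\oQ$; since doubling never creates edges incident to both old and new vertices, every edge lies either in the "$\oP$-side" (fixed $\cup$ old vertices) or in the "copy-side" (fixed $\cup$ new vertices), and both sides are isomorphic copies of $\oP$ (the copy-side is $\oP$ with old vertices renamed). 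So a coordinate tuple in $(\oP\cup\oQ)^n$ is, after recording for each coordinate which side it is on, a pair of strings over two copies of $\oP$. By a pigeonhole/averaging argument over the $\le 2^n$ choices of side-pattern, fix the most popular side-pattern: this costs a factor $2^n$ in the density but restricts us to coordinates partitioned into a block $I$ on the $\oP$-side and a block $J$ on the copy-side. On the union $I\cup J$ we now have a dense subset of $\oP^{|I|}\times\oP^{|J|}\cong\oP^{|I\cup J|}$, and by the inductive good-ness of $\oP$ (applied to $n=|I\cup J|$, which is at least $n$ after the pigeonhole since we did not throw away coordinates, only relabeled) we get a good homomorphism vector $\uf$ of $\oP$ for this set. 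Finally one lifts $\uf$ to a homomorphism vector of $\oP\cup\oQ$: on $\oP$-side coordinates use $f_i$ composed with the inclusion $\oP\hookrightarrow\oP\cup\oQ$ that keeps old vertices; on copy-side coordinates use $f_i$ followed by the renaming old$\to$new; the identity coordinate of $\uf$ maps, on the appropriate side, to the identity of $\oP\cup\oQ$ restricted to that side — here one has to be slightly careful and may need to argue that one can choose the identity coordinate to sit on the $\oP$-side (or absorb a constant-homomorphism correction), which is where the constant homomorphisms $\1_{\oq}$ of Definition~\ref{def:id} enter. Each such lift is indeed a homomorphism of $\oP\cup\oQ$ because edges never cross sides. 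Unwinding, if $\oP$ is $(n,\epsilon)$-good then $\oP\cup\oQ$ is $(n,2^n\epsilon^{c})$-good for an appropriate constant exponent; chasing the recursion with base rate $\sim 1/M^{2}$ from the complete-graph/free case and multiplying through $k$ doublings yields the stated $M^{2^{k+1}}$ in the exponent, and the additive $3$ and the $\exp(-n/\cdot)$ shape come from the pigeonhole losses plus the reduction (in the spirit of the Chernoff argument in Section~\ref{sec:non-uniform}) needed to convert between "density at least $\epsilon$" statements and clean exponential bounds.

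The main obstacle I anticipate is making the doubling step quantitatively tight enough to land the double-exponential $2^{k+1}$ rather than something worse: each doubling roughly squares the relevant parameter (because a coordinate on the copy-side must be correlated with the $\oP$-side structure, effectively forcing a product/tensor argument that squares the density exponent), and one must verify that the pigeonhole over side-patterns ($2^n$ loss) is genuinely absorbed by the exponential decay of the inductive bound — i.e., that $2^n\cdot 3\exp(-n\cdot 2/M_{\oP}^{2^{k}})\le 3\exp(-n/M^{2^{k+1}})$ given $M\le 2M_{\oP}$, which is a routine but load-bearing inequality. A secondary subtlety is the bookkeeping of the "identity coordinate" requirement across the two sides under collapses and doublings; ensuring that goodness is preserved (not just density) through the whole construction is where the proof could silently break, so I would set up the induction hypothesis to carry both the density bound and an explicit description of where an identity coordinate can be located.
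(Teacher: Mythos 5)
Your high-level frame (reduce to $(n,\epsilon)$-goodness via Lemma~\ref{lem:good-implies-pr} and induct on the construction) matches the paper, but the mechanism you propose for the doubling step does not work, and the inequality you yourself flag as ``routine but load-bearing'' is false. Pigeonholing over the $2^n$ side-patterns costs a multiplicative factor $2^n$ in density, and this cannot be absorbed by a bound of the form $3\exp(-n/D)$: you would need $2^n\cdot 3\exp(-2n/M_{\oP}^{2^k})\le 3\exp(-n/M^{2^{k+1}})$, i.e.\ $2/M_{\oP}^{2^k}-\ln 2\ge 1/M^{2^{k+1}}$, which fails already for $M_{\oP}\ge 2$, $k\ge 1$ (the left side is negative). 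Compounding the losses over $k$ doublings makes the bound vacuous. A second gap is the identity coordinate: after restricting to one side-pattern, every coordinate of the lifted vector is forced to send \emph{all} edges of $\oP\cup\oQ$ to one fixed side, so no lift of $\Id_{\oP}$ can be $\Id_{\oP\cup\oQ}$ (the latter must map copy-edges to copy-edges and $\oP$-edges to $\oP$-edges simultaneously); a constant-homomorphism correction does not repair this. There is also a density-bookkeeping issue in your collapse step: the fibers of the collapsing homomorphism need not be balanced, so $\mu_{\oQ^n}(h^{-1}(S))$ can be exponentially smaller than $\mu_{\oP^n}(S)$.

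The paper avoids all three problems by changing what is carried through the induction: not ``each intermediate graph is good as a question set'' but the existence of a probability distribution $\cH$ on $\Hom(\oP,\oQ)$ (with $\oQ$ the \emph{final} graph, fixed throughout, so the measure of $S$ never gets renormalized) such that (i) $n$ i.i.d.\ samples hit any set of measure $\mu$ simultaneously on all edges with probability at least $\mu^{2^k}$, and (ii) every homomorphism, in particular $\Id$, has mass at least $1/M^{2^k}$. The doubling step then costs only a squaring of the exponent, via conditional independence of the old and new parts given the fixed part plus Jensen's inequality ($\EE[X]^2\le\EE[X^2]$), with no restriction to a side-pattern; the identity coordinate is recovered probabilistically by a union bound (Lemma~\ref{lem:probabilistic-good}), which is where the factor $3$ and the final $\exp(-n/M^{2^{k+1}})$ come from. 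If you want to salvage your plan, replacing the pigeonhole by this second-moment argument is the essential missing idea.
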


A very rough proof outline is as follows: first, exponential parallel 
repetition is equivalent to exponential decrease of the threshold for good 
homomorphism vectors (cf.~Definition~\ref{def:qualified-good}). Second, we show that 
the existence of good homomorphism vectors is implied by existence of
a probability distribution over $\Hom\left(\oQ, \oQ\right)$ with certain set hitting
properties. Third, we prove that such distribution exists for every
constructible $\oQ$.

We elaborate those three steps in the next subsections.

\subsubsection{Good question sets}
\label{sec:good}

For convenience we restate Definition~\ref{def:qualified-good}:

\qualifiedgood*

Observe that a vector of $n$ identities $\uf = (\Id, \ldots, \Id)$
is good for the whole space $\ouQ$
and hence every question set $\oQ$ is $(n, 1)$-good.

\begin{remark}
Note that if $\oQ$ is $(n, \epsilon)$-good, then it is also 
$(n+1, \epsilon)$-good. This is because given $S \subseteq \oQ^{n+1}$ we can 
set $f_{n+1}$ to be a constant homomorphism such that the (relative) measure 
$\mu(S)$ does not decrease
conditioned on $\oq_{n+1} = f_{n+1}(\oq)$. Then we can get $f_1, \ldots, f_n$ 
from the assumption that $\oQ$ is $(n, \epsilon)$-good.
\end{remark}

\begin{definition}
Let $\oQ$ be a question set and $n \in \mathbb{N}_{>0}$.
We define
\begin{align*}
\omega_{\oQ}^{\good}(n) := \inf \left\{
\epsilon: \text{$\oQ$ is $(n, \epsilon)$-good}
\right\} \; .
\end{align*}
We say that $\oQ$ is \emph{good} if 
$\lim_{n \to \infty} \omega_{\oQ}^{\good}(n) = 0$.
\end{definition}

The value of $\omega^{\good}_{\oQ}(n)$ is an upper bound on the parallel repetition
rate $\omega_{\oQ}(n)$:

\begin{lemma}
\label{lem:good-implies-pr}
Let $\oQ$ be a question set. Then,
\begin{align*}
\omega_{\oQ}(n) \le \omega^{\good}_{\oQ}(n) \; .
\end{align*}
\end{lemma}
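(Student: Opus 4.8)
\textbf{Proof plan for Lemma~\ref{lem:good-implies-pr}.}
The plan is to show the contrapositive in spirit: starting from an arbitrary non-trivial game $\cG$ with question set $\oQ$ and a near-optimal strategy $\ocS$ for the $n$-fold repetition $\cG^n$, I will extract from the set of ``winning question tuples'' a subset $S \subseteq \ouQ$ whose measure is at least $\val(\cG^n)$ and which admits no good homomorphism vector; by definition of $\omega^{\good}_{\oQ}(n)$ this forces $\val(\cG^n) \le \omega^{\good}_{\oQ}(n)$. Concretely, fix a strategy $\ucS = (\ucS^{(1)}, \ldots, \ucS^{(r)})$ achieving $\val(\cG^n)$ in the repeated game, where $\ucS^{(j)} : (Q^{(j)})^n \to (A^{(j)})^n$, and let
\begin{align*}
  S := \left\{ \oq \in \ouQ : \underline{V}(\oq, \ucS(\oq)) = 1 \right\}
\end{align*}
be the set of question vectors on which this strategy wins all $n$ coordinates. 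Then $\mu(S) = \val(\cG^n)$ by definition of the value and the uniform sampling in $\cG^n$.

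The heart of the argument is to show that $S$ cannot have a good homomorphism vector $\uf = (f_1, \ldots, f_n)$, where (by Definition~\ref{def:qualified-good}) each $f_i$ is a homomorphism of $\oQ$, some coordinate $i_0$ has $f_{i_0} = \Id$, and $\uf(\oq) \in S$ for every $\oq \in \oQ$. Suppose such a $\uf$ existed. I would use it to build a \emph{single-shot} strategy for $\cG$ that always wins, contradicting non-triviality. The idea: on single-shot question $q^{(j)} \in Q^{(j)}$ for prover $j$, form the $n$-fold question vector whose $i$-th coordinate is $f_i^{(j)}(q^{(j)})$ — this is well-defined and, crucially, when all provers do this on a consistent hyperedge $\oq \in \oQ$, the resulting $n$-fold question vector is exactly $\uf(\oq) = (f_1(\oq), \ldots, f_n(\oq)) \in \ouQ$ because each $f_i$ is a homomorphism. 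Each prover then runs $\ucS^{(j)}$ on this vector to get an $n$-fold answer, and outputs the answer in coordinate $i_0$ (the one where $f_{i_0} = \Id$, so the $i_0$-th question coordinate is the genuine $q^{(j)}$). Since $\uf(\oq) \in S$, the repeated strategy wins \emph{all} coordinates on input $\uf(\oq)$, in particular coordinate $i_0$; and the single-shot verification predicate in coordinate $i_0$ is being applied exactly to the original hyperedge $\oq$ and the extracted answers. Hence this single-shot strategy wins on every $\oq \in \oQ$, so $\val(\cG) = 1$, contradicting that $\cG$ is non-trivial.

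Therefore $S$ has no good homomorphism vector, which means $\oQ$ is \emph{not} $(n, \mu(S))$-good — more precisely, by definition of $\omega^{\good}_{\oQ}(n)$ as the infimum of $\epsilon$ for which $\oQ$ is $(n,\epsilon)$-good, and since $(n,\epsilon)$-goodness is monotone upward in $\epsilon$, the existence of a measure-$\mu(S)$ set with no good vector gives $\mu(S) \le \omega^{\good}_{\oQ}(n)$ (if we had $\mu(S) > \omega^{\good}_{\oQ}(n)$, then $\oQ$ would be $(n, \mu(S))$-good and $S$ would need a good vector). Combining with $\mu(S) = \val(\cG^n)$ yields $\val(\cG^n) \le \omega^{\good}_{\oQ}(n)$, and taking the maximum over all non-trivial $\cG$ gives $\omega_{\oQ}(n) \le \omega^{\good}_{\oQ}(n)$.

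The step I expect to need the most care is the correctness of the single-shot strategy extraction — specifically verifying that the $i_0$-th coordinate of $\uf(\oq)$ really is $\oq$ (using $f_{i_0} = \Id$) and that the verifier's single-coordinate acceptance on $\uf(\oq)$ in coordinate $i_0$ translates back to acceptance of $\cG$ on the original input $\oq$. One should also handle the routine point that a near-optimal strategy suffices if the maximum in the value is not attained, though since all sets here are finite the maximum is attained and this is a non-issue. Everything else is bookkeeping with the index conventions of Definition~\ref{def:qualified-good}.
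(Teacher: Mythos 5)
Your proposal is correct and follows essentially the same argument as the paper: define $S$ as the set of winning question vectors for an optimal repeated strategy, use a good homomorphism vector to extract a perfect single-shot strategy (applying $\uf^{(j)}$ to the question, running the repeated strategy, and outputting the answer at the identity coordinate), and derive a contradiction with non-triviality. The only difference is cosmetic — you argue via "$S$ has no good vector, hence $\mu(S)\le\omega^{\good}_{\oQ}(n)$" while the paper assumes $\val(\cG^n)>\omega^{\good}_{\oQ}(n)$ and contradicts $\val(\cG)<1$ directly.
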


\begin{proof}
Assume otherwise, i.e., that there exists a game $\cG$ with question 
set $\oQ$ and $\val(\cG) < 1$ such that 
$\val(\cG^n) > \omega_{\oQ}^{\good}(n)$. We construct a perfect strategy
for $\cG$, which is a contradiction.

Fix an optimal strategy for $\cG^n$ and let
$S \subseteq \ouQ$ with $\mu(S) > \omega_{\oQ}^{\good}(n)$ 
be the set of question vectors in the repeated game
for which the players win.

Let $\underline{f}$ be a vector of homomorphisms of $\oQ$ that is good
for $S$ and let $i$ be a coordinate where $f_i$ is identity.

A strategy for the game $\cG$ for the $j$-th prover is as follows:
Given $q^{(j)} \in Q^{(j)}$, obtain 
$\underline{f}^{(j)}(q^{(j)}) = (f_1^{(j)}(q^{(j)}), \ldots, f_n^{(j)}(q^{(j)}))$.
Then, consider the answer of the $j$-th prover on $\underline{f}^{(j)}(q^{(j)})$
in the strategy for $\cG^n$. Finally, output the $i$-th coordinate
of that answer.

Since for every $\oq = (q^{(1)}, \ldots, q^{(r)}) \in \oQ$ we have that
$\underline{f}(\oq) = (\underline{f}^{(1)}(q^{(1)}), \allowbreak\ldots,
\allowbreak \underline{f}^{(r)}(q^{(r)}))
\in S$, when applying the above strategy the provers
are always winning on all coordinates of $\cG^n$.
Since $f_i(\oq) = \oq$, their answers on 
the $i$-th coordinate are winning for $\oq$ in the game $\cG$. Therefore,
$\val(\cG) = 1$, a contradiction.
\end{proof}

\begin{remark}
Lemma~\ref{lem:good-implies-pr} is essentially what is called in the literature 
the forbidden subgraph method. However, our formulation with homomorphisms
is different than the one that uses forbidden subgraphs, e.g., in \cite{FV02}.

Verbitsky \cite{Ver95, FV02} showed that the forbidden subgraph method
is universal, i.e., for a connected question set $\oQ$ there is
$\omega_{\oQ}(n) = \omega_{\oQ}^{\good}(n)$
(cf.~Theorem~\ref{thm:fv-universal}).
\end{remark}

\subsubsection{Proving that 
\texorpdfstring{$\oQ$}{Q}
is good with probabilistic method}

\begin{lemma}
\label{lem:probabilistic-good}
Let $\oQ$ be a question set
and let $\mathcal{H}$ be a distribution
over $\Hom(\oQ, \oQ)$ such that:
\begin{enumerate}
\item 
If $\underline{f} = (f_1, \ldots, f_n)$ is sampled such that
$f_i$ is i.i.d.~in $\cH$, then:
\begin{align*}
\forall S \subseteq \ouQ: \Pr \left[
  \forall \oq \in \oQ: \underline{f}(\oq) \in S
\right] \ge c(\mu(S)) \; ,
\end{align*}
where $c(\mu) > 0$ if $\mu > 0$.
\item $\cH(\Id) > 0$.
\end{enumerate}
Then, $\oQ$ is good. Furthermore, if $\cH(\Id) \ge \epsilon > 0$
and $c(\mu) \ge \mu^C / C$ for some $C \ge 1$, then
$\omega_{\oQ}^{\good}(n) \le 3\exp(-\epsilon n / C)$.
\end{lemma}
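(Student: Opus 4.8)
The plan is to deduce goodness of $\oQ$ directly from the two hypotheses by a union bound, and then to track constants carefully to obtain the quantitative statement. First I would fix $n$ and a set $S \subseteq \ouQ$ with $\mu(S) = \mu > 0$, and I want to exhibit a homomorphism vector $\uf = (f_1,\ldots,f_n)$ that is good for $S$; this will show $\omega_{\oQ}^{\good}(n) \le \mu$ for every $\mu > \mu(S)$, hence $\oQ$ is $(n,\epsilon)$-good whenever $\epsilon > 0$ works, which is what we need for $\lim_{n\to\infty}\omega_{\oQ}^{\good}(n) = 0$ once the quantitative bound is in place. The natural move is the probabilistic method: sample $f_1,\ldots,f_n$ i.i.d.\ from $\cH$ and show that with positive probability $\uf$ satisfies both defining conditions of a good vector.

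The key steps, in order. Step one: sample $\uf$ with each $f_i$ i.i.d.\ $\sim \cH$. Let $A$ be the event ``$\uf(\oq) \in S$ for all $\oq \in \oQ$'' and let $B$ be the event ``some coordinate $f_i$ equals $\Id$''. By hypothesis~1, $\Pr[A] \ge c(\mu) > 0$. Step two: condition on $B$. Here I would condition on the position of a coordinate equal to $\Id$; more cleanly, note that the event $A$ only requires $\uf(\oq) \in S$, and replacing a non-identity $f_i$ by $\Id$ cannot destroy this (since $\Id(\oq) = \oq$ need not lie in $S$) --- so I cannot simply condition naively, and instead I would argue as follows: let $A'$ be the event that $\uf(\oq) \in S$ for all $\oq$ \emph{and} $f_n = \Id$. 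Then $\Pr[A'] = \Pr[f_n = \Id] \cdot \Pr[\uf'(\oq) \in S \ \forall \oq \mid \text{extra }\Id\text{ coordinate}]$, but conditioning on $f_n = \Id$ forces $\oq \in S$ for all $\oq \in \oQ$ in that coordinate's contribution, which is false unless $\oQ \subseteq S$. The correct fix: instead sample $f_1, \ldots, f_{n}$ i.i.d., and observe that the event $A$ as stated in hypothesis~1 already implicitly must be compatible with having an $\Id$ coordinate, because hypothesis~1 is stated for all $S$ including those not containing all of $\oQ$; so having $f_i = \Id$ is only consistent with $A$ when... Actually the clean argument is: run hypothesis~1 with the $(n-1)$-fold product, getting $f_1, \ldots, f_{n-1}$ with $\Pr[\uf(\oq) \in \pi(S) \ \forall \oq] \ge c(\mu')$ for a suitable slice, then set $f_n = \Id$ and use that the slice was chosen so $\oq$ is automatically in it. Step three: optimize. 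Use independence to write $\Pr[f_n = \Id \text{ and } A \text{ holds on coords } 1..n-1] = \cH(\Id) \cdot \Pr[\text{good on the } (n-1) \text{ remaining coords for the slice } S_{\oq}]$; by hypothesis~1 applied to the worst slice this is $\ge \epsilon \cdot c(\mu(S_{\oq}))$ where the slice has measure still related to $\mu$. Iterating or being slightly more careful, choose the $\Id$-coordinate to be any of the $n$ positions; the probability that position $i$ is the ``designated'' $\Id$ and $A$ holds is at least $\epsilon \cdot c(\mu)$-ish, and taking a union / choosing the best $i$, or simply noting $\Pr[B \cap A] \ge \Pr[f_n = \Id] \cdot c(\text{slice}) \ge \epsilon \cdot \mu^C / C$ when $c(\mu) \ge \mu^C/C$, gives that for this to be the probability bound of the form $3\exp(-\epsilon n / C)$ I would instead iterate over which coordinate carries the identity. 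Concretely: the probability that \emph{no} coordinate is $\Id$ is $(1-\epsilon)^n \le \exp(-\epsilon n)$, and conditioned on at least one being $\Id$, hypothesis~1 still gives the set-hitting probability; combining via inclusion-exclusion / a clean conditioning yields $\omega_{\oQ}^{\good}(n) \le 3\exp(-\epsilon n / C)$, where the factor $3$ and the $1/C$ in the exponent absorb the slack from the slice argument and from $c(\mu) \ge \mu^C/C$ (so that hitting a set of measure $\mu \approx \exp(-\epsilon n/C)$ still has positive probability).

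The main obstacle I expect is the coupling between the two conditions in the definition of a good vector: hypothesis~1 gives set-hitting for a random vector, but forcing some coordinate to be $\Id$ changes the distribution and the naive conditioning argument fails because $\Id(\oq) = \oq$ is not in $S$ in general. The right way around this is the standard ``slice'' trick: fix one coordinate, say the last, to be $\Id$, and on the remaining $n-1$ coordinates apply hypothesis~1 to each of the $|\oQ|$ conditional slices $S \cap (\{\oq\} \times \ouQ')$; since $S$ has measure $\mu$, at least one slice has measure $\ge \mu$ but we actually need \emph{all} of them simultaneously, which forces a pigeonhole/averaging loss --- this is exactly where the polynomial-in-$\mu$ lower bound $c(\mu) \ge \mu^C/C$ is used, to keep the product of $|\oQ|$ slice-probabilities from collapsing, and where the constant $C$ (which will later be $M^{2^{k+1}}$) enters. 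I would then verify that with the bound $c(\mu) \ge \mu^C/C$ and $\cH(\Id) \ge \epsilon$, choosing $\mu = 3\exp(-\epsilon n/C)$ makes the probability of a good vector strictly positive, completing the argument.
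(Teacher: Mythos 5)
There is a genuine gap. The heart of the lemma is to show that, for a random $\uf$ with i.i.d.\ coordinates from $\cH$, the two events $A :\equiv \{\forall \oq \in \oQ:\ \uf(\oq)\in S\}$ and $B :\equiv \{\exists i:\ f_i=\Id\}$ occur \emph{simultaneously} with positive probability. You correctly diagnose that naively conditioning on $f_n=\Id$ breaks the applicability of hypothesis~1, but the resolution you then commit to --- the ``slice trick'', applying hypothesis~1 on the first $n-1$ coordinates to each of the $|\oQ|$ slices $S_{\oq}$ --- cannot be made to work: the event you would need is that for \emph{every} $\oq$ the vector $(f_1(\oq),\ldots,f_{n-1}(\oq))$ lands in the $\oq$-\emph{dependent} set $S_{\oq}$, whereas hypothesis~1 only controls the probability that all images land in one common set. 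Moreover, the images $\uf(\oq)$ for different $\oq$ are functions of the same $f_1,\ldots,f_{n-1}$ and hence strongly correlated, so there is no ``product of $|\oQ|$ slice-probabilities'' to keep from collapsing; the hypothesis $c(\mu)\ge \mu^C/C$ plays no such role. You acknowledge the difficulty (``we actually need all of them simultaneously'') but never resolve it, and your closing appeal to ``inclusion-exclusion / a clean conditioning'' is left vague precisely at the point where the proof has to happen.

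The argument the paper uses is much simpler and requires no conditioning and no slicing: since $\Pr[A]\ge c(\mu)$ by hypothesis~1 and $\Pr[\neg B]=(1-\epsilon)^n$ by independence, one has
\begin{align*}
\Pr[A\cap B]\ \ge\ \Pr[A]-\Pr[\neg B]\ \ge\ c(\mu)-(1-\epsilon)^n\ >\ 0
\end{align*}
whenever $(1-\epsilon)^n\le c(\mu)/2$; the two events need not interact at all. The quantitative claim then follows by solving for the threshold measure: set $\mu:=\bigl(2C(1-\epsilon)^n\bigr)^{1/C}$, so that $(1-\epsilon)^n=\mu^C/2C\le c(\mu)/2$, and note $(2C)^{1/C}\le 3$ for $C\ge 1$, giving $\omega_{\oQ}^{\good}(n)\le 3\exp(-\epsilon n/C)$. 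Neither the union-bound step nor this optimization appears in your write-up, so the proposal as it stands does not establish the lemma.
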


\begin{proof}
  Let $\epsilon := \cH(\Id)$ and $\mu \in (0,1]$. For $S \subseteq \ouQ$
  with $\mu(S) = \mu$, define
  the event
  \begin{align*}
  \cE :\equiv \forall \oq \in \oQ: \underline{f}(\oq) \in S \land 
    \exists i \in [n]: f_i = \Id \; .
  \end{align*}
  Since $\Pr[\forall \oq \in \oQ: \underline{f}(\oq) \in S] \ge c(\mu)$ and
  $\Pr[\exists i: f_i = \Id] = 1-(1-\epsilon)^n$, by union bound, if
\begin{align}
\label{eq:04a}
\Pr[\forall i: f_i \ne \Id] = (1-\epsilon)^n \le c(\mu) / 2 \; ,
\end{align}
then $\Pr[\cE] > 0$. Therefore, if we choose $n$ such that (\ref{eq:04a})
holds, then $\oQ$ is $(n, \mu)$-good. Since for arbitrary 
$\mu \in (0, 1]$ we found
that $\oQ$ is $(n, \mu)$-good for $n$ big enough, $\oQ$ must be good.

Furthermore, if $c(\mu) \ge \mu^C / C$, setting
\begin{align*}
\mu := \left(2C(1-\epsilon)^n\right)^{1/C} \le 
(2C)^{1/C} \cdot \exp(-\epsilon n / C) \le 3 \exp (-\epsilon n / C) \; ,
\end{align*}
we see that:
\begin{align*}
(1-\epsilon)^n = \mu^C / 2C \le c(\mu) / 2 \; ,
\end{align*}
and therefore $\omega_{\oQ}^{\good}(n) \le 3\exp(-\epsilon n / C)$.
\end{proof}

\subsubsection{Same-set hitting homomorphism spaces}

\begin{lemma}
\label{lem:constructible-is-good-induction}
Let $\oP$ be an $r$-partite hypergraph
constructible using $k$ doublings (and an arbitrary number of
collapses) and let $\oQ$ be another $r$-partite hypergraph.

Then, there exists a distribution $\cH$ over $\Hom(\oP, \oQ)$
such that:
\begin{enumerate}
\item
If $\underline{f} = (f_1, \ldots, f_n)$ is sampled such that $f_i$ is
i.i.d.~in $\cH$, then:
\begin{align*}
  \forall S \subseteq \ouQ: \Pr \left[
    \forall \op \in \oP: \underline{f}(\op) \in S
  \right] \ge \mu(S)^{C} \; ,
\end{align*}
where $C = 2^k$.
\item
$\min_{f \in \Hom(\oP, \oQ)} \cH(f) \ge 1 / M^C$,
where $C = 2^k$ and $M = \left|\oQ\right|$.
\end{enumerate}
\end{lemma}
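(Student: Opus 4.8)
The plan is to induct on the construction of $\oP$, following the three base/recursive cases in Definition~\ref{def:constructible-hypergraphs}. Throughout, the key quantity is the ``same-set hitting'' probability: for a distribution $\cH$ over $\Hom(\oP,\oQ)$ we want that an i.i.d.\ vector $\uf=(f_1,\ldots,f_n)$ satisfies $\Pr[\forall\op\in\oP\colon \uf(\op)\in S]\ge \mu(S)^C$ for all $S\subseteq\ouQ$, together with the lower bound $\cH(f)\ge 1/M^C$ on every atom. Note that the homomorphism vector $\uf$ acts coordinatewise, so $\uf(\op)=(f_1(\op),\ldots,f_n(\op))\in\ouQ$, and the event in question is that all $|\oP|$ of these vectors land in $S$ simultaneously.

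\textbf{Base case (single hyperedge).} If $\oP$ is a single hyperedge, then every homomorphism $\oP\to\oQ$ is determined by the image of that one edge, so $\Hom(\oP,\oQ)$ is in bijection with $\oQ$, which has size $M$. Take $\cH$ uniform. Then $C=2^0=1$, the atom bound $\cH(f)=1/M\ge 1/M^1$ is immediate, and the hitting bound reads $\Pr[\uf(\op)\in S\text{ for the unique }\op]\ge\mu(S)$; but for a single edge this is exactly $\Pr[(f_1(\op),\ldots,f_n(\op))\in S]=|S|/M^n=\mu(S)$ since each $f_i$ picks the image edge uniformly and independently. So equality holds.

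\textbf{Collapse case.} Suppose $\oQ'$ (constructible with $k$ doublings) collapses onto $\oP$ via a homomorphism $g\colon\oQ'\to\oP$ that is the identity on $P^{(1)}\cup\cdots\cup P^{(r)}$. By the inductive hypothesis we have a distribution $\cH'$ over $\Hom(\oQ',\oQ)$ with parameter $C=2^k$. I would push $\cH'$ forward along precomposition with any fixed homomorphism (or rather, use $\cH'$ restricted to the sub-hypergraph): since $\oP$ sits inside $\oQ'$ as a section hypergraph, restricting a homomorphism $\oQ'\to\oQ$ to the vertices of $\oP$ gives a homomorphism $\oP\to\oQ$; let $\cH$ be the resulting pushforward of $\cH'$. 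The hitting event for $\oP$ is implied by the hitting event for $\oQ'$ (fewer edges to place in $S$), so the bound $\ge\mu(S)^{2^k}\ge\mu(S)^{2^{k}}$ is preserved with the \emph{same} exponent --- a collapse does not increase the doubling count, consistent with the statement. For the atom bound, each $f\in\Hom(\oP,\oQ)$ is the restriction of at least one $f'\in\Hom(\oQ',\oQ)$ (extend using the collapse homomorphism $g$ composed appropriately, since $g$ is identity on $\oP$), so $\cH(f)\ge\cH'(f')\ge 1/M^{2^k}$.

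\textbf{Doubling case (the main obstacle).} This is where the exponent doubles from $C$ to $2C$. Suppose $\oP\cup\oQ^{*}$ is obtained from $\oP$ (constructible with $k$ doublings, parameter $C=2^k$) by doubling $Q^{(1)}\cup\cdots\cup Q^{(r)}$; the new graph uses $k+1$ doublings so needs exponent $2C$. I would take $\cH'$ from the inductive hypothesis for $\oP$ and define the new distribution $\cH$ on $\Hom(\oP\cup\oQ^{*},\oQ)$ by drawing \emph{two independent} samples $f_{\mathrm{old}},f_{\mathrm{new}}\sim\cH'$ and using $f_{\mathrm{old}}$ to decide the images of fixed$+$old vertices and $f_{\mathrm{new}}$ to decide the images of (fixed$+$)new vertices --- crucially these agree on the fixed vertices only with some probability, so one must condition on that agreement, or better, first sample the images of the fixed vertices and then sample the old-part and new-part images conditionally independently given the fixed images. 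The hitting probability then factors: conditioned on the images of $P$, the event ``all old-edges land in $S$'' and ``all new-edges land in $S$'' are independent (since, by the observation after Definition~\ref{def:constructible-hypergraphs}, no hyperedge touches both old and new vertices), each occurring with conditional probability at least (conditional $\mu$)$^{C}$ by induction; multiplying and then averaging over the fixed-vertex images via a convexity/H\"older step gives the combined bound $\mu(S)^{2C}$. The delicate point --- and what I expect to be the crux --- is correctly handling the conditioning on the fixed vertices so that the two halves are genuinely conditionally independent while each half still inherits the inductive hitting bound; I would likely formalize this by working with the conditional distribution of $\uf$ given the coordinatewise images of the fixed vertices and invoking the inductive statement applied to the appropriate ``slices'' of $S$. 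The atom bound $\cH(f)\ge 1/M^{2C}=1/M^{2^{k+1}}$ follows since a homomorphism of the doubled graph is (essentially) a pair of homomorphisms of $\oP$ agreeing on fixed vertices, so its probability is at least $(1/M^C)^2$.

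Finally, combining Lemma~\ref{lem:constructible-is-good-induction} (with $\oP=\oQ$, noting the identity is always a homomorphism so $\cH(\Id)\ge 1/M^C$) with Lemma~\ref{lem:probabilistic-good} and Lemma~\ref{lem:good-implies-pr} yields $\omega_{\oQ}(n)\le\omega_{\oQ}^{\good}(n)\le 3\exp(-n/M^{2^{k+1}})$, which is Theorem~\ref{thm:constructible-is-good-multi}.
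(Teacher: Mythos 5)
Your proposal is correct and follows essentially the same route as the paper: uniform distribution over $\Hom(\oP,\oQ)\cong\oQ$ in the base case, pushforward under restriction (with the atom bound via precomposition with the collapse homomorphism) for collapses, and for doubling the construction of sampling the fixed-vertex images first and then the old and new parts conditionally i.i.d.\ given them. The ``convexity/H\"older step'' you flag as the crux is precisely the paper's argument: since no hyperedge meets both old and new vertices, the joint hitting probability conditioned on $\uf_A$ is the square of the one-half conditional probability, and Jensen's inequality gives $\EE[X^2]\ge \EE[X]^2$, where $\EE[X]$ is exactly the $\oP_0$ hitting probability --- no separate inductive statement on conditional slices of $S$ is needed.
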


This lemma is inspired by the paper \cite{HHM15} in the following way:
Let $f$ be a random homomorphism sampled according to $\cH$ and
let $\oP = \{\op^{(1)}, \ldots, \op^{(k)}\}$. We can think of $\cH$
as a $k$-step random process with the steps given by
$f(\op^{(1)}), \ldots, f(\op^{(k)})$.
Then, the first condition in Lemma \ref{lem:constructible-is-good-induction}
is equivalent to saying that $\cH$ is polynomially same-set hitting
as defined in \cite{HHM15}.

Later we will apply Lemma \ref{lem:constructible-is-good-induction} with
$\oP = \oQ$.

\begin{proof}
The proof proceeds by induction on the structure of $\oP$.
To achieve the constant $C$ as claimed, we need to show the base case with 
$C = 1$ and then argue that a collapse preserves $C$ and that a doubling 
increases $C$ at most twice.

\begin{enumerate}
\item
If $\oP$ is a single hyperedge, then $\Hom(\oP, \oQ)$ is isomorphic to $\oQ$. 
Setting $\cH(f_{\oq}) := 1 / M$ for $\oq \in \oQ$ one can
easily see that both $1$ and $2$ are satisfied with $C = 1$.
\item
Assume that $\oP$ was constructed by doubling a hypergraph $\oP_0$.
Let $A$ be the set of fixed vertices, $B$ the old vertices and $B'$ the new vertices
(regardless of the player they belong to).
Therefore the vertex set of $\oP_0$ is $A \cup B$ and the vertex set of $\oP$
is $A \cup B \cup B'$.

We are going to write homomorphisms $f \in \Hom(\oP_0, \oQ)$ as 
$f = (f_A, f_B)$ and $f \in \Hom(\oP, \oQ)$ as $f = (f_A, f_B, f_{B'})$.

Observe that
\begin{align}
  \Hom(\oP, \oQ) =& \big\{ 
  (f_A, f_B, f_{B'}): (f_A, f_B) \in \Hom(\oP_0, \oQ) 
  \nonumber 
  \\ \label{eq:05a} &              
  \quad \land (f_A, f_{B'}) \in \Hom(\oP_0, \oQ)
  \big\} \; ,
\end{align}
where we abused the notation in the expression $(f_A, f_{B'})$:
this is justified from the definition of the doubling operation.

By induction, there exists a distribution $\cH_0$ on $\Hom(\oP_0, \oQ)$ satisfying
$1$ and $2$ for some $C_0 > 0$. Let $H = (H_A, H_B)$ be a random variable
distributed according to $\cH_0$. Define:
\begin{IEEEeqnarray}{rCl}
  \cH(f_A, f_B, f_{B'}) &:=&
\Pr[ H_A = f_A ] \cdot
  \Pr[ H_B = f_B \mid H_A = f_A] \nonumber \\
  \label{eq:06a} &&
  \cdot \Pr [ H_B = f_{B'} \mid H_A = f_A ] \; .
\end{IEEEeqnarray}

By (\ref{eq:05a}), (\ref{eq:06a}) defines a probability 
distribution. Furthermore:
\begin{align*}
\cH(f_A, f_B, f_{B'}) \ge \cH_0(f_A, f_B) \cdot \cH_0(f_A, f_{B'}) \ge
\epsilon^{2C_0} \; .
\end{align*}

As for condition $1$, let $\oE_A$ be the fixed hyperedges of $\oP_0$
(i.e., those that have all their vertices in $A$)
and $\oE_B$ and $\oE_{B'}$ be the hyperedges of $\oP$ that have vertices
incident to $B$ and $B'$, respectively. Note that $\oE_A$ and $\oE_B$ form
a partition of $\oP_0$ and $\oE_A$, $\oE_B$ and $\oE_{B'}$ form a partition
of $\oP$. 

Recall that $\underline{f} = (f_1, \ldots, f_n)$ is a random vector
with coordinates sampled i.i.d.~from $\cH$.
We are going to decompose 
$\underline{f} = (\underline{f}_A, \underline{f}_B, \underline{f}_{B'})$
in the natural way.
Fix $S \subseteq \oQ^n$ and 
define the event $\cE :\equiv \forall \op \in \oE_A: \underline{f}(\op) \in S$.

We estimate, using Jensen's inequality in \eqref{eq:07a}:
\begin{IEEEeqnarray*}{rCl}
  \IEEEeqnarraymulticol{3}{l}{
  \Pr \left[ \forall \op \in \oP: \underline{f}(\op) \in S \right]
  }\\ & = & 
  \EE \left[
    \EE \left[
      \mathbbm{1}_\cE \cdot \Pr \left[
        \forall \op \in \oE_B \cup \oE_{B'}: \underline{f}(\op) \in S
      \mid \underline{f}_A \right] 
    \mid \underline{f}_A \right]
  \right]
  \\ & = &
  \EE \left[
    \EE \left[
      \mathbbm{1}_\cE \cdot \Pr \left[
        \forall \op \in \oE_B: \underline{f}(\op) \in S
      \mid \underline{f}_A \right]^2 
    \mid \underline{f}_A \right]
  \right]
  \\ & = & 
  \EE \left[
    \EE \left[
      \left( \mathbbm{1}_\cE \cdot \Pr \left[
        \forall \op \in \oE_B: \underline{f}(\op) \in S
      \mid \underline{f}_A \right] \right)^2 
    \mid \underline{f}_A \right]
  \right]
  \\ & = &
  \EE \left[
    \EE \left[
      \mathbbm{1}_\cE \cdot \Pr \left[
        \forall \op \in \oE_B: \underline{f}(\op) \in S
      \mid \underline{f}_A \right] 
    \mid \underline{f}_A \right]^2
  \right]
  \\ & \ge &
  \EE \left[
    \EE \left[
      \mathbbm{1}_\cE \cdot \Pr \left[
        \forall \op \in \oE_B: \underline{f}(\op) \in S
      \mid \underline{f}_A \right] 
    \mid \underline{f}_A \right]
  \right]^2
  \IEEEyesnumber \label{eq:07a}
  \\ & = &
  \Pr \left[
    \forall p \in P_0: \underline{f}(p) \in S
  \right]^2
  \ge
  \mu^{2C_0} \; .
\end{IEEEeqnarray*}
\item
The last case considers $\oP$ constructed by collapsing
some $\oP_0$. Let $A$ be the vertex set of $\oP$ and $A \cupdot B$ the vertex
set of $\oP_0$.
Let $h \in \Hom(\oP_0, \oP)$ be a homomorphism that defines this collapse.

By induction, there exists a distribution $\cH_0$ on $\Hom(\oP_0, \oQ)$ 
satisfying properties $1$ and $2$ for some $C_0$. For $f \in \Hom(\oP, \oQ)$, 
define
\begin{align*}
\cH(f) := \sum_{\substack{g \in \Hom(\oP_0, \oQ)\\g_A = f}} \cH_0(g) \; .
\end{align*}

Since a restriction of a homomorphism is a homomorphism, $\cH$ indeed is a
probability distribution. Furthermore, since
$\cH(f) \ge \cH(h \circ f) \ge \epsilon^{C_0}$, condition $2$ is satisfied.

Finally, let $\underline{f}_0$ be a vector of question
homomorphisms sampled i.i.d.~from $\cH_0$
and recall that vector $\underline{f}$ is sampled
i.i.d.~from $\cH$. To establish condition 1, we check that
\begin{align*}
\Pr [ \forall \op \in \oP: \underline{f}(\op) \in S ]
& = \Pr [ \forall \op \in \oP: \underline{f}_0(\op) \in S]
\\ & \ge \Pr [ \forall \op \in \oP_0: \underline{f}_0(\op) \in S] \ge \mu^{C_0} \; .
\end{align*}
\end{enumerate}
\end{proof}

\subsubsection{Putting things together}

\begin{proof}[Proof of Theorem~\ref{thm:constructible-is-good-multi}]
Let $\oQ$ be an $r$-prover question set constructible by conditioning
using $k$ doublings with $\left|\oQ\right| = M$.

By Lemma \ref{lem:constructible-is-good-induction} applied for $\oP = \oQ$,
Lemma \ref{lem:probabilistic-good} and Lemma \ref{lem:good-implies-pr},
\begin{align*}
\omega_{\oQ}(n) \le
\omega_{\oQ}^{\good}(n) \le
3\exp\left( -n / 2^kM^{2^k}\right) \le
3\exp\left( -n / M^{2^{k+1}}\right) \; .
\end{align*}
Since $3\exp(-\alpha n) \le \exp(-\alpha n/2)$ for $n$ big enough, this
implies that $\oQ$ admits exponential parallel repetition.
\end{proof}

In particular, this recovers exponential parallel repetition
for free games:

\begin{corollary}
\label{cor:free-pr}
Let $\cG$ be a free $r$-prover game with $2^k$ questions available to each
prover. If $\cG$ is non-trivial, then
\begin{align*}
\val(\cG^n) \le 3\exp(-n / M^{2M}) \; .
\end{align*}
\end{corollary}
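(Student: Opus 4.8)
The plan is to reduce this to Theorem~\ref{thm:constructible-is-good-multi} by exhibiting the question set of a free game as a hypergraph constructible by conditioning with few doublings. A free $r$-prover game has question set $\oQ = Q^{(1)} \times \cdots \times Q^{(r)}$, i.e., $\oQ$ is the complete hypergraph on $Q^{(1)}, \ldots, Q^{(r)}$, where by assumption $\left|Q^{(j)}\right| = 2^k$ for every $j$ and hence $M := \left|\oQ\right| = 2^{rk}$.

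First I would invoke Claim~\ref{cl:full-constructible} together with the refinement in Remark~\ref{rem:full-constructible}: when each vertex class has size $2^k$, the complete hypergraph can be built using only $rk$ doublings. Concretely one runs the staged construction from the proof of Claim~\ref{cl:full-constructible}, but within each of the $r$ stages one doubles the current class of size $2^j$ to size $2^{j+1}$ with a single doubling operation rather than adding vertices one at a time; so each of the $r$ stages costs $k$ doublings, for a total of $rk$.

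Then I would apply Theorem~\ref{thm:constructible-is-good-multi} to this construction: since $\oQ$ is constructible by conditioning using $rk$ doublings and $\left|\oQ\right| = M$, the theorem gives
\[
  \omega_{\oQ}(n) \le 3\exp\left(-n/M^{2^{rk+1}}\right) \; .
\]
Finally, since $M = 2^{rk}$ we have $2^{rk+1} = 2 \cdot 2^{rk} = 2M$, hence $M^{2^{rk+1}} = M^{2M}$; and because $\cG$ is a non-trivial game with question set $\oQ$ we have $\val(\cG^n) \le \omega_{\oQ}(n)$ by definition of $\omega_{\oQ}$. Chaining these inequalities yields the claimed bound $\val(\cG^n) \le 3\exp\left(-n/M^{2M}\right)$.

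The only real content beyond bookkeeping is the observation underlying Remark~\ref{rem:full-constructible} that a doubling can grow a vertex class geometrically ($2^j \to 2^{j+1}$) rather than linearly; everything else is substituting the doubling count into Theorem~\ref{thm:constructible-is-good-multi} and checking the arithmetic identity $2^{rk+1} = 2M$. I therefore expect the main (and quite mild) obstacle to be spelling out the $rk$-doubling construction of the complete hypergraph in enough detail to be confident the exponent is exactly $2^{rk+1}$.
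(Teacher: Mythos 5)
Your proposal is correct and follows the paper's own proof essentially verbatim: invoke Remark~\ref{rem:full-constructible} to construct the complete hypergraph with $rk$ doublings, apply Theorem~\ref{thm:constructible-is-good-multi}, and simplify the exponent via $M = 2^{rk}$ so that $2^{rk+1} = 2M$. Your added sketch of why $rk$ doublings suffice (doubling an entire class $2^j \to 2^{j+1}$ in one step) is exactly the observation the paper leaves implicit in that remark.
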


\begin{proof}
By Remark \ref{rem:full-constructible}, the question set of game
$\cG$ can be constructed using $rk$ doublings.
The bound then follows from Theorem \ref{thm:constructible-is-good-multi}:
\begin{align*}
  \val(\cG^n) \le \omega_{\oQ}(n) \le 3\exp\left( -n / M^{2^{rk + 1}} \right)
  = 3\exp\left( -n / M^{2M}\right) \; .
\end{align*}   
\end{proof}

We note that quantitatively the bound for free games from Corollary 
\ref{cor:free-pr} is much worse than the best known one by Feige \cite{Fei91},
which is $\exp\left(-\Omega\left(n / M \log M\right) \right)$.

\section{Constructing Graphs with Treewidth Two}
\label{sec:treewidth-two}

We turn to presenting the power of our system for proving parallel repetition.
In particular, we show that all two-prover graphs with treewidth at most two
are constructible.

Since in this section we deal only with two provers, we use more standard
notation where a bipartite graph is denoted as $G = (X, Y, E)$.
We will sometimes refer to vertices from $X$ as ``on the left'' and from
$Y$ as ``on the right''.

Our main result here is Theorem \ref{thm:tw-constructible}.

\subsection{Warm-up: forests are constructible}

We start with showing that all forests are constructible, recovering the
parallel repetition result by Verbitsky \cite{Ver95}. We will later use  
Lemma~\ref{lem:construct-leaf} in the construction of series-parallel graphs.

Firstly, we note that it is only interesting to consider constructability
of connected graphs (note that to create a new connected component one
can double all vertices of an existing connected component):
\begin{claim}
\label{cl:construct-connected}
A bipartite graph $G$ is constructible by conditioning if and only if all
its connected components are constructible.
\end{claim}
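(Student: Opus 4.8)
The plan is to prove both directions of the equivalence. For the forward direction, suppose $G$ is constructible; I would argue by induction on the construction sequence that every connected component is constructible. The three operations interact with connectivity as follows: a single hyperedge is connected, so the base case is immediate. For a collapse of $\oQ$ onto $\oP$, the homomorphism witnessing the collapse is identity on $P^{(1)} \cup \ldots \cup P^{(r)}$, so it maps each connected component of $\oQ$ onto a connected subgraph of $\oP$; one checks that each component of $\oP$ is the collapse of the union of the components of $\oQ$ mapping into it, and since each of those components of $\oQ$ is constructible by the inductive hypothesis, so is the component of $\oP$. (Here one may also use the observation following Definition~\ref{def:constructible-hypergraphs} that doubling never produces hyperedges incident to both old and new vertices, together with the fact that a disjoint union of constructible hypergraphs is constructible --- see below.) The doubling case is similar but one must track that a component of $\oP \cup \oQ$ either lies entirely in the "primed copy" region or not.

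For the reverse direction, suppose every connected component $G_1, \ldots, G_m$ of $G$ is constructible; I want to show $G$ itself is. The key building block is that the disjoint union of two constructible hypergraphs is constructible: if $\oP$ is constructible, take the final doubling-style step of Definition~\ref{def:constructible-hypergraphs}(2) where \emph{all} vertices of $\oP$ are old (none fixed), producing a disjoint copy $\oP \cupdot \oP'$; more precisely, to adjoin a \emph{different} constructible graph $\oP'$ one first builds $\oP$, then runs the construction sequence of $\oP'$ "in parallel" on fresh vertices --- each operation of that sequence (single hyperedge, doubling, collapse) is still a legal operation in the ambient hypergraph since it only touches the new vertices, and single-hyperedge creation is allowed to introduce brand-new isolated vertices. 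So by iterating, $G_1 \cupdot \cdots \cupdot G_m = G$ is constructible.

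The main obstacle I anticipate is the bookkeeping in the forward direction's collapse case: one has to verify that the restriction of a constructible construction-sequence to a single component is \emph{itself} a valid construction sequence, i.e., that the operations don't secretly rely on the presence of other components. For doublings and single-hyperedge creations this is clear, but for a collapse one must check that the witnessing homomorphism restricts appropriately --- the restriction of $h : \oQ \to \oP$ to one component of $\oQ$ lands in one component of $\oP$ and is still identity on the relevant fixed vertices, so it witnesses a collapse of that sub-hypergraph. Once this "locality of construction sequences" is made precise, both directions follow routinely, so I would state it as a small lemma and then deduce Claim~\ref{cl:construct-connected} in a couple of lines.
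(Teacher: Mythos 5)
Your overall plan is sound, but the justification you give for the one step that actually needs an argument is wrong. In the ``if'' direction you reduce everything to adjoining a fresh copy of a constructible component on new vertices, and you justify the very first step of that emulation by saying that ``single-hyperedge creation is allowed to introduce brand-new isolated vertices.'' It is not: item~1 of Definition~\ref{def:constructible-hypergraphs} is the \emph{base case} of the recursion, not an operation you may apply in the middle of a construction. The only vertex-creating operation available mid-construction is doubling, and doubling can only produce copies of existing vertices with edges inherited from existing non-fixed hyperedges. This is exactly the point the paper's parenthetical hint addresses: to seed a new component you must \emph{double all vertices of an existing connected component} (taking every other vertex as fixed, so that precisely the hyperedges of that component are copied), which yields a disjoint copy of that component; you can then collapse this copy onto one of its hyperedges via a constant homomorphism (extended by the identity elsewhere), and only now do you have the disjoint ``seed edge'' from which to replay the construction sequence of the next component, fixing all previously built vertices in every doubling and extending every collapsing homomorphism by the identity. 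With that substitution your ``if'' direction is correct.

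For context, the paper offers no proof of this claim beyond that one-line hint, so your write-up is considerably more explicit than what appears in print; note also that the ``if'' direction is the only one the paper ever uses (for forests and for treewidth-two graphs). Your ``only if'' direction is the right idea but is also where the real bookkeeping lives: a connected component of the doubled graph $\oP \cup \oQ$ need not itself be obtained from a component of $\oP$ by a single doubling, because some copied hyperedges may end up in \emph{other} components (those copies consisting entirely of new vertices). You should double the relevant component of $\oP$ in full and then collapse away the stray copies onto the component you want (a constant homomorphism on each stray piece, identity on the target, does this legally since a union of components is a section hypergraph). Your treatment of the collapse case is fine, with the observation that the preimage of a component of $\oP$ is a union of components of $\oQ$, whose disjoint union is constructible by the already-established ``if'' direction and collapses onto the component in question.
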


We can always add a ``fresh'' leaf to a constructible graph $G$:
\begin{lemma}
\label{lem:construct-leaf}
If $G = (X \cupdot \{u\}, Y, E)$ is constructible, then
$G' = (X \cupdot \{u\}, \allowbreak Y \cupdot \{v\}, \allowbreak
E \cup \{(u, v)\})$ is also constructible.
\end{lemma}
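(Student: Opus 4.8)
The plan is to realize the addition of a fresh leaf $v$ (adjacent only to the existing vertex $u$) by first doubling $u$ and then collapsing back. First I would apply the doubling operation to the one-vertex set $\{u\}$ with every other vertex of $G$ fixed. Since $G$ is connected (and we may assume this by Claim~\ref{cl:construct-connected}) and has at least the edge incident to $u$, the vertex $u$ is not fixed, so doubling produces a copy $u'$ together with new edges: for every edge $(u, y) \in E$ we get an edge $(u', y)$. Call the resulting graph $G_1$; it is constructible. Now $G_1$ contains both $u$ and $u'$ with exactly the same neighborhood in $Y$.

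The second step is to collapse $G_1$ back down to something containing the structure we want. I would first double a single neighbor $y_0 \in Y$ of $u$ (with everything else, including $u$ and $u'$, fixed) to create a new vertex $v$ adjacent to both $u$ and $u'$; then collapse $u'$ onto $u$ via the homomorphism that is the identity everywhere except it sends $u' \mapsto u$ (this is a valid homomorphism since $u$ and $u'$ have identical neighborhoods, and every neighbor of $v$, namely $u$ and $u'$, maps into a neighbor of $v$), and collapse $y_0$'s copy appropriately so that only the single new edge $(u,v)$ survives. The net effect is to adjoin the leaf $v$ to $u$ while leaving the rest of $G$ untouched, i.e.\ we obtain $G' = (X \cupdot \{u\}, Y \cupdot \{v\}, E \cup \{(u,v)\})$.

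More cleanly: I would double $u$ to get $u'$, then immediately collapse via $h$ which is the identity on all of $G$ and sends $u' \mapsto u$ — but this just undoes the doubling. The trick to actually gain a leaf is to double $u$, then \emph{delete} the spurious edges $(u', y)$ for $y \neq y_0$ by an intermediate section/collapse, leaving $u'$ adjacent only to $y_0$; then $u'$ plays the role of the leaf and $y_0$ is its unique neighbor — but we wanted the leaf on the $Y$ side. Symmetrically, doubling some $y \in Y$ and pruning gives a leaf on the right. So the actual construction is: pick a neighbor $y_0 \in Y$ of $u$, double $y_0$ (all else fixed) to obtain a new vertex $v$ with $N(v) \supseteq \{u\}$ and exactly those edges $(x, v)$ for $x \in N_G(y_0)$; then collapse the larger graph $\oQ$ onto the section hypergraph $\oP$ on vertex set $X \cupdot \{u\} \cupdot Y \cupdot \{v\}$ with edge set $E \cup \{(u,v)\}$, using the homomorphism that is the identity on $X, Y, u, v$ and sends every other new vertex down. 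One checks this map is a well-defined homomorphism into $\oP$ and is the identity on the vertices of $\oP$, which is exactly what condition~3 of Definition~\ref{def:constructible-hypergraphs} requires.

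The main obstacle I anticipate is the bookkeeping in the collapse step: after doubling $y_0$ I get $v$ adjacent to \emph{every} neighbor of $y_0$, not just to $u$, and I need to verify that there genuinely exists a homomorphism from the doubled graph onto the target section graph $E \cup \{(u,v)\}$ that fixes $u$, $v$, and all original vertices. This requires that each vertex $x \in N_G(y_0) \setminus \{u\}$ (now adjacent to the copy $v$) can be mapped to a vertex of the target still adjacent to $v$ — but the only such vertex is $u$, and $x \neq u$ would have to map to $u$ while being fixed, a contradiction unless $u$ is the \emph{only} neighbor of $y_0$ with that property. The fix is to choose more carefully: double $y_0$ but then first take a section hypergraph restricting $v$'s side so that $v$ retains only the edge to $u$ before collapsing, or equivalently perform the doubling in a way (doubling $u$, pruning, re-attaching) that never creates the unwanted edges. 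I expect that chasing exactly which sequence of doublings/collapses yields precisely one new pendant edge — and writing the identity-on-$P$ homomorphism explicitly — is where the real care is needed; everything else is routine application of Definition~\ref{def:constructible-hypergraphs}.
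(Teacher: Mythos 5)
Your proposal correctly identifies the central obstacle but does not overcome it, so there is a genuine gap. The problem you name at the end is real: if you double a single neighbor $y_0$ of $u$ with everything else fixed, the copy $v$ is adjacent to \emph{all} of $N_G(y_0)$, and since a section hypergraph is determined solely by a choice of vertex subsets (it automatically contains every edge among the chosen vertices), there is no section on $X \cup \{u\} \cup Y \cup \{v\}$ whose edge set is $E \cup \{(u,v)\}$. Your suggested fixes --- ``first take a section hypergraph restricting $v$'s side so that $v$ retains only the edge to $u$,'' or ``pruning'' the spurious edges $(u',y)$ --- are not operations available in Definition~\ref{def:constructible-hypergraphs}: the only moves are doubling and collapsing onto a section via a homomorphism that is the identity on the section's vertices. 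A vertex $x \in N_G(y_0)\setminus\{u\}$ lies in the section, hence must be fixed by the collapse, and the edge $(x,v)$ survives; there is no legal way to delete it.

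The missing idea is to arrange that every unwanted neighbor of the would-be leaf is itself a \emph{new} vertex, which the collapse is then free to move. Concretely (this is the paper's proof): pick an edge $(u,w)$, fix $u$, and double \emph{all other} vertices of $G$. The new vertex $w'$ is adjacent to $u$ (since $(u,w)\in E$ and $u$ is fixed) and otherwise only to copies $x'$ of vertices in $X$ (doubling creates no edges between old and new vertices). Now collapse every new left vertex onto $u$ and every new right vertex onto $w'$; each new edge $(x',y')$ or $(u,y')$ maps to the edge $(u,w')$, so this is a homomorphism onto the section on $X\cup\{u\}\cup Y\cup\{w'\}$, whose edge set is exactly $E\cup\{(u,w')\}$. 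Your instinct to double only one vertex keeps the obstructing neighbors fixed and original, which is precisely what makes the collapse impossible.
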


\begin{proof}
Pick an arbitrary edge $(u, w)$ originating from $u$.
Fix $u$ and double all the other vertices. Then collapse all new vertices
on the left onto $u$ and all new vertices on the right onto $w'$ 
(i.e., the copy of $w$).
\end{proof}

From Claim \ref{cl:construct-connected}, iterated application of Lemma
\ref{lem:construct-leaf} and Theorem \ref{thm:constructible-is-good-multi}
we have:
\begin{theorem}
Let $G$ be a tree. Then, $G$ is constructible by conditioning.
In particular, if $G$ is interpreted as a two-prover question set,
then it admits exponential parallel repetition.
\end{theorem}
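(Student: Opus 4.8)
The plan is a straightforward leaf-peeling induction on the number of edges of $G$, using only the tools already in hand. First I would apply Claim~\ref{cl:construct-connected} to reduce to the case that $G$ is connected; a tree with no edges carries no question set and may be discarded, so assume $G$ has at least one edge. The base case is a tree with a single edge, which is literally a single hyperedge and hence constructible by the first clause of Definition~\ref{def:constructible-hypergraphs}.

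For the inductive step, let $G$ be a connected tree with $m \ge 2$ edges, and assume every tree with fewer edges is constructible. Since $G$ has at least two vertices it has a leaf $v$; let $u$ be its unique neighbor. By bipartiteness $v$ lies in one part, say $Y$ (the case $v \in X$ is symmetric, as every constructability operation treats the two parts on equal footing). Removing $v$ together with the edge $(u,v)$ yields a connected tree $G_0$ with $m-1$ edges, constructible by the induction hypothesis; the vertex $u$ is untouched, so $G_0 = (X \cupdot \{u\}, Y, E_0)$ for suitable $X$ and $E_0$. Then $G = (X \cupdot \{u\}, Y \cupdot \{v\}, E_0 \cup \{(u,v)\})$, so Lemma~\ref{lem:construct-leaf} shows $G$ is constructible. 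This closes the induction, and every tree is constructible by conditioning.

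For the ``in particular'' clause, view a tree $G$ with at least one edge as a two-prover question set; this is a legitimate question set because connectedness guarantees there are no impossible questions. Writing $M := |E(G)|$ and letting $k$ be the number of doublings used by the construction above, Theorem~\ref{thm:constructible-is-good-multi} yields $\omega_G(n) \le 3\exp(-n / M^{2^{k+1}})$, so $G$ admits exponential parallel repetition. I do not anticipate any genuine difficulty: the entire combinatorial content sits in Lemma~\ref{lem:construct-leaf}, and the remaining argument is a routine induction; the only points deserving a line of care are the symmetry between left and right leaves and the harmless degenerate case of an edgeless tree.
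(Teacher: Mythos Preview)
Your proposal is correct and follows essentially the same route as the paper: the paper's proof is the single sentence ``From Claim~\ref{cl:construct-connected}, iterated application of Lemma~\ref{lem:construct-leaf} and Theorem~\ref{thm:constructible-is-good-multi}'', and your leaf-peeling induction is precisely the unpacking of ``iterated application of Lemma~\ref{lem:construct-leaf}''. The only cosmetic difference is that you spell out the base case, the symmetry between the two sides, and the edgeless degenerate case explicitly.
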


\subsection{Treewidth and series-parallel graphs}

\begin{definition}[Treewidth]
Let $G$ be a simple graph. A \emph{tree decomposition} of $G$ is a tree
$T$, where each node (also called a \emph{bucket}) corresponds to a subset
of the vertices of $G$, with the following properties:
\begin{itemize}
  \item For each vertex $v$ of $G$, the buckets in which $v$ appears
    form a non-empty, connected subgraph of $T$.
  \item For each edge $e$ of $G$, there exists a bucket that contains both
    endpoints of $e$.
\end{itemize}
The \emph{width} of a tree decomposition of $G$ is the size of the biggest
bucket minus one.
The \emph{treewidth} of $G$ denoted by $\tw(G)$ is the smallest possible width
of a tree decomposition of $G$.
\end{definition}

We will not discuss treewidth here, referring the reader to any standard 
textbook on graph theory. We note that a connected graph has treewidth one if 
and only if it is a tree.

To characterise graphs with treewidth two, we need to introduce the notion
of \emph{generalized series-parallel graphs}.

\begin{definition}[Series-parallel graphs]
\label{def:series-parallel}
Let $G = (X, Y, E)$ be a bipartite graph and $u, v \in X \cup Y$.
We call a tuple $(X, Y, E, u, v)$ an \emph{oriented bipartite graph}. We call 
the  vertex $u$ the \emph{top} and $v$ the \emph{bottom}. 

We define the class of generalized bipartite series-parallel oriented 
(in short: series-parallel oriented) graphs
recursively:
\begin{enumerate}
\item Let $G = (\{a\}, \{b\}, \{(a, b)\})$ be a single edge. Then, both
$(G, a, b)$ and $(G, b, a)$ are series-parallel oriented graphs.

\item Let $G_1 = (X_1, Y_1, E_1, u, v)$ and
$G_2 = (X_2, Y_2, E_2, v, w)$
be series-parallel oriented graphs such that
$(X_1 \cup Y_1) \cap (X_2 \cup Y_2) = \{v\}$
and $v \in (X_1 \cap X_2) \cup (Y_1 \cap Y_2)$.

Then, $G := (X_1 \cup X_2, Y_1 \cup Y_2,  E_1 \cup E_2, u, w)$
is a series-parallel oriented graph.

We say that $G$ is a \emph{series composition} of $G_1$ and $G_2$
with $G_1$ on top and $G_2$ at the bottom.

\item Let $G_1 = (X_1, Y_1, E_1, u, v)$ and $G_2 = (X_2, Y_2, E_2, v, w)$
be series-parallel oriented graphs satisfying the same preconditions as
for the series composition.

Then, both $G := (X_1 \cup X_2, Y_1 \cup Y_2, E_1 \cup E_2, u, v)$ and
$G' := (X_1 \cup X_2, Y_1 \cup Y_2, E_1 \cup E_2, v, w)$ are series-parallel
graphs.

We say that $G$ and $G'$ are a \emph{generalized series composition} of $G_1$
and $G_2$. We say that $G_1$ is the \emph{primary graph} of $G$ and that
$G_2$ is the primary graph of $G'$.

\item Let $G_1 = (X_1, Y_1, E_1, u, v)$ and $G_2 = (X_2, Y_2, E_2, u, v)$ be 
series-parallel oriented graphs such that
$(X_1 \cup Y_1) \cap (X_2 \cup Y_2) = \{u, v\}$ and
$\{u, v\} \subseteq (X_1 \cap X_2) \cup (Y_1 \cap Y_2)$.

Then, $G := (X_1 \cup X_2, Y_1 \cup Y_2, E_1 \cup E_2, u, v)$ is also a
series-parallel oriented graph.

We call $G$ a \emph{parallel composition} of $G_1$ and $G_2$.
\end{enumerate}
We say that a bipartite graph $G$ is series-parallel if there exist vertices
$u, v$ such that $(G, u, v)$ is an oriented series-parallel graph.
\end{definition}

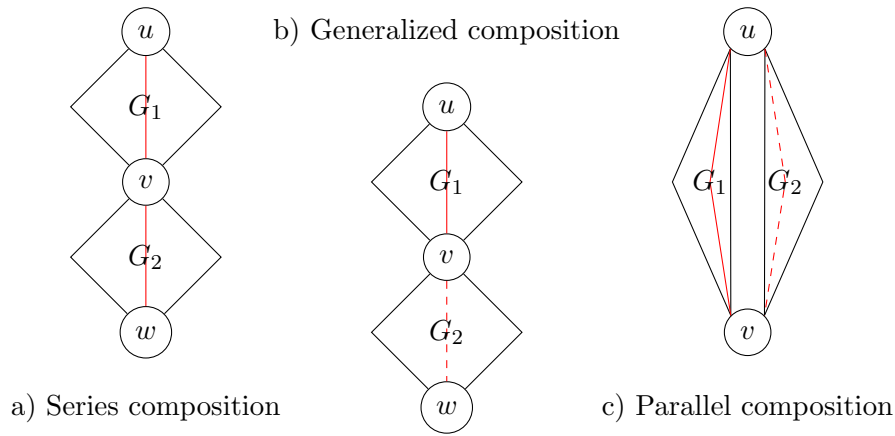
\begin{figure}[ht]\centering
\caption{Illustration of the composition operations.
The spines are drawn in continuous red. 
Note that $\cS(G_2)$ is not part of $\cS(G)$ in cases
of generalized and parallel composition,
hence the red dashed line.}
\label{fig:series-parallel}
\begin{tikzpicture}
\tikzset{every node/.style={circle, text=black}}
\node at (0, 0.5) {};

\node (ul) at (0, 0) [draw] {$u$};
\node (vl) at (0, -2) [draw] {$v$};
\node (wl) at (0, -4) [draw] {$w$};

\draw (ul) -- (-1, -1) -- (vl) -- (-1, -3) -- (wl);
\draw (ul) -- (1, -1) -- (vl) -- (1, -3) -- (wl);

\draw [red] (ul) -- (vl) -- (wl);

\node at (0, -1) {$G_1$};
\node at (0, -3) {$G_2$};
\node at (0, -5) {a) Series composition};

\node (uc) at (4, -1) [draw] {$u$};
\node (vc) at (4, -3) [draw] {$v$};
\node (wc) at (4, -5) [draw] {$w$};

\draw (uc) -- (3, -2) -- (vc) -- (3, -4) -- (wc);
\draw (uc) -- (5, -2) -- (vc) -- (5, -4) -- (wc);

\draw [red] (uc) -- (vc);
\draw [red, dashed] (vc) -- (wc);

\node at (4, -2) {$G_1$};
\node at (4, -4) {$G_2$};
\node at (4, 0) {b) Generalized composition};

\node (ur) at (8, 0) [draw] {$u$};
\node (vr) at (8, -4) [draw] {$v$};

\draw (ur.south west) -- (vr.north west);
\draw (ur.south east) to (vr.north east);
\draw (ur.south west) -- (7, -2) -- (vr.north west);
\draw (ur.south east) -- (9, -2) -- (vr.north east);

\draw [red] (ur.south west) -- (7.5, -2) -- (vr.north west);
\draw [red, dashed] (ur.south east) -- (8.5, -2) -- (vr.north east);

\node at (7.5, -2) {$G_1$};
\node at (8.5, -2) {$G_2$};
\node at (8, -5) {c) Parallel composition};
\end{tikzpicture}
\end{figure}

We refer the reader to Figure \ref{fig:series-parallel} for intuitive 
understanding of the composition operations. 

The requirement that the vertices by which the bipartite graphs are joined 
belong to the set $(X_1 \cap X_2) \cup (Y_1 \cap Y_2)$ ensures that they
belong to the same side of the graph and therefore the bipartedness is
preserved.
On the other hand, observe that the top and the bottom can lie either on the
same or the opposite sides of the bipartite graph.

In the literature the (not necessarily bipartite) graphs constructed with 
series and parallel composition are usually called series-parallel, and 
graphs that are constructed also with generalized composition are
called generalized series-parallel. Incidentally, a connected
graph is generalized series-parallel if and only if all its biconnected
components are series-parallel (see, e.g., \cite{Bod07}).

From now on, by ``series-parallel'' we will always mean the generalized
bipartite series-parallel graph from Definition \ref{def:series-parallel}.

We will use the following useful characterisation of graphs with treewidth
at most two:

\begin{theorem}
\label{thm:tw-vs-sp}
A connected bipartite graph $G$ has treewidth at most two if and only if $G$
is series-parallel.
\end{theorem}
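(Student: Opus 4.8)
The statement is the standard characterization of treewidth-$\le 2$ graphs restricted to the bipartite setting, so the plan is to reduce it to well-known facts and then carefully check that the bipartite structure survives the compositions. Recall (from the discussion preceding the statement) that a connected graph is generalized series-parallel exactly when each of its biconnected components is series-parallel, and that a (not necessarily bipartite) connected graph has treewidth at most two if and only if it is generalized series-parallel; equivalently, if and only if it contains no $K_4$ minor. So the real content is: for a \emph{bipartite} connected graph $G$, the generalized series-parallel structure can be realized entirely within the class of \emph{bipartite} oriented series-parallel graphs of Definition~\ref{def:series-parallel}, i.e.\ with the extra constraint that each gluing vertex lies on a consistent side. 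First I would handle the ``if'' direction, which is the easy one: one checks by induction on the recursive construction in Definition~\ref{def:series-parallel} that every bipartite series-parallel graph is $K_4$-minor-free (each composition glues along at most two vertices, so a $K_4$ minor would have to live inside one of the two pieces), hence has treewidth at most two.

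For the ``only if'' direction, let $G$ be a connected bipartite graph with $\tw(G)\le 2$. The plan is to induct on the number of edges. If $G$ has a cut vertex $w$, write $G$ as the union of two edge-disjoint subgraphs $G_1,G_2$ sharing only $w$; since $\tw(G_i)\le\tw(G)\le 2$ and each is bipartite with fewer edges, by induction each $G_i$ is series-parallel, and one assembles $G$ by choosing suitable top/bottom vertices and applying a \emph{generalized} series composition at $w$ (this is exactly the case the generalized operation was introduced for: attaching a series-parallel block at one vertex). The remaining case is $G$ biconnected with $\tw(G)\le 2$. Here I would invoke the classical structural fact that a biconnected graph of treewidth at most two is (ordinary) series-parallel: it can be built from a single edge by repeatedly either subdividing an edge (series) or duplicating an edge between its endpoints (parallel). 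Tracking this construction, every intermediate graph is a biconnected series-parallel graph with a designated pair of terminals $(u,v)$; the subtlety is that in a bipartite graph the two terminals may be on the same side or on opposite sides, which is precisely why Definition~\ref{def:series-parallel} allows top and bottom to lie on either side. One checks that subdivision corresponds to a series composition with a single edge and that edge-duplication (of an $X$--$Y$ edge, so between vertices on opposite sides) corresponds to a parallel composition, both of which stay inside the bipartite series-parallel class; bipartiteness is automatically preserved because every original edge of $G$ already joins $X$ to $Y$, and subdividing an edge once would break bipartiteness — so instead I would work with $G$ as a minor/topological structure and note that since $G$ itself is already bipartite, the ``ear decomposition'' of a biconnected series-parallel graph can be taken so that each added ear has the correct parity. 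Concretely: a biconnected series-parallel graph has an open ear decomposition in which $G$ is obtained from a cycle by successively adding paths (ears) whose endpoints are already present; since $G$ is bipartite every cycle is even and every ear joining two vertices has length of the correct parity, and each such ear-addition is realized by a series composition (to build the ear as a path) followed by a parallel composition (to attach it), all within the bipartite series-parallel class. Combining the biconnected case with the cut-vertex reduction via generalized composition yields that every connected bipartite $G$ with $\tw(G)\le 2$ is series-parallel in the sense of Definition~\ref{def:series-parallel}.

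\textbf{Main obstacle.} The genuinely delicate point is \emph{not} the treewidth bookkeeping — that is classical — but verifying that the recursive bipartite definition in Definition~\ref{def:series-parallel}, with its side-consistency requirement on the gluing vertices and its allowance for top and bottom to be on the same or opposite sides, is rich enough to reconstruct every $K_4$-minor-free bipartite graph. Concretely, one must make sure that when an ear of a bipartite biconnected graph is attached, its two endpoints (which may be on the same side of the bipartition, if the ear has even length, or on opposite sides, if odd) can always be accommodated as the $(u,v)$ pair of some oriented series-parallel graph, and that the parallel composition used to attach it respects $\{u,v\}\subseteq (X_1\cap X_2)\cup(Y_1\cap Y_2)$. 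I expect this to come down to a short parity argument plus careful orientation choices; the rest is an induction that mirrors standard series-parallel theory.
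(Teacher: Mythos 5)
Your proposal is correct and takes essentially the same route as the paper, which simply cites \cite{HHC99} for the classical fact that connected graphs of treewidth at most two are exactly the generalized series-parallel graphs and remarks that the bipartite restrictions of Definition~\ref{def:series-parallel} are then easy to accommodate. The side-consistency issue you flag as the main obstacle resolves even more directly than your ear/parity argument: take the classical two-terminal decomposition of each block and restrict the global bipartition $(X,Y)$ of $G$ to each piece, whereupon every gluing constraint $\{u,v\}\subseteq (X_1\cap X_2)\cup (Y_1\cap Y_2)$ holds automatically.
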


For a proof of Theorem \ref{thm:tw-vs-sp} see \cite{HHC99}. Their
proof concerns the case of general (non-bipartite) graphs, but it is easy
to see that a connected generalized series-parallel graph is bipartite if 
and only if it can be constructed with additional restrictions as in 
Definition \ref{def:series-parallel}.

\subsection{Generalized series-parallel construction}

Recall that the main theorem of this section is:
\twtwo*

Due to Theorem \ref{thm:constructible-is-good-multi}, Claim
\ref{cl:construct-connected} and Theorem \ref{thm:tw-vs-sp}, to establish
Theorem \ref{thm:tw-constructible} it is enough to show that series-parallel
graphs are constructible. We spend the rest of this section to achieve
that goal.

\begin{definition}
Let $G$ be an oriented series-parallel graph. We define its (not oriented)
subgraph $\cS(G)$ and call it its \emph{spine}. The definition follows the recursive 
pattern of Definition \ref{def:series-parallel}:
\begin{enumerate}
\item If $G$ is a single edge, its spine is the whole of $G$.
\item If $G$ is a series composition of $G_1$ and $G_2$,
then $\cS(G)$ consists of $\cS(G_1)$ and $\cS(G_2)$ taken together.
\item If $G$ is a generalized composition of $G_1$ and $G_2$ with $G_1$
as the primary graph, then $\cS(G)$ is equal to $\cS(G_1)$.
\item If $G$ is a parallel composition of $G_1$ and $G_2$ and $\cS(G_1)$ 
has no more edges than $\cS(G_2)$, then $\cS(G)$
is equal to $\cS(G_1)$. Otherwise, it is equal to $\cS(G_2)$.
\end{enumerate}
\end{definition}

Observe that the spine is always an induced path between
the top and the bottom of $G$. As a matter of fact, it is a shortest path
from top to bottom in $G$. Furthermore,
the length of the spine $L(G)$ is given as:
\begin{enumerate}
\item One, if $G$ is a single edge.
\item $L(G_1)+L(G_2)$, if $G$ is a series composition of $G_1$ and $G_2$.
\item $L(G_1)$, if $G$ is a generalized composition of $G_1$ and $G_2$ with
$G_1$ as the primary graph.
\item $\min(L(G_1), L(G_2))$, if $G$ is a parallel composition of $G_1$ and
$G_2$.
\end{enumerate}
Finally, note that if $G$ is a parallel composition of $G_1$ and $G_2$, then
due to the bipartedness $L(G_1)$ and $L(G_2)$ must have the same parity.

Recall the graph construction operations from Definition 
\ref{def:constructible-hypergraphs}. A series-parallel graph can always be
collapsed onto its spine:

\begin{lemma}
\label{lem:spine-collapse}
Let $G$ be an oriented series-parallel graph. Then, $G$ (treated as an
unoriented graph) can be collapsed onto its spine.
\end{lemma}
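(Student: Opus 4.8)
The plan is to prove Lemma~\ref{lem:spine-collapse} by induction on the recursive structure of the oriented series-parallel graph $G$, following exactly the case split of Definition~\ref{def:series-parallel}. Recall that a collapse (third operation in Definition~\ref{def:constructible-hypergraphs}) requires exhibiting a homomorphism $h$ from $G$ onto its spine $\cS(G)$ which is the identity on the vertices of the spine. Since $\cS(G)$ is an induced path from the top $u$ to the bottom $v$ of $G$, what we need is, for every series-parallel oriented graph $G$, a retraction of $G$ onto this shortest $u$--$v$ path.

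First, the base case: if $G$ is a single edge, then $\cS(G) = G$ and the identity homomorphism works. Second, the series composition case: if $G$ is a series composition of $G_1$ (on top, with endpoints $u, v$) and $G_2$ (at bottom, with endpoints $v, w$), then by induction we have retractions $h_1 \colon G_1 \to \cS(G_1)$ and $h_2 \colon G_2 \to \cS(G_2)$; since the two subgraphs share only the vertex $v$ and both retractions fix $v$, they glue to a homomorphism $G \to \cS(G_1) \cup \cS(G_2) = \cS(G)$ fixing the whole spine. Third, the generalized composition case with $G_1$ primary: here $\cS(G) = \cS(G_1)$ and we have a retraction $h_1 \colon G_1 \to \cS(G_1)$; the point is to extend it over $G_2$. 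Since $G_2$ is a series-parallel oriented graph sharing the single vertex $v$ with $G_1$ (where $v$ is one of the endpoints of $G_1$, hence lies on $\cS(G_1)$), we need a homomorphism from $G_2$ into $\cS(G_1)$ sending $v$ to $v$. This is where I would use the fact that $\cS(G_1)$ is a path containing $v$ together with at least one neighbor of $v$: a series-parallel graph is bipartite, and one can map all of $G_2$ onto the single edge of $\cS(G_1)$ incident to $v$ by the bipartition-respecting two-coloring (send $v$'s side to $v$, the other side to that neighbor) — this is a valid homomorphism because $G_2$ is bipartite with $v$ on one side. The fourth case, parallel composition of $G_1$ and $G_2$ with, say, $\cS(G) = \cS(G_1)$ (the shorter spine), is handled the same way as the generalized case: retract $G_1$ onto $\cS(G_1)$ by induction, and map the rest of $G_2$ into $\cS(G_1)$ by folding it onto a single spine edge incident to a shared endpoint, using bipartiteness and the fact that $L(G_1)$ and $L(G_2)$ have the same parity so the two shared endpoints $u, v$ land correctly.

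The main obstacle I anticipate is the bookkeeping in the generalized and parallel cases: one must check carefully that folding $G_2$ onto a single edge of the spine is consistent at \emph{both} shared endpoints (in the parallel case $u$ and $v$ are both shared), which is exactly where the parity observation $L(G_1) \equiv L(G_2) \pmod 2$ is needed to guarantee that $u$ and $v$ sit on the correct sides of the bipartition of $G_2$ to be mapped to the correct (same-side) vertices of $\cS(G_1)$. One also has to confirm that the ``fold onto one edge'' map is genuinely a graph homomorphism, i.e.\ that every edge of $G_2$ goes to the chosen spine edge and not to a non-edge — but this is immediate since any bipartite graph maps homomorphically onto a single edge via its two-coloring. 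A minor point is ensuring the spine $\cS(G)$ really is an induced subgraph (so that ``collapse onto $\cS(G)$'' makes sense as a section hypergraph), which follows from the already-noted fact that $\cS(G)$ is a shortest $u$--$v$ path; I would either cite that observation or prove it by the same induction in passing.
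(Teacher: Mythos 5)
Your base case, series case, and generalized-composition case match the paper's proof. The genuine gap is in the parallel-composition case. There the homomorphism must be the \emph{identity on the whole spine} $\cS(G)=\cS(G_1)$, and in particular it must fix \emph{both} shared endpoints $u$ and $v$, which are the two ends of the path $\cS(G_1)$. Your plan is to fold all of $G_2$ onto a single spine edge incident to one shared endpoint via the two-coloring. That map sends $v$ to one of the two endpoints of the chosen edge near $u$, not to $v$ itself, so it is not the identity on the spine whenever $L(G_1)\ge 2$; equivalently, an edge $(v,x)$ of $G_2$ would have to map to a pair $(v, u)$ or $(v, u_1)$, which is not an edge of the path. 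The parity condition $L(G_1)\equiv L(G_2)\pmod 2$ only guarantees that $v$ lands in the correct side of the bipartition, not at the correct vertex, so it cannot rescue the single-edge fold. (The single-edge fold is fine in the generalized case precisely because there $G_2$ meets the spine in only \emph{one} vertex.)

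The paper's fix is to fold onto the whole shorter path rather than onto one edge: first collapse $G_2$ onto its own spine $\cS(G_2)$ by induction (this fixes $u$ and $v$, the endpoints of $\cS(G_2)$), and then map the longer path $\cS(G_2)=(v_0=u,\dots,v_{k+2\ell}=v)$ onto the shorter one $\cS(G_1)=(u_0=u,\dots,u_k=v)$ by $f(v_i)=u_i$ for $i\le k$ and $f(v_{k+j})=u_{k-(j\bmod 2)}$ for $j\ge 1$, i.e.\ walk down $\cS(G_1)$ and then oscillate between its last two vertices. This is a path homomorphism, it fixes $u$, and because the excess length $2\ell$ is even it also fixes $v$ --- this is where the parity is actually used. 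Composing the two collapses gives the required retraction of $G$ onto $\cS(G)$. With that replacement your induction goes through; the rest of your argument, including the observation that the spine is an induced (indeed shortest) top--bottom path so that the collapse is onto a legitimate section subgraph, agrees with the paper.
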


\begin{proof}
By induction on the series-parallel structure of $G$. 
If $G$ is a single edge, it is clear.
If $G$ is a series  composition of $G_1$ and $G_2$, then by induction 
$G_1$ and $G_2$ can be collapsed onto their respective spines.

If $G$ is a generalized composition of $G_1$ and $G_2$, assume w.l.o.g.~that 
$G_1$ is the primary
graph and let $v$ be the bottom vertex of $G_1$. Then, by induction, $G_1$ can
be collapsed onto its spine $\cS(G_1) = \cS(G)$. 
On the other hand, all of $G_2$ can be collapsed
onto the edge $(v, w)$, where $w$ is the neighbor of $v$ in $\cS(G_1)$.

In case $(G, u, v)$ is a parallel composition of $G_1$ and $G_2$, assume w.l.o.g.~that
$\cS(G_1)$ is not longer than $\cS(G_2)$. Firstly, observe that
the spine $\cS(G_2)$ can be collapsed onto $\cS(G_1)=\cS(G)$: indeed, if
we write the vertices of $\cS(G_1)$ top-bottom as 
$(u_0 = u, u_1, \ldots, u_k = v)$ and analogously $\cS(G_2)$ as
$(v_0 = u, v_1, \ldots, v_{k+2\ell} = v)$, then the mapping:
\begin{align*}
  f(u_i) &:= u_i\\
  f(v_i) &:= \begin{cases}
    u_i & \text{if $i \le k$,}\\
    u_{k-(j \bmod 2)} & \text{if $i = k+j$,}
  \end{cases}
\end{align*}
is a required homomorphism.

Finally, since by induction $G_1$ and $G_2$ can be collapsed onto their spines,
and since the composition of homomorphisms is a homomorphism, $G$ can be
collapsed onto its spine.
\end{proof}

Recall that our objective is showing that every series-parallel graph is
constructible.

\begin{lemma}
\label{lem:series-parallel-is-constructible}
Let $(G, u, v)$ be an oriented series-parallel graph. Then, the spine $\cS(G)$
can be extended to $G$ using the doubling and collapsing operations.
Furthermore, the construction preserves the following invariant:
\begin{itemize}
  \item In every doubling step, the doubled vertices on the spine form
    its contiguous (possibly empty) subsegment.
\end{itemize}
\end{lemma}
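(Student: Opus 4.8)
The plan is to proceed by induction on the series-parallel structure of $(G,u,v)$, carrying along the invariant that in every doubling step the doubled spine-vertices form a contiguous (possibly empty) subsegment. The base case is a single edge, where $\cS(G) = G$ and there is nothing to construct. So assume the statement holds for $G_1$ and $G_2$ and handle the three composition operations in turn.

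For a \textbf{series composition} of $G_1$ on top and $G_2$ at the bottom, joined at $v$, the spine $\cS(G)$ is $\cS(G_1)$ followed by $\cS(G_2)$. I would start from $\cS(G)$, which already contains $\cS(G_1)$ as an initial segment; by induction extend $\cS(G_1)$ to all of $G_1$, keeping all of $\cS(G_2)$ fixed throughout (the doubled spine-vertices stay inside $\cS(G_1)$, hence contiguous in $\cS(G)$); then, symmetrically, extend $\cS(G_2)$ to all of $G_2$ while keeping $G_1$ fixed. Since $G_1$ and $G_2$ share only the vertex $v$ (which lies on both spines and is never doubled while we work on the other side), the two phases do not interfere and we obtain $G$.

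For a \textbf{generalized composition} with, say, $G_1$ the primary graph, we have $\cS(G) = \cS(G_1)$. Here I would first extend $\cS(G_1) = \cS(G)$ to all of $G_1$ by induction (keeping nothing from $G_2$ present yet, or equivalently treating the shared vertex $v$ as fixed). Then I need to \emph{attach} $G_2$, which hangs off the single vertex $v$. The idea is to build a fresh copy of $G_2$ rooted at $v$ by doubling: double all current vertices except $v$ (or rather, double a suitable part), then use collapses to fold the new block down onto the edge of $G_1$ incident to $v$ in a way that leaves a copy of $\cS(G_2)$ attached at $v$, and recurse to extend that $\cS(G_2)$-copy to a full $G_2$. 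The contiguity invariant is maintained because during the $G_2$-phase the doubled spine-vertices of $G$ form an empty subsegment (we only double vertices not on $\cS(G)=\cS(G_1)$).

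For a \textbf{parallel composition} of $G_1,G_2$ sharing endpoints $u,v$, with $\cS(G_1)$ the shorter spine (so $\cS(G)=\cS(G_1)$), the subtle point is that $\cS(G_2)$ is longer than the current spine, so $G_2$ cannot simply be "glued on" at a single vertex — it must be reattached at \emph{both} $u$ and $v$. My approach: starting from $\cS(G)=\cS(G_1)$, first extend it to all of $G_1$ by induction. To add $G_2$, recall from Lemma~\ref{lem:spine-collapse} that $\cS(G_2)$ collapses onto $\cS(G_1)$ via an explicit "fold back and forth" homomorphism; I want to run this in reverse using doublings — double the entire current spine $\cS(G_1)$ (a contiguous segment, so the invariant holds), arrange via collapses so that the new vertices form a path of the right length $k+2\ell$ between (copies identified with) $u$ and $v$, then extend that new spine to a full $G_2$ by induction. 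The main obstacle I anticipate is precisely this parallel case: the doubling operation only ever lengthens things by \emph{duplicating} existing structure, so producing a spine of length $k+2\ell$ from one of length $k$ requires iterating the doubling of the spine segment and then carefully collapsing away the excess, and one must check both that the endpoints $u,v$ end up correctly identified (respecting bipartiteness, using that $L(G_1)$ and $L(G_2)$ have the same parity) and that after the reattachment the subsequent induction on $G_2$ still sees its spine as a contiguous segment of the global spine $\cS(G)$. Getting this bookkeeping right — which vertices are fixed, old, and new at each doubling, and in what order the collapses are applied — is where the real work lies; the other two cases are comparatively routine.
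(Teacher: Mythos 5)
Your skeleton (induction on the series-parallel structure, with the three composition operations as cases) matches the paper's, but both of the places where you declare the work ``routine'' or ``where the real work lies'' contain genuine gaps.

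First, the series case does \emph{not} go through by ``keeping all of $\cS(G_2)$ fixed throughout'' while extending $\cS(G_1)$ to $G_1$. You justify non-interference by asserting that the joint vertex $w$ ``is never doubled while we work on the other side,'' but nothing in the inductive hypothesis guarantees this: the invariant only says the doubled spine vertices form a contiguous subsegment, and that subsegment may well contain an endpoint of the spine. If a doubling step of the $G_1$-construction doubles $w$ while its neighbour $z$ on $\cS(G_2)$ is fixed, the doubling semantics (Definition~\ref{def:constructible-hypergraphs}) force the spurious edge $(w',z)$ into the graph, which either survives into the final graph or obstructs the later collapse of $w'$. The paper's proof has to emulate such a step by doubling \emph{all} of $\cS(G_2)$ together with $w$ and then collapsing the fresh copy of $\cS(G_2)$ onto the edge $(y,w')$; this emulation is the substance of the series case and is what makes the contiguity invariant worth tracking.

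Second, the parallel case is essentially left open in your proposal, and the sketch you do give (``double the entire current spine $\cS(G_1)$ \ldots\ arrange via collapses so that the new vertices form a path of length $k+2\ell$'') cannot work as stated: doubling and collapsing the length-$k$ spine only produce walks that fold back onto it, so you cannot manufacture a longer \emph{induced} path between $u$ and $v$ this way. The paper instead (i) normalizes via Claim~\ref{cl:parallel-standard-form} so that the longer side $G_2$ is a series composition of $G_3$ and $G_4$; (ii) splits on the inequalities $L(G_1)+L(G_3)\gtrless L(G_4)$, handling the strict-inequality subcases by \emph{re-orienting} $G$ (taking $w,v$ as the new top and bottom so the spine strictly lengthens) and applying a secondary induction that runs in reverse on spine length; and (iii) in the remaining subcase, attaches a fresh path of length $(b-a)/2$ at $u$ via Lemma~\ref{lem:construct-leaf}, doubles everything except $v$ and the new endpoint, and collapses the copy of $G_1$ to realize $\cS(G_2)$ — after which the extensions of $\cS(G_3)$ and $\cS(G_4)$ again require the emulation device, with the final collapses depending precisely on the inequalities from (ii). None of this machinery (the standard form, the length case split, the reorientation with its modified induction measure, or the role of the inequalities) appears in your proposal, so the argument is not complete.
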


Since the spine of $G$ can be constructed by repeated application of
Lemma \ref{lem:construct-leaf}, Lemma \ref{lem:series-parallel-is-constructible}
implies what we want.
In the remainder we prove Lemma \ref{lem:series-parallel-is-constructible}
after establishing a couple of technical preliminaries.

\begin{remark}
\label{rem:direct-top-bottom}
In the proof of Lemma \ref{lem:series-parallel-is-constructible} we will
use the fact that whenever $G$ is a composition of
$G_1$ and $G_2$, the edges of $G_1$ and $G_2$ are disjoint.

This is not true if $G$ is a parallel composition 
and there exists a direct edge from top to bottom
in both $G_1$ and $G_2$, but any series-parallel $G$ can be constructed
without using this special case. 
\end{remark}

\begin{claim}
\label{cl:parallel-standard-form}
Let $(G, u, v)$ be an oriented series-parallel graph which is a parallel
composition. Then, there exists a series-parallel construction of 
$(G, u, v)$ such that its final step is a parallel composition of
$G_1$ and $G_2$ with
the following properties:
\begin{itemize}
  \item $L(G_1) \le L(G_2)$.
  \item $G_2$ is a series composition.
\end{itemize}
\end{claim}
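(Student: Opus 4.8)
The plan is to prove Claim~\ref{cl:parallel-standard-form} by structural induction on the series-parallel construction of $(G,u,v)$, using the recursive definition of spine length to control which of the two parallel components carries the shorter spine. First I would recall that $G$ being a parallel composition means there is \emph{some} construction whose last step is a parallel composition of two series-parallel graphs $H_1$ and $H_2$ sharing exactly the poles $u,v$; by Remark~\ref{rem:direct-top-bottom} I may assume this last step is not the degenerate case where both $H_i$ consist of a single top-bottom edge, so $L(H_1)+L(H_2)$ edges really are present and disjoint. By relabeling I can assume $L(H_1)\le L(H_2)$, which gives the first bullet. The remaining work is to massage $H_2$ into a series composition without changing $G$ and without violating $L(H_1)\le L(H_2)$.

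\medskip

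For the second bullet I would do a case analysis on the top-level operation producing $H_2$. If $H_2$ is already a series composition, we are done. If $H_2$ is a single edge, then $L(H_2)=1$, hence $L(H_1)\le 1$ so $L(H_1)=1$ and both spines have length one; since we excluded the fully-degenerate double-edge case, at least one of $H_1,H_2$ has an extra vertex, and I would reorganize the parallel composition so that the ``bare edge'' side is $H_1$ and the nontrivial side is $H_2$ — then $H_2$ is not a single edge and I recurse on it, or observe directly that a nontrivial series-parallel graph on poles $u,v$ with $u,v$ on opposite sides of the bipartition and of spine length one is forced (by parity and bipartedness) to be a parallel composition one of whose parts is a longer path, which can in turn be rewritten. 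If $H_2$ is itself a parallel composition, I would apply associativity/commutativity of parallel composition to merge it into the outer parallel composition, i.e.\ rewrite $G$ as a parallel composition of three (hence, re-associated, two) pieces, pushing the nontrivial structure into one component and iterating; this process terminates because each step strictly decreases the number of parallel-composition nodes at the top. If $H_2$ is a generalized composition with primary graph $G_1'$, then $\cS(H_2)=\cS(G_1')$, and I would replace $H_2$ by the honest series composition of $G_1'$ and $G_2'$ that underlies the generalized composition (a generalized composition of $G_1',G_2'$ and a series composition of $G_1',G_2'$ have the same underlying bipartite graph when the orientation poles are chosen as $u,v$), so after this substitution $H_2$ literally is a series composition and its spine length is unchanged, preserving $L(H_1)\le L(H_2)$.

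\medskip

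The main obstacle I expect is the bookkeeping in the ``$H_2$ is a single edge'' and ``$H_2$ is a parallel composition'' cases: one has to be careful that after re-associating or swapping components the two poles $u,v$ stay on the correct sides of the bipartition (the parity constraint $L(H_1)\equiv L(H_2)\pmod 2$ from the remark right before Lemma~\ref{lem:spine-collapse} is what makes this consistent), and that the inequality between spine lengths is maintained rather than flipped. A clean way to package this is to first prove the auxiliary statement that \emph{any} series-parallel graph that is not a single edge and is not a series composition can be written as a parallel composition in which at least one part is a series composition — this is essentially the claim with the length condition dropped, and it follows from a short induction on the composition tree — and then layer the $L(G_1)\le L(G_2)$ requirement on top by choosing, among the two parts, the one of smaller spine length to play the role of $G_1$ and, if necessary, further decomposing it so the other part absorbs the series structure. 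With that lemma in hand the length condition is immediate and the proof of Claim~\ref{cl:parallel-standard-form} is just an assembly of these pieces.
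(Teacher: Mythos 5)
Your overall plan---flatten the outer parallel composition and hand the series structure to the component of longer spine---matches the paper's, but your treatment of the case where a component $H_2$ is a generalized composition contains a genuine error. By Definition~\ref{def:series-parallel}, a series composition of $G_1'$ (poles $u,v$) and $G_2'$ (poles $v,w$) is oriented with top $u$ and bottom $w$; there is no ``series composition of $G_1'$ and $G_2'$ with poles $u,v$''---that oriented object is by definition the generalized composition you started from. So the substitution you propose either moves the bottom pole from $v$ to $w$, after which $H_2$ can no longer be composed in parallel with $H_1$ at the poles $u,v$, or it changes nothing. Relatedly, the spine is not preserved: the spine of the generalized composition is $\cS(G_1')$, of length $L(G_1')$, whereas the spine of the series composition is $\cS(G_1')\cup\cS(G_2')$, of length $L(G_1')+L(G_2')$. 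The condition that $G_2$ be a series composition is used later (case 4 of the proof of Lemma~\ref{lem:series-parallel-is-constructible}) exactly to extract an intermediate vertex $w$ splitting $G_2$ into $(G_3,u,w)$ and $(G_4,w,v)$ with $v$ still the bottom of $G$, so this cannot be dismissed as a relabelling of orientations. Your closing auxiliary statement inherits the same problem: a generalized composition such as the path $u-v-w$ oriented with poles $u,v$ is neither a single edge, nor a series composition, nor a parallel composition.

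The paper resolves this case by a normalization you are missing: whenever the primary graph of a generalized composition is itself a series or parallel composition, the two operations can be exchanged, so w.l.o.g.\ every generalized composition in the construction tree has a primary graph of spine length one and hence itself has spine length one. After flattening the outer parallel composition into components $H_1,\ldots,H_k$, Remark~\ref{rem:direct-top-bottom} guarantees that the component of longest spine has spine length greater than one; it is therefore not a single edge, not a (normalized) generalized composition, and by construction not a parallel composition---hence a series composition, which is taken as $G_2$, with the parallel composition of the remaining components as $G_1$. Note also that your single-edge case dissolves under the same remark: $H_2$ a single edge together with $L(H_1)\le L(H_2)=1$ would force a direct top--bottom edge in both parts, which is precisely the excluded degenerate configuration. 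With the normalization step added, the rest of your argument goes through.
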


\begin{proof}
Firstly, note that whenever $(G, u, v)$ is a generalized composition where
the primary graph is a series or parallel composition, the order of
those two compositions can be reversed without changing the final graph.
Therefore, we can assume w.l.o.g.~that whenever a graph is a generalized
composition, its primary graph has spine of length one.

Let $G$ be a parallel composition of $H'_1$ and $H'_2$. If any of $H'_1$ or
$H'_2$ is a parallel composition, recursively decompose them further until
we are left with a collection of graphs $H_1, \ldots, H_k$ which are all
series or generalized compositions or single edges.

Note that if we compose in parallel $H_1, \ldots, H_k$ in an arbitrary order,
the end result will always be $G$. 

Therefore, we can set $G_2$ 
as $H_i$ with the longest
spine and the parallel composition of the remaining $H_i$ graphs as $G_1$.
Due to Remark \ref{rem:direct-top-bottom}, the spine of $G_2$ must be longer
than one, and therefore $G_2$ must be a series composition.
\end{proof}

Figure \ref{fig:parallel-standard-form} illustrates the content of
Claim \ref{cl:parallel-standard-form}.

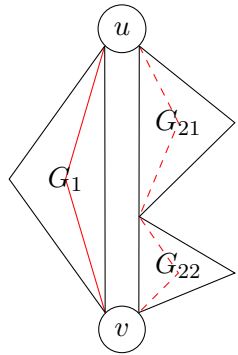
\begin{figure}[ht]\centering
\caption{The continuous red line is the spine of $G$. The dashed red line is
the spine of $G_2$.}
\label{fig:parallel-standard-form}
\begin{tikzpicture}
\node (top) at (0, 0) [draw, circle] {$u$};
\node (bot) at (0, -4) [draw, circle] {$v$};

\draw [red] (top.south west) -- (-0.75, -2) -- (bot.north west);
\draw let \p1 = (top.south east)
in [red, dashed] (\p1) -- (0.75, -1.25) -- (\x1, -2.5) -- (0.75, -3.25)
  -- (bot.north east);

\node at (-0.75, -2) {$G_1$};
\node at (0.75, -1.25) {$G_{21}$};
\node at (0.75, -3.15) {$G_{22}$};

\draw (top.south west) -- (-1.5, -2) -- (bot.north west);
\draw (top.south west) -- (bot.north west);
\draw (top.south east) -- (bot.north east);
\draw let \p1 = (top.south east)
in (\p1) -- (1.5, -1.25) --  (\x1, -2.5) -- (1.5, -3.25) -- (bot.north east);

\end{tikzpicture}
\end{figure}

\begin{proof}[Proof of Lemma \ref{lem:series-parallel-is-constructible}]
Let $(G, u, v)$ be an oriented series-parallel graph. We apply induction
on the number of vertices of $G$ and, secondarily, (in reverse) 
on the length of its spine.

\begin{enumerate}
\item
If $G$ is a single edge, there is nothing to prove (since $\cS(G) = G$).

\item
Assume that $(G, u, v)$ is a series composition of $(G_1, u, w)$ and
$(G_2, w, v)$. Recall that we need to extend $\cS(G)$ to $G$. We do
it in two stages, first extending $\cS(G_1)$ to $G_1$ and then
extending $\cS(G_2)$ to $G_2$.

By induction, we know how to extend $\cS(G_1)$ to $G_1$. Now we will
adapt this sequence of operations to the fact that also $\cS(G_2)$ is present
in the graph. We do it as follows:
\begin{itemize}
\item
  Leave all the collapsing operations as they are (it is always possible
  to collapse onto a bigger graph).
\item
  For doubling operations that keep the vertex $w$ fixed, 
  keep all of $\cS(G_2)$ fixed.
\item
  Finally, let us handle the doubling operations that double the vertex $w$.
  Let $x$ be the neighbour of $w$ on the spine $\cS(G_1)$ and let $y$ be $x$
  in case $x$ is fixed and $x'$ in case $x$ is doubled. Note that the
  edge $(y, w')$ is present in $G_1$ after doubling.

  To emulate this operation in $G$ we double all of $\cS(G_2)$ together with
  $w$ and then collapse the new copy of $\cS(G_2)$ onto the edge $(y, w')$.
\end{itemize}
\begin{figure}\centering
\caption{Handling series decomposition in case $w$ and $x$ are doubled.}
\label{fig:serial}
\begin{tikzpicture}
\tikzset{every node/.style={circle, text=black}}
\node at (0, 0.5) {};

\node (ul) at (0, 0) [draw] {$u$};
\node (xl) at (0, -3) [draw] {$x$};
\node (wl) at (0, -4) [draw] {$w$};
\node (vl) at (0, -6) [draw] {$v$};

\draw (ul) -- (-2, -1.5) -- (wl);
\draw (ul) -- (2, -1.5) -- (wl);

\draw [red] (ul) -- (xl);
\draw [red, thick] (xl) -- (wl);
\draw [red] (wl) -- (vl);

\node at (0, -1.5) {$G_1$};

\node (ur) at (5, 0) [draw] {$u$};
\node (xr) at (5, -3) [draw] {$x$};
\node (wr) at (5, -4) [draw] {$w$};
\node (vr) at (5, -6) [draw] {$v$};
\node (xr') at (5.5, -2) [draw, label=center:$x'$] {\phantom{$x$}};
\node (wr') at (6.5, -3) [draw, label=center:$w'$] {\phantom{$w$}};
\node (vr') at (6.5, -6) [draw, label=center:$v'$] {\phantom{$v$}};

\draw (ur) -- (3, -1.5) -- (wr);
\draw (ur) -- (7, -1.5) -- (wr);

\draw [thick, red] (xr') -- (wr');
\draw [red] (ur) -- (xr);
\draw [red, thick] (xr) -- (wr);
\draw [red] (wr) -- (vr);
\draw [red] (wr') -- (vr');
\draw [thick, green, double] (6.5, -4.5) to [out=0, in=270] (7.75, -3);
\draw [thick, green, double, ->] (7.75, -3) to [out=90, in=45] (6, -2.5);

\node at (5, -1.5) {$G'_1$};
\end{tikzpicture}
\end{figure}
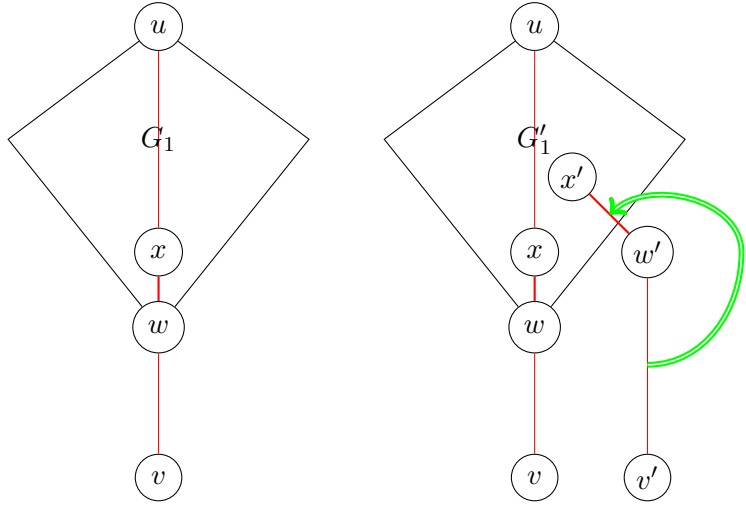
Consult Figure
\ref{fig:serial} for the illustration of one of the cases.

It is easy to see that as a result of this emulation we extend $\cS(G)$
to a series composition of $G_1$ and $\cS(G_2)$.

Now we proceed in the same way to extend $\cS(G_2)$ to $G_2$. The only 
difference is that in case $w$ is doubled we need to double and collapse all 
of $G_1$ instead of just $\cS(G_1)$. This does not pose a problem though,
since $G_1$ can be collapsed onto $\cS(G_1)$ which then can be collapsed
as previously.

Finally, one easily checks that the ``contiguous subsegment'' invariant
of Lemma \ref{lem:series-parallel-is-constructible} is preserved in this
construction.

\item If $(G, u, v)$ is a generalized composition, assume w.l.o.g.~that
it is a composition of the primary graph $(G_1, u, v)$ and $(G_2, v, w)$.
Using Lemma \ref{lem:construct-leaf} we can extend $\cS(G_1)$ to 
$\cS(G_1) \cup \cS(G_2)$ and then proceed as in the series composition case.

\item
Assume that $(G, u, v)$ is a parallel composition of $(G_1, u, v)$ and
$(G_2, u, v)$. By Claim \ref{cl:parallel-standard-form}, we can also assume
that $G_2$ is a series composition of $(G_3, u, w)$ and $(G_4, w, v)$
and that $L(G_1) \le L(G_2)$. In this point we address a subcase where
additionally:
\begin{align}
\label{eq:01a}
L(G_1) + L(G_3) < L(G_4) \; .
\end{align}

$(G, u, v)$ is the parallel composition of $(G_1, u, v)$ and 
the series composition of 
$(G_3, \allowbreak u, \allowbreak w)$ 
and $(G_4, w, v)$. Observe that we can 
also obtain 
$(G, w, v)$ as the parallel composition of $(G_4, w, v)$ and the series \
composition of  $(G_3, w, u)$ and $(G_1, u, w)$. This is illustrated in Figure 
\ref{fig:parallel-rotation}. Furthermore, due to (\ref{eq:01a}) we have
that $L(G, w, v) = L(G_3) + L(G_1) > L(G_1) = L(G, u, v)$.

\begin{figure}\centering
\caption{Rotating $G$ which is a parallel composition.}
\label{fig:parallel-rotation}
\begin{tikzpicture}
\node (0, 0.5) {};

\node (ul) at (0, 0) [draw, circle] {$u$};
\path let \p1=(ul.south east) in node (wl) at (\x1, -1.5) [draw, circle] {$w$};
\node (vl) at (0, -4) [draw, circle] {$v$};

\draw [red] (ul.south west) -- (-0.75, -2) -- (vl.north west);
\draw [red, dashed] (ul.south east) -- (0.75, -0.75) -- (wl) -- 
  (0.75, -2.75) -- (vl.north east);

\node at (-0.75, -2) {$G_1$};
\node at (0.75, -0.75) {$G_{3}$};
\node at (0.75, -2.75) {$G_{4}$};

\draw (ul.south west) -- (-1.5, -2) -- (vl.north west);
\draw (ul.south west) -- (vl.north west);
\draw (ul.south east) -- (wl) -- (vl.north east);
\draw (ul.south east) -- (1.5, -0.75) -- (wl) -- (1.5, -2.75) -- 
(vl.north east);

\node (wr) at (5, 0) [draw, circle] {$w$};
\path let \p1=(wr.south west) in node (ur) at (\x1, -1.5) [draw, circle] {$u$};
\node (vr) at (5, -4) [draw, circle] {$v$};

\draw [red] (wr.south west) -- (4.25, -0.75) -- (ur) -- (4.25, -2.75) --
  (vr.north west);
\draw [red, dashed] (wr.south east) -- (5.75, -2) -- (vr.north east);

\node at (4.25, -0.75) {$G_3$};
\node at (4.25, -2.75) {$G_1$};
\node at (5.75, -2) {$G_4$};

\draw (wr.south west) -- (3.5, -0.75) -- (ur) -- (3.5, -2.75) -- 
  (vr.north west);
\draw (wr.south west) -- (ur) -- (vr.north west);
\draw (wr.south east) -- (vr.north east);
\draw (wr.south east) -- (6.5, -2) -- (vr.north east);
\end{tikzpicture}
\end{figure}
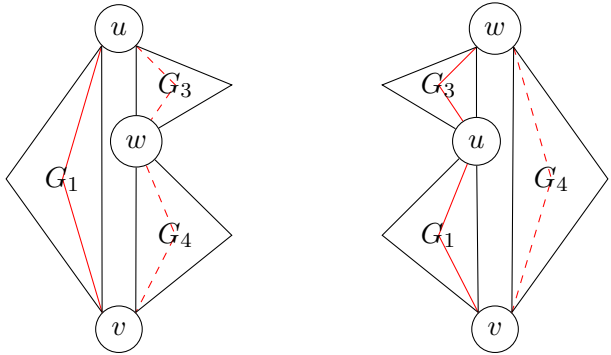

To extend $\cS(G, u, v) = \cS(G_1)$ we proceed as follows:
first, add $\cS(G_3)$ on top of $\cS(G_1)$ using Lemma \ref{lem:construct-leaf}.
Then, extend $\cS(G_3) \cup \cS(G_1) = \cS(G, w, v)$ to $G$ using induction
(which is applicable since the length of the spine increased).

Again, one easily checks that the contiguous subsegment invariant is preserved
in this construction.

\item
If $G$ is a parallel composition and $L(G_1)+L(G_4) < L(G_3)$, we proceed
symmetrically as in case 4.

\item
Finally, let $(G, u, v)$ be a parallel composition and:
\begin{align}
L(G_1) + L(G_3) \ge L(G_4) \label{eq:02a} \; , \\
L(G_1) + L(G_4) \ge L(G_3) \label{eq:03a} \; .
\end{align}
Again, we proceed in stages successively building $(G_1, u, v)$, 
$(G_3, u, w)$ and 
$(G_4, w, v)$ using induction.

We start with $\cS(G) = \cS(G_1)$, which we need to extend to $G$.
First, by induction we extend $\cS(G_1)$ to $G_1$. Next, we add 
$\cS(G_2) = \cS(G_3) \cup \cS(G_4)$ as follows: let $a := L(G_1)$
and $b := L(G_2)$. Recall that $b \ge a$ and that $a$ and $b$ have the same 
parity.

Using Lemma \ref{lem:construct-leaf}, add a path of length $(b-a)/2$ starting
from $u$ and let $x$ be the endpoint of this path. Fix $v$ and $x$
and double all the other vertices. Finally, collapse the resulting 
copy of $G_1$ onto the path from $v$ to $u'$.

In the next stage, we work with the sequence that extends $\cS(G_3)$ to $G_3$. 
We need to adapt it to additional edges we have in the graph. This is done as
follows:
\begin{itemize}
\item All collapsing operations stay the same.
\item Doubling operations that keep both $u$ and $w$ fixed fix all the
vertices of $G_1$ and $\cS(G_4)$.
\item In case at least one of $u$ and $w$ is doubled the arguments are 
very similar to each other. Therefore we present only the one where
$u$ is doubled and $w$ is fixed. See Figure \ref{fig:parallel-doubling}
for a graphical illustration.

\begin{figure}\centering
\caption{Handling parallel decomposition when segment from $u$ to $y$ is
doubled. For clarity, $G_1$ and $G_4$ are drawn as spines only. 
The blue path is collapsed onto the green path.}
\label{fig:parallel-doubling}
\begin{tikzpicture}
\node at (0, 0.5) {};

\node [draw, circle] (ur) at (0, 0) {$u$};
\node [draw, circle] (vr) at (0, -4) {$v$};
\node [draw, circle] (wr) at (1, -3.5) {$w$};
\node [draw, circle] (xr) at (1.25, -2) {$x$};
\node [draw, circle] (yr) at (1, -1) {$y$};
\node [draw, circle, label=center:$y'$] (yr') at (1.75, -1.25) {\phantom{$y$}};
\node [draw, circle, label=center:$u'$] (ur') at (1.75, -0.35) {\phantom{$u$}};
\node [draw, circle, label=center:$v'$] (vr') at (2, -4) {\phantom{$v$}};

\draw [very thick] (ur) -- (wr) -- (3.5, 0) -- (ur);
\draw [red] (ur) -- (vr) -- (wr);
\draw [green] (wr)  -- (xr);
\draw [red, thick] (xr) -- (yr);
\draw [red] (yr) -- (ur);
\draw [green, thick] (xr) to (yr');
\draw [green] (yr') -- (ur');
\draw [blue] (ur') .. controls (5, 1) .. (vr'); 
\draw [blue] (vr') -- (wr);
\draw [thick, double, green, ->] (3.1, -2) -- (1.6, -1.75);

\node at (2.6, -0.5) {$G'_3$};

\node [draw, circle] (ul) at (-5, 0) {$u$};
\node [draw, circle] (vl) at (-5, -4) {$v$};
\node [draw, circle] (wl) at (-4, -3.5) {$w$};
\node [draw, circle] (xl) at (-3.75, -2) {$x$};
\node [draw, circle] (yl) at (-4, -1) {$y$};

\draw [very thick] (ul) -- (wl) -- (-1.5, 0) -- (ul);
\draw [red] (ul) -- (vl) -- (wl);
\draw [red] (wl)  -- (xl);
\draw [red, thick] (xl) -- (yl);
\draw [red] (yl) -- (ul);

\node at (-2.4, -0.5) {$G_3$};
\end{tikzpicture}
\end{figure}
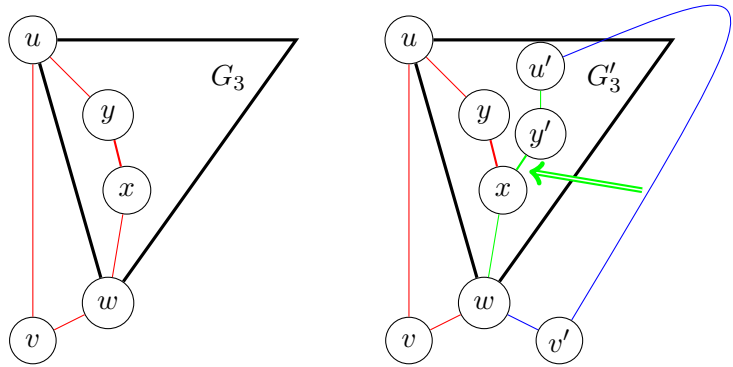

By inductive assumption, we know that a contiguous subpath of the spine 
$\cS(G_3)$ is doubled. Assume that its doubled vertices go from $u$ to $y$
and the fixed ones from $w$ to $x$ (i.e., $x$ and $y$ are neighbours on the
spine).

To emulate this case in $G$, double all vertices of $G_1$ and $\cS(G_4)$
except of $w$. Next, collapse the new copy of $G_1$ onto its spine.
Finally, collapse the resulting path $\cP_1 := u'-v'-w$ onto the copy
of $\cS(G_3)$, i.e., $\cP_2 := u'-y'-x-w$. This is possible due
to~\eqref{eq:03a}: since the path $\cP_1$ is at least as long
as $\cP_2$, $\cP_1$ can be collapsed onto $\cP_2$ as in the proof
of Lemma \ref{lem:spine-collapse}.
\end{itemize}

Finally, we construct $G_4$ from $\cS(G_4)$ in a very similar way. The only
differences are that when emulating doubling we need to perform an additional 
collapse of $G_3$ onto $\cS(G_3)$ and that we rely on the 
inequality~\eqref{eq:02a} for the final collapse.

Again, one checks that the contiguous subsegment invariant is preserved
throughout the whole process.
\end{enumerate}
\end{proof}

\section{Some Graphs Are Not Constructible}
\label{sec:non-constructible}

It is an open question if all two-prover question sets admit exponential
parallel repetition. One way to prove that they do would be to show
that all graphs are constructible by conditioning.
However, in this section we show that that is not
the case, hence another way must be found to resolve this open question:

\begin{definition}
Let $n \in \bbN$ be even and greater or equal to $8$. 
We define the \emph{cycle with shortcuts}
$\mathfrak{C}_n$ as the following simple graph:
$V(\mathfrak{C}_n) := \{0, \ldots, n-1\}$ and $\{u, v\} \in E(\mathfrak{C}_n)$
if and only if $|u-v| \in \{1,3,n-3,n-1\}$.
\end{definition}

See Figure \ref{fig:cycle_12} for a drawing of $\mathfrak{C}_{12}$. 
Observe that $\mathfrak{C}_n$ is bipartite. We show:

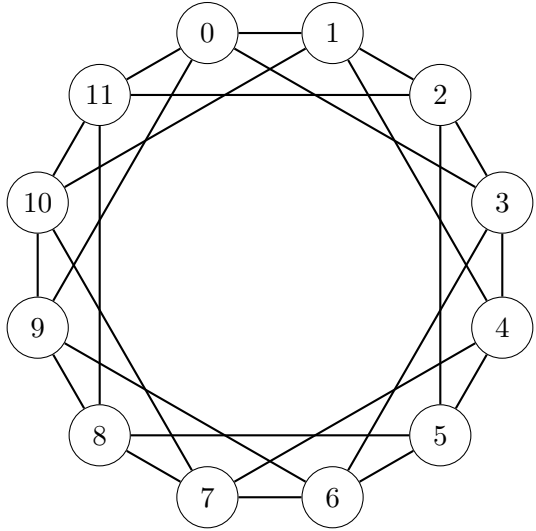
\begin{figure}\centering
\caption{A drawing of $\mathfrak{C}_{12}$.}
\label{fig:cycle_12}
\begin{tikzpicture}[scale=0.8]
\node at (0, 4.36) {};

\node [draw, circle, label=center:$0$] (u0) at (-1.04, 3.86) {\phantom{$00$}};
\node [draw, circle, label=center:$1$] (u1) at (1.04, 3.86) {\phantom{$00$}};
\node [draw, circle, label=center:$2$] (u2) at (2.83, 2.83) {\phantom{$00$}};
\node [draw, circle, label=center:$3$] (u3) at (3.86, 1.04) {\phantom{$00$}};
\node [draw, circle, label=center:$4$] (u4) at (3.86, -1.04) {\phantom{$00$}};
\node [draw, circle, label=center:$5$] (u5) at (2.83, -2.83) {\phantom{$00$}};
\node [draw, circle, label=center:$6$] (u6) at (1.04, -3.86) {\phantom{$00$}};
\node [draw, circle, label=center:$7$] (u7) at (-1.04, -3.86) {\phantom{$00$}};
\node [draw, circle, label=center:$8$] (u8) at (-2.83, -2.83) {\phantom{$00$}};
\node [draw, circle, label=center:$9$] (u9) at (-3.86, -1.04) {\phantom{$00$}};
\node [draw, circle] (u10) at (-3.86, 1.04) {$10$};
\node [draw, circle] (u11) at (-2.83, 2.83) {$11$};

\draw [thick] (u0) -- (u1) -- (u2) -- (u3) -- (u4) -- (u5) -- (u6) -- (u7) --
(u8) -- (u9) -- (u10) -- (u11) -- (u0);
\draw [thick] (u0) -- (u3) -- (u6) -- (u9) -- (u0);
\draw [thick] (u1) -- (u4) -- (u7) -- (u10) -- (u1);
\draw [thick] (u2) -- (u5) -- (u8) -- (u11) -- (u2);
\end{tikzpicture}
\end{figure}

\begin{theorem}[cf.~Theorem~\ref{thm:non-constructible-simple}]
\label{thm:cycle-non-constructible}
The cycle with shortcuts $\mathfrak{C}_{12}$ is not constructible by conditioning.
\end{theorem}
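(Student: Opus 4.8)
The plan is to assume $\mathfrak{C}_{12}$ is constructible, fix a construction of it of minimum length, and derive a contradiction from the structure of its final step combined with three elementary facts about $G := \mathfrak{C}_{12}$: (a) $G$ is $4$-regular and vertex-transitive (the maps $i\mapsto i+1$ and $i\mapsto -i$ are automorphisms); (b) $G$ is \emph{twin-free}, i.e.\ no two distinct vertices have the same neighbourhood, since $N(i)=\{i-3,i-1,i+1,i+3\}$ and a short check shows this set determines $i$; (c) $\kappa(G)=4$, and every minimum vertex cut of $G$ is a neighbourhood $N(v)$, so that $G\setminus N(v)$ splits into $\{v\}$ and a connected $7$-vertex set. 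Facts (b) and (c) are finite verifications: for (c) one uses bipartiteness and twin-freeness to rule out components of size $2$ or $3$ in $G\setminus S$ for any $|S|\le 4$. Now take a construction $\oQ_0,\dots,\oQ_m=G$ of minimum length. Not every step can be a collapse (a collapse never increases the vertex count and $\oQ_0$ has two vertices), so some step is a doubling; let $\oQ_j$ be the graph produced by the last doubling. Every step after $\oQ_j$ is a collapse, and composing those retractions exhibits $G$ as a retract-section of $\oQ_j$: there is $W\subseteq V(\oQ_j)$ with $\oQ_j[W]\cong G$ and a retraction $\rho\colon\oQ_j\to\oQ_j[W]$ (with $W=V(\oQ_j)$ when $j=m$). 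Write $\oQ_j$ as the doubling of $\oQ_{j-1}$ with fixed part $A$, old part $B$, new part $B'$ (copy map $\psi$), and put $W_A=W\cap A$, $W_B=W\cap B$, $W_{B'}=W\cap B'$.

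First I would dispose of the non-straddling case. If $W_{B'}=\emptyset$ then $W\subseteq A\cup B=V(\oQ_{j-1})$, so $G\cong\oQ_{j-1}[W]$ and $\rho$ restricts to a retraction of $\oQ_{j-1}$ onto it; hence $G$ is obtainable by a \emph{single} collapse applied to $\oQ_{j-1}$, giving a construction of length $j<m$ (here $j<m$ because $j=m$ would force $W_{B'}=B'\neq\emptyset$, the doubling being non-trivial in a minimal construction) — contradicting minimality. The case $W_B=\emptyset$ is identical after transporting through the doubling isomorphism $\oQ_j[A\cup B']\cong\oQ_{j-1}$. Therefore in the minimal construction $W_B\neq\emptyset$ and $W_{B'}\neq\emptyset$.

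In the remaining straddling case, since a doubling creates no edges between $B$ and $B'$, the induced graph $G\cong\oQ_j[W]$ has no edge between $W_B$ and $W_{B'}$, so $W_A$ is a vertex cut of $G$ separating the two nonempty sets $W_B$, $W_{B'}$; thus $|W_A|\ge 4$. The extra leverage is that, in a doubling, each new vertex $b'\in B'$ has exactly the same neighbours inside $A$ — hence inside $W_A$ — as its preimage $\psi^{-1}(b')\in B$. Feeding this into $\rho$ manufactures a twin. For example, when $|W_A|=4$, fact (c) forces $W_A=N(v)$ for a unique vertex $v$ and $\{W_B,W_{B'}\}=\{\{v\},D\}$ with $|D|=7$; say $v\in B$. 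Then the copy $v':=\psi(v)$ has all four vertices of $W_A=N(v)$ among its $\oQ_j$-neighbours, so $\rho(v')$ is $G$-adjacent to all of $N(v)$; being $4$-regular, $\rho(v')$ has neighbourhood exactly $N(v)=N(v)$, whence $\rho(v')=v$ by twin-freeness. If $v'\in W$ this already contradicts $\rho(v')=v'\neq v$; the case $v'\notin W$ is handled by pushing the same neighbourhood-matching argument through the $7$-vertex side and the action of $\rho$ on it. The degenerate instance $W=V(\oQ_j)$ (so $G$ is itself a doubled graph) is the cleanest: then $V(G)=A\sqcup B\sqcup B'$ with a bijection $B\to B'$ extending to an isomorphism $G[A\cup B]\cong G[A\cup B']$ fixing $A$, so $|B|=|B'|\le 4$ from $\kappa(G)=4$; $B$ must contain an edge (otherwise every $b\in B$ has $N(b)\subseteq A$, hence $N(b)=N(\psi(b))$, a twin); and each of $|B|\in\{2,3,4\}$ is rejected — $|B|=4$ because a minimum cut of $G$ cannot split it into two $4$-sets, and $|B|\in\{2,3\}$ by enumerating, up to $\mathrm{Aut}(G)$, the finitely many admissible triples $(A,B,B')$ and checking that the required neighbourhood-preserving bijection $B\to B'$ never exists.

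I expect the main obstacle to be exactly the straddling case with a \emph{proper} $W$: one must show that the additional freedom $\rho$ has when $W\subsetneq V(\oQ_j)$ — larger cuts $W_A$, and $W_B$ or $W_{B'}$ carrying internal edges — still cannot be realized, by arguing in every configuration that $\rho$ is forced to collapse two distinct vertices of $G$ onto a common neighbourhood, contradicting twin-freeness. By comparison, the connectivity computation (fact (c)) and the $\mathfrak{C}_{12}=\oQ_j$ sub-case are routine finite checks, and the reduction to "retract-section of the last doubled graph" is purely bookkeeping about minimal constructions.
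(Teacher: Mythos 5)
Your reduction to the last doubling is sound and closely parallels the paper's: the paper first proves a normal-form lemma (every construction can be rearranged into doublings followed by a single collapse, so that the graph before the last doubling is not already collapsible onto $\mathfrak{C}_{12}$), and your ``minimal construction + compose the trailing collapses into one retraction $\rho$'' achieves the same thing. Your non-straddling case, the facts (a)--(c), and the subcase $|W_A|=4$ with $v'\in W$ are all correct. The problem is that everything after that --- which you yourself flag as ``the main obstacle'' --- is not a proof but a hope, and it is exactly the part the paper could only settle by exhaustive computer search (three separate enumerations over all partitions of $V(\mathfrak{C}_{12})$ into the relevant classes and over all collapsing homomorphisms). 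Concretely: for $|W_A|\ge 5$ the sets $W_B,W_{B'}$ are arbitrary unions of components of $G\setminus W_A$ with internal edges, the copies in $B'$ inherit only their $A$-neighbourhoods plus copied $B$-internal edges, and $\rho$ is an arbitrary retraction of a graph you know almost nothing about; no argument is given that a twin must appear, and the sheer number of admissible partitions is why the paper resorts to \texttt{non\_empty\_b.cpp}, \texttt{non\_empty\_ad.cpp} and \texttt{natural\_collapse.cpp}.

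Moreover, the one hard subcase you do gesture at ($|W_A|=4$, $W_B=\{v\}$, $v'\notin W$) shows that your proposed contradiction mechanism is the wrong shape. There $\rho(v')=v$ is forced but is not contradictory; tracing the $7$-vertex side through $\psi^{-1}$ and $\rho$ yields an endomorphism $\phi$ of $\mathfrak{C}_{12}$ fixing $\{v\}\cup N(v)$ pointwise, and the correct conclusion (when it can be drawn) is not a twin but that $\oQ_{j-1}$ already retracts onto an induced copy of $G$ on $W_A\cup\{v\}\cup\psi^{-1}(W_{B'})$, contradicting \emph{minimality}. Even that requires two further facts you neither state nor verify: that any edge between $v$ and $\psi^{-1}(W_{B'})$ would violate $N_G(v)=W_A$ after applying $\rho$, and that every endomorphism of $\mathfrak{C}_{12}$ fixing a closed neighbourhood pointwise is the identity (non-injective endomorphisms are not excluded by twin-freeness or vertex-transitivity and must be ruled out separately). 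This dichotomy --- either a direct contradiction or a ``natural'' collapse forcing a shorter construction --- is precisely the content of the paper's Lemmas on natural collapses, and it is computer-verified there. As it stands, your argument establishes the easy reductions but leaves the essential case analysis open.
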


Since any bipartite graph $G$ joined with $\mathfrak{C}_{12}$ by a single vertex can be 
collapsed onto $\mathfrak{C}_{12}$, Theorem~\ref{thm:cycle-non-constructible}
implies the existence of an infinite family of graphs that
are not constructible.

Our proof of Theorem~\ref{thm:cycle-non-constructible} turns out to be
somewhat involved and computer-assisted. 
Before we proceed with it, we explain why another natural proof idea fails.

\subsection{``Warm-up'': constructing all induced subgraphs}
\label{sec:all-induced}

A natural idea to prove Theorem~\ref{thm:cycle-non-constructible}
would be to show for a certain graph $G$ that if it is not already present
as an induced subgraph in another graph $H$, then no doubling of $H$
can produce an induced instance of $G$. It turns out
that this approach must fail, since for every bipartite graph $G$ we
can construct a graph $H$ such that $G$ is an induced subgraph of $H$.

\begin{definition}
Let $k \ge 1$. We define the \emph{set graph} $\fS_k := (X, Y, E)$ as
follows:
\begin{itemize}
  \item $X := [k]$.
  \item $Y := \left\{ S \subset [k]: S \ne \emptyset \right\}$.
  \item $E := \left\{ (x, S): x \in S  \right\}$.
\end{itemize}
\end{definition}

\begin{theorem}
The set graph $\fS_k$ is constructible by conditioning  with $2(k-1)$ doublings.
\end{theorem}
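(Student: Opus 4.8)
The plan is to prove the statement by induction on $k$, showing that $\fS_k$ can be obtained from $\fS_{k-1}$ using exactly two doublings and no collapses. Since $\fS_1$ is a single edge, which is constructible with $0$ doublings, this yields $2(k-1)$ doublings for $\fS_k$ in total.

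For the inductive step, view $\fS_{k-1}$ as the bipartite graph with left part $[k-1]$, right part $\{S : \emptyset \ne S \subseteq [k-1]\}$, and $S$ adjacent to $j$ precisely when $j \in S$; the target $\fS_k$ has left part $[k]$ and right part the nonempty subsets of $[k]$, with the same adjacency rule. First I would double the set of right vertices $\{S : 1 \in S\}$, keeping every left vertex and every right vertex $S$ with $1 \notin S$ fixed; write $S^{(0)}$ for the old vertex and $S^{(1)}$ for its new copy, so $S^{(1)}$ is a twin of $S^{(0)}$ for each $S \ni 1$. Then I would double the vertex set $\{1\} \cup \{S^{(0)} : 1 \in S\}$, keeping everything else fixed; I rename the copy of the left vertex $1$ as the new left vertex $k$, and the copy of $S^{(0)}$ (for $S \ni 1$) as $(S \setminus \{1\}) \cup \{k\}$.

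It then remains to check that the outcome of the second doubling is exactly $\fS_k$. After that doubling the right vertices split into three families: the fixed vertices $S^{(0)}$ (for every nonempty $S \subseteq [k-1]$), the fixed vertices $S^{(1)}$ (for $S \ni 1$), and the newly created copies of $S^{(0)}$ (for $S \ni 1$). Reading off neighbourhoods from the definition of doubling: each $S^{(0)}$ keeps exactly its old neighbourhood $\{j \in S\}$ and gains no edge to $k$, so it represents the subset $A = S$ with $k \notin A$; each $S^{(1)}$ keeps its neighbourhood $\{j \in S\}$, and since it was made a twin of $S^{(0)}$ in the first step (hence a neighbour of the left vertex $1$), the copy of $1$ — namely $k$ — becomes adjacent to it, so $S^{(1)}$ represents $A = S \cup \{k\}$, which contains both $1$ and $k$; each new copy of $S^{(0)}$ is adjacent to $k$ and to $\{j \in S : j \ne 1\}$, so it represents $A = \{k\} \cup (S \setminus \{1\})$, which contains $k$ but not $1$. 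As $S$ ranges over the relevant subsets, these three families together realize, bijectively and with the correct adjacencies to $[k]$, every nonempty subset of $[k]$; hence the graph is $\fS_k$, completing the induction. (Both doublings stay within the bipartition, and one checks directly that no spurious edges — in particular no edge between an old vertex and a new one — are produced.)

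The one genuinely delicate point is forcing the new left vertex $k$ to be adjacent to precisely those nonempty subsets of $[k]$ that contain $k$. The doubling operation never creates an edge between a pre-existing vertex and a freshly introduced one, so one cannot simply add $k$ and then attach it to the right vertices it should see. The two-step construction circumvents this: the first doubling pre-installs the twins $S^{(1)}$ as neighbours of the left vertex $1$, so that when $1$ is copied in the second doubling its copy $k$ automatically inherits edges to exactly those twins together with the new copies of the vertices $S^{(0)}$ — and these turn out to be precisely the subsets of $[k]$ containing $k$. The remaining bookkeeping (disjointness and exhaustiveness of the three families of right vertices, and the asserted neighbourhoods) is a routine consequence of the doubling rule.
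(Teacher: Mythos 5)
Your proof is correct and is essentially the paper's own argument: the paper also builds $\fS_{k+1}$ from $\fS_k$ by two doublings, first duplicating the right vertices containing a distinguished left element to create twins, then doubling that left element together with the original (non-twin) copies so that the fresh left vertex inherits edges to exactly the right vertices that should contain it. The only difference is cosmetic (you pivot on the element $1$ where the paper pivots on $k$), and your case analysis of the three families of right vertices matches the paper's verification that $N(S)=S$ for every label.
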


\begin{proof}
The proof is by induction on $k$. The graph $\fS_1$ is just a single edge.
To construct $\fS_{k+1}$, start with constructing
$\fS_k$ with $2(k-1)$ doublings.

We make a preliminary point to avoid confusion.
Note that the right hand-side vertices of $\fS_k$ are labeled with
subsets of $[k]$ such that for a vertex labeled with $S$ we have that
its neighborhood is equal to its label: $N(S) = S$.
We will now perform some doublings and label the new vertices with subsets
that contain $k+1$. However, for a new vertex with a label $S$ it is not
evident that $N(S) = S$: this is what we have to prove.

After constructing $\fS_k$, 
perform a doubling as follows: double all vertices labeled with $S$ such that
$k \in S$ and label each new vertex as $S \cup \{k+1\}$.

Then, perform a second doubling: double $k$ and, again, all vertices
labeled with $S$
such that $k \in S$ and $k+1 \notin S$. This time label the copy of $k$ as $k+1$
and a copy of $S$ as $S \setminus \{k\} \cup \{k+1\}$.

Note that after the doublings $Y = \{ S \subseteq [k+1]: S \ne \emptyset\}$.
For $S \in Y$ let $N(S) := \{x \in X: (x,S) \in E\}$ be the neighborhood
of $S$. We need to check that $N(S) = S$ for every label $S$. This holds by
the following case analysis:
\begin{itemize}
  \item Each vertex labeled with $S$ such that $k+1 \notin S$ existed before
    the first doubling and its neighborhood did not change
    (since it was doubled in the second doubling in case $k \in S$). 

  \item
    Each vertex labeled with $S$ such that $\{k, k+1\} \subseteq S$ was
    created in the first doubling, at which point we had
    $N(S) = S \setminus \{k+1\}$. Then, it was fixed in the second
    doubling and $k+1$ was added to its neighborhood.

  \item Each vertex labeled with $S$ such that $k \notin S$ and $k+1 \in S$
    was created in the second doubling with $N(S) = S$.
\end{itemize}
Therefore, we can construct $\fS_{k+1}$ from $\fS_{k}$ in $2$ doublings
and $\fS_{k+1}$ from $\fS_1$ in $2k$ doublings.
\end{proof}

\begin{remark}
  A modification of this construction can be used to construct $\fS_{k,r}$
  with $X := [k]$, $Y := \{S \subseteq [k]: |S| = r\}$ and 
  $E := \{(x,S): x \in S\}$.
\end{remark}

Now we turn to the proof of Theorem~\ref{thm:cycle-non-constructible}.

\subsection{Decomposing last two steps}
\label{sec:trivial-doublings}

\begin{definition}
Let $u, v$ be two vertices arising during a construction of a bipartite
graph $G$. We write $u \sim v$ if $u$ and $v$ are adjacent. 
For two sets of vertices $A, B$, we
write $E(A, B)$ for the set of edges between $A$ and $B$.
We also write $G(A)$ for the graph induced by vertices in $A$.
\end{definition}

Note that the operators $\sim$, $E(\cdot, \cdot)$ and $G(\cdot)$ 
do not depend on the stage of the construction:
doubling and collapsing only add and remove vertices, without
changing existing adjacencies. 

\begin{lemma}
\label{lem:one-collapse}
Let $G$ be bipartite graph. If $G$ is constructible, then it is constructible
such that all the operations except for the last one are doublings.
\end{lemma}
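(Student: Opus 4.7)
My plan is to establish two rewriting rules on construction sequences that preserve the final graph: (i) two consecutive collapses can be merged into a single collapse, and (ii) a collapse followed by a doubling can be replaced by a doubling followed by a collapse. Starting from any construction of $G$, repeatedly applying (ii) moves each collapse past all subsequent doublings, after which (i) merges the now contiguous trailing collapses into at most one.

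Rule (i) is immediate from the fact that a composition of hypergraph homomorphisms is a hypergraph homomorphism: if $h_1 \colon G_1 \to G_2$ is identity on $G_2$ and $h_2 \colon G_2 \to G_3$ is identity on $G_3 \subseteq G_2$, then $h_2 \circ h_1 \colon G_1 \to G_3$ is identity on $G_3$, and the induced-subgraph relation is transitive.

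Rule (ii) is the main substantive step. Suppose $G_1 \to G_2$ is a collapse witnessed by $h \colon G_1 \to G_2$, followed by a doubling of $V \subseteq V(G_2)$ that produces $G_3$. Let $U := h^{-1}(V)$; note $V \subseteq U$ because $h$ is identity on $G_2$. I instead first double $U$ in $G_1$ to obtain $G_1'$, and then collapse $G_1'$ onto $G_3$ via the extension
\begin{align*}
h'(v) := h(v) \text{ for } v \in V(G_1), \qquad h'(u') := h(u)' \text{ for } u' \in U'.
\end{align*}
The work is to check that $h'$ is a homomorphism identity on $G_3$ and that $G_3$ really embeds as an induced subhypergraph of $G_1'$. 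Both reduce to a case analysis on an edge $e = (a,b)$ of $G_1$ according to whether $a, b$ lie in $U$: the corresponding new edge in $G_1'$ (either $(a',b)$, $(a,b')$, or $(a',b')$) is mapped by $h'$ to $(h(a)^{(\prime)}, h(b)^{(\prime)})$, which is exactly an edge that the doubling of $V$ in $G_2$ adds to $G_3$, since $h(U) \subseteq V$. The same case analysis, run on edges of $G_2$, shows $G_1'$ restricted to $V(G_3) = V(G_2) \cupdot V'$ has precisely the edges of $G_3$. This bookkeeping is the main obstacle; once it is verified, the swap is legal.

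Termination follows from a simple potential. Let $\Phi$ be the sum over all collapse steps of the number of doublings occurring after them. Each application of rule (ii) strictly decreases $\Phi$, and each application of rule (i) decreases the number of collapses. Both quantities are non-negative integers, so iterating the rules halts at a construction in which every doubling precedes every collapse, with at most a single collapse as the final operation, which is the required form.
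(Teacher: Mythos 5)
Your proposal is correct and follows essentially the same route as the paper: swap each collapse--doubling pair into a doubling--collapse pair and then merge the trailing collapses into one, with termination by a standard potential argument. The only (immaterial) difference is that you double $h^{-1}(V)$ where the paper doubles all collapsed vertices together with $V$; both choices make the exchange legal.
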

\begin{proof}
First, assume that in a construction of $G$ there is a collapse operation
immediately followed by a doubling operation. Assume that $A$ is the set
of the vertices collapsed in the first operation, 
$B$ is the set of vertices that are fixed in the first
operation and doubled onto $B$' in the second operation and $C$ the set
of vertices that are fixed throughout both operations 
(see Figure~\ref{fig:exchange-collapse-doubling}).

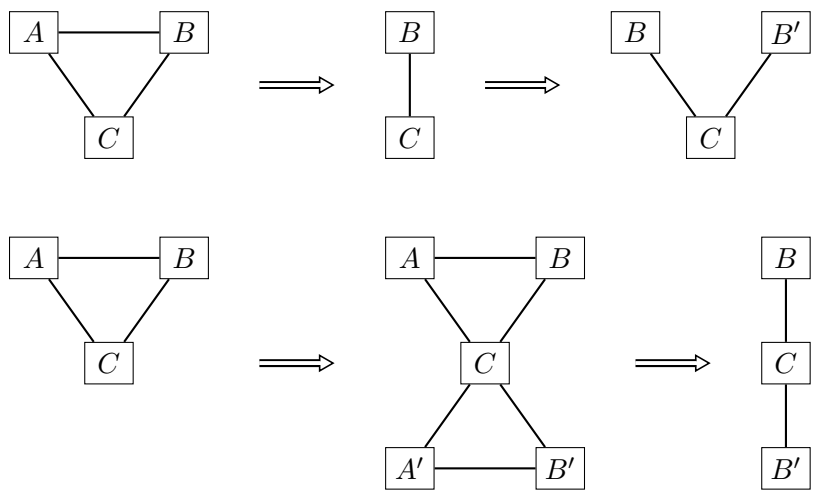
\begin{figure}[ht]\centering
\caption{Transposing a collapse and a doubling.}
\label{fig:exchange-collapse-doubling}
\begin{tikzpicture}
\node at (0, 0.5) {};

\node [draw, rectangle, label=center:$A$] (A1) at (0, 0) {\phantom{$A'$}};
\node [draw, rectangle, label=center:$B$] (B1) at (2, 0) {\phantom{$A'$}};
\node [draw, rectangle, label=center:$C$] (C1) at (1, -1.4) {\phantom{$A'$}};
\draw [thick] (A1) -- (B1) -- (C1) -- (A1);

\draw [vecArrow] (3, -0.7) -- (4, -0.7);

\node [draw, rectangle, label=center:$B$] (B2) at (5, 0) {\phantom{$A'$}};
\node [draw, rectangle, label=center:$C$] (C2) at (5, -1.4) {\phantom{$A'$}};
\draw [thick] (B2) -- (C2);

\draw [vecArrow] (6, -0.7) -- (7, -0.7);

\node [draw, rectangle, label=center:$B$] (B3) at (8, 0) {\phantom{$A'$}};
\node [draw, rectangle, label=center:$B'$] (B'3) at (10, 0) {\phantom{$A'$}};
\node [draw, rectangle, label=center:$C$] (C3) at (9, -1.4) {\phantom{$A'$}};
\draw [thick] (B3) -- (C3) -- (B'3);

\node [draw, rectangle, label=center:$A$] (A4) at (0, -3) {\phantom{$A'$}};
\node [draw, rectangle, label=center:$B$] (B4) at (2, -3) {\phantom{$A'$}};
\node [draw, rectangle, label=center:$C$] (C4) at (1, -4.4) {\phantom{$A'$}};
\draw [thick] (A4) -- (B4) -- (C4) -- (A4);

\draw [vecArrow] (3, -4.4) -- (4, -4.4);

\node [draw, rectangle, label=center:$A$] (A5) at (5, -3) {\phantom{$A'$}};
\node [draw, rectangle, label=center:$B$] (B5) at (7, -3) {\phantom{$A'$}};
\node [draw, rectangle, label=center:$C$] (C5) at (6, -4.4) {\phantom{$A'$}};
\node [draw, rectangle, label=center:$A'$] (A'5) at (5, -5.8) {\phantom{$A'$}};
\node [draw, rectangle, label=center:$B'$] (B'5) at (7, -5.8) {\phantom{$A'$}};
\draw [thick] (A5) -- (B5) -- (C5) -- (A5);
\draw [thick] (A'5) -- (B'5) -- (C5) -- (A'5);

\draw [vecArrow] (8, -4.4) -- (9, -4.4);

\node [draw, rectangle, label=center:$B$] (B6) at (10, -3) {\phantom{$A'$}};
\node [draw, rectangle, label=center:$C$] (C6) at (10, -4.4) {\phantom{$A'$}};
\node [draw, rectangle, label=center:$B'$] (B'6) at (10, -5.8) {\phantom{$A'$}};
\draw [thick] (B6) -- (C6) -- (B'6);
\end{tikzpicture}
\end{figure}

Then, those two operations can be exchanged as follows.
First, double $A$ onto $A'$ and $B$ onto $B'$. Then, collapse
$A$ onto $B \cup C$ and $A'$ onto $B' \cup C$ 
(again see Figure~\ref{fig:exchange-collapse-doubling}).
In both cases we end up with the
same graph on vertices $B \cup B' \cup C$.

Finally, note that once all collapses are at the end of the
sequence of the operations, they can be merged into a single collapse.
\end{proof}

\begin{definition}
We say that a graph $G$ is \emph{collapsible} onto a graph $H$,
if $H$ can be constructed from $G$ by a single collapse operation.
\end{definition}

\begin{lemma}
\label{lem:construction-collapsible}
Let $H$ be a constructible graph with at least two edges. 
There exists a construction of $H$ such that:
\begin{enumerate}
  \item The last operation is a collapse.
  \item All other operations are doublings.
  \item Leting $H_0$ be the graph before the last doubling, $H_0$ is not 
    collapsible onto $H$.
\end{enumerate}
\end{lemma}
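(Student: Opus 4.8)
The plan is to combine Lemma~\ref{lem:one-collapse} with an extremal (minimal-counterexample) argument. First I would record that $H$ admits at least one construction of the desired shape --- namely a block of doublings $D_1, \dots, D_m$ followed by a single collapse $C$. By Lemma~\ref{lem:one-collapse}, $H$ has a construction in which all operations but the last are doublings and the last is a single collapse; if the construction that lemma yields happens instead to end with a doubling (i.e.\ uses no collapse at all), I simply append the trivial collapse of $H$ onto itself, realized by operation~3 of Definition~\ref{def:constructible-hypergraphs} with empty old-vertex sets and the identity homomorphism, which is legal. Either way $H$ has a construction of the form ``$m$ doublings, then one collapse.''

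Next I would pick, among all such constructions of $H$, one using the \emph{fewest} doublings $m$, and let $H_0$ be the graph obtained after $D_1, \dots, D_{m-1}$, i.e.\ the graph just before the last doubling $D_m$. Two things then need checking. First, $m \ge 1$, so that $H_0$ is well defined: the construction starts from a single edge, a doubling at most doubles the number of edges, and a collapse maps onto a section hypergraph (so $\oP \subseteq \oQ$) and hence never increases the number of edges; thus after $k$ doublings the current graph has at most $2^k$ edges, and since $H$ has at least two edges we get $m \ge 1$. Second, $H_0$ is not collapsible onto $H$: suppose it were. Any prefix of a valid construction is again a valid construction, so $D_1, \dots, D_{m-1}$ is a construction of $H_0$ consisting only of doublings; appending the single collapse witnessing that $H_0$ collapses onto $H$ produces a construction of $H$ of the required shape but with only $m-1$ doublings, contradicting minimality. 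This yields properties~1,~2, and~3 at once.

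I do not expect a genuine obstacle; the argument is essentially a one-line appeal to well-ordering once the setup is right. The only points needing care are bookkeeping ones: arranging the normal form of Lemma~\ref{lem:one-collapse} so that it ends in a collapse (the identity-collapse trick), confirming $m \ge 1$ so that ``the graph before the last doubling'' makes sense, and noting that prefixes of constructions are constructions so that the shortening step is valid. Everything else is immediate from Definition~\ref{def:constructible-hypergraphs}.
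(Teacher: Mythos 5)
Your proposal is correct and follows essentially the same route as the paper: apply Lemma~\ref{lem:one-collapse} to normalize the construction, take one with the minimal number of doublings, note that at least one doubling is needed since $H$ has at least two edges, and derive a contradiction by replacing the last doubling plus collapse with a single collapse if $H_0$ were collapsible onto $H$. Your extra bookkeeping (the identity-collapse trick and the explicit edge-count argument for $m \ge 1$) only fills in details the paper leaves implicit.
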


\begin{proof}
By Lemma \ref{lem:one-collapse}, there exists a construction of $H$
satisfying the first two conditions. 
Let us take such a construction with the smallest
possible number of doublings. Since $H$ is not a single
edge, the number of doublings must be at least one.

If in this construction
$H_0$ is collapsible onto $H$, the last doubling and the collapse can
be replaced with a single collapse, which is a contradiction.
\end{proof}

Due to Lemma \ref{lem:one-collapse}, 
we can assume that if the graph $\mathfrak{C}_{12}$ is
constructible, the last two steps of its construction are, respectively, 
doubling and collapsing.
Let us now divide the vertices of the construction
depending on what happens to them in those last two steps 
(see Figure \ref{fig:last-two-steps}).

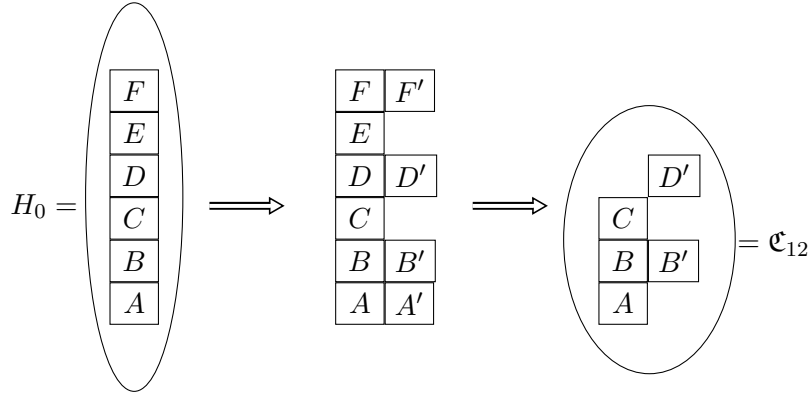
\begin{figure}[ht]\centering
\caption{The last two steps in a construction of $\mathfrak{C}_{12}$.}
\label{fig:last-two-steps}
\begin{tikzpicture}
\node at (0, 3.7) {};

\node [draw, rectangle, label=center:$A$] (A1) at (0, 0) {\phantom{$A'$}};
\node [draw, rectangle, label=center:$B$, above=0cm of A1] (B1) {\phantom{$A'$}};
\node [draw, rectangle, label=center:$C$, above=0cm of B1] (C1) {\phantom{$A'$}};
\node [draw, rectangle, label=center:$D$, above=0cm of C1] (D1) {\phantom{$A'$}};
\node [draw, rectangle, label=center:$E$, above=0cm of D1] (E1) {\phantom{$A'$}};
\node [draw, rectangle, label=center:$F$, above=0cm of E1] (F1) {\phantom{$A'$}};

\draw [vecArrow] (1, 1.3) -- (2, 1.3);

\node [draw, rectangle, label=center:$A$] (A2) at (3, 0) {\phantom{$A'$}};
\node [draw, rectangle, label=center:$B$, above=0cm of A2] (B2) {\phantom{$A'$}};
\node [draw, rectangle, label=center:$C$, above=0cm of B2] (C2) {\phantom{$A'$}};
\node [draw, rectangle, label=center:$D$, above=0cm of C2] (D2) {\phantom{$A'$}};
\node [draw, rectangle, label=center:$E$, above=0cm of D2] (E2) {\phantom{$A'$}};
\node [draw, rectangle, label=center:$F$, above=0cm of E2] (F2) {\phantom{$A'$}};
\node [draw, rectangle, right=0cm of A2] (A'2) {$A'$};
\node [draw, rectangle, right=0cm of B2] (B'2) {$B'$};
\node [draw, rectangle, right=0cm of D2] (D'2) {$D'$};
\node [draw, rectangle, right=0cm of F2] (F'2) {$F'$};

\draw [vecArrow] (4.5, 1.3) -- (5.5, 1.3);

\node [draw, rectangle, label=center:$A$] (A3) at (6.5, 0) {\phantom{$A'$}};
\node [draw, rectangle, label=center:$B$, above=0cm of A3] (B3) {\phantom{$A'$}};
\node [draw, rectangle, label=center:$C$, above=0cm of B3] (C3) {\phantom{$A'$}};
\node [draw, rectangle, right=0cm of B3] (B'3) {$B'$};
\node [draw, rectangle, above right=0cm of C3] (D'3) {$D'$};

\node [draw, shape=ellipse, fit=(A1)(B1)(C1)(D1)(E1)(F1)] {};
\node [draw, shape=ellipse, fit=(A3)(B3)(C3)(B'3)(D'3)] {};

\node at (-1.2, 1.3) {$H_0 =$};
\node at (8.5, 0.8) {$= \mathfrak{C}_{12}$};
\end{tikzpicture}
\end{figure}

The division is as follows: $A$ are vertices that are doubled onto $A'$
in the first step, with $A$ fixed and $A'$ collapsed in the second step. 
$B$ are vertices doubled onto $B'$ in the first step with
both $B$ and $B'$ fixed in the second step. $C$ are vertices that
are fixed throughout both steps. $D$ are vertices doubled onto $D'$ in the
first step with $D$ collapsed and $D'$ fixed in the second step.
$E$ are vertices fixed in the first step and collapsed in the second step.
Finally, $F$ are vertices that are doubled onto $F'$ in the first step
with both $F$ and $F'$ collapsed in the second step.

One checks that this division covers all possible events in the last two
steps.
The final graph $\mathfrak{C}_{12}$ consists of vertices 
$A \cup B \cup B' \cup C \cup D'$.

Our proof of Theorem \ref{thm:cycle-non-constructible} goes as follows:
First, we show that if the last two steps of a construction of 
$\mathfrak{C}_{12}$ are as above, it must be $B = \emptyset$
and $E(A, D) = \emptyset$. Then, we prove that if $B = \emptyset$
and $E(A, D) = \emptyset$, then the initial graph $H_0$ must have been
collapsible onto $\mathfrak{C}_{12}$ in the first place.
Parts of the proof are computer-assisted, with the codes of C++
programs provided in Appendix \ref{ch:appendix}.

\subsection{Non-collapsible graphs never produce 
\texorpdfstring{$\mathfrak{C}_{12}$}{C\_12}
}

\begin{lemma}
\label{lem:assisted-non-empty-b}
Let $\mathfrak{C}_{12}$ be constructed in two steps
from some bipartite $H_0$, as above.
It cannot be that $E = F = \emptyset$, $E(A, D) = \emptyset$ and 
$B \ne \emptyset$.
\end{lemma}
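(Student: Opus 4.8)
The plan is to reduce the statement to a finite search inside the fixed graph $\mathfrak{C}_{12}$ and carry it out by computer. By Lemma~\ref{lem:one-collapse} we may assume the construction of $\mathfrak{C}_{12}$ ends with a doubling $H_0 \to H_1$ followed by a single collapse $h\colon H_1 \to \mathfrak{C}_{12}$. With the vertex partition of Figure~\ref{fig:last-two-steps} and the hypotheses $E = F = \emptyset$, $E(A,D) = \emptyset$: $H_0$ has vertex set $A \cupdot B \cupdot C \cupdot D$, the doubling keeps $C$ fixed and doubles $A \cupdot B \cupdot D$ onto copies $A' \cupdot B' \cupdot D'$, and $h$ is the identity on $V(\mathfrak{C}_{12}) = A \cupdot B \cupdot C \cupdot B' \cupdot D'$ while sending the remaining vertices $A' \cup D$ into $\mathfrak{C}_{12}$. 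Since a doubling never joins an old vertex to a new one, $\mathfrak{C}_{12}$ has no edge between $A \cup B$ and $B' \cup D'$; moreover $\mathfrak{C}_{12}[A \cup B \cup C] = H_0 - D$ and $\mathfrak{C}_{12}[B' \cup C \cup D']$ is the isomorphic image of $H_0 - A$ under the copying bijection $\phi\colon B \to B'$, $D \to D'$ that fixes $C$; and since $E(A,D) = \emptyset$, the graph $H_0$ is exactly the union of these two induced subgraphs glued along $B \cup C$. Consequently any such construction is determined by finite data living in $\mathfrak{C}_{12}$: the set $C$; a partition of $\mathfrak{C}_{12} - C$ into a left part $A \cup B$ and a right part $B' \cup D'$ with no edges between them (each a union of connected components of $\mathfrak{C}_{12} - C$); the splits into $A, B$ and into $B', D'$; the bijection $\phi\colon B \to B'$, which must be a partial isomorphism of $\mathfrak{C}_{12}$ fixing $C$ (i.e.\ $b_1 \sim b_2 \iff \phi(b_1)\sim\phi(b_2)$ and $b \sim c \iff \phi(b)\sim c$); and the homomorphism $h$.

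First I would record the elementary facts that are necessary and keep the search small: every vertex of $\mathfrak{C}_{12}$ has degree $4$, which forces $\deg_{H_0}(a)=4$ for $a\in A$, $\deg_{H_0}(d)=4$ for $d\in D$, and $|N_{H_0}(b)\cap A| = |N_{H_0}(b)\cap D| = \deg_{H_0}(b)-4$ for $b\in B$; $H_0$ must be bipartite; and if $B\neq\emptyset$ then $A\cup B\supseteq B$ and $B'\cup D'\supseteq B'$ are both nonempty, so $C$ is a genuine vertex cut of $\mathfrak{C}_{12}$. Since $\mathfrak{C}_{12}=C_{12}(1,3)$ is $4$-connected (a direct check, or the bound $\kappa\ge\tfrac{2}{3}(d+1)$ for connected vertex-transitive graphs) this gives $|C|\ge 4$, hence $|A|+|D|\le 8-2|B|\le 6$; so $h$ need only be extended from $V(\mathfrak{C}_{12})$ to the at most $6$ vertices $A'\cup D$, each mapped into a $12$-vertex graph, and since old and new vertices are never adjacent the constraints on $h|_{A'}$ (coming from the right copy) and on $h|_{D}$ (coming from the left copy) decouple, so deciding whether a valid $h$ exists is a tiny search.

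The proof is then completed by an exhaustive computer search over the finitely many configurations above with $B\neq\emptyset$: for each $C$, each left/right split, each $A/B$ and $B'/D'$ split, and each $\phi$, one reconstructs $H_0$ and $H_1$, checks bipartiteness, the degree conditions and $E(A,D)=\emptyset$, and tests whether the adjacency constraints on $h|_{A'}$ and $h|_{D}$ are simultaneously satisfiable; the assertion, verified by the program in Appendix~\ref{ch:appendix}, is that no configuration survives all the checks, contradicting the assumed construction of $\mathfrak{C}_{12}$. I expect the main obstacle to be the correctness of the reduction rather than the search itself: one must make sure that the constraints listed above are \emph{exactly} the necessary conditions for a doubling--collapse to produce $\mathfrak{C}_{12}$ under these hypotheses --- in particular that the single homomorphism $h$, which must act coherently on $A'$ and on $D$ simultaneously while being the identity elsewhere, is faithfully encoded --- so that the program's failure to find any configuration genuinely rules out every such construction. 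The size of the search is secondary, bounded by $4$-connectivity, bipartiteness and the degree identities.
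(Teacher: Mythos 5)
Your proposal is correct and follows essentially the same route as the paper: reduce to a finite enumeration of partitions of $\mathfrak{C}_{12}$ into $A \cup B \cup B' \cup C \cup D'$ together with the copying bijection $\phi\colon B \to B'$ (which, given $E(A,D)=\emptyset$, determines $H_0$ uniquely), and then verify by computer that no configuration admits the required collapsing homomorphism on $A' \cup D$. The extra pruning you add ($4$-connectivity of $\mathfrak{C}_{12}$ and the degree identities) is sound but inessential --- the paper's program simply brute-forces all partitions.
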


\begin{proof}
Computer-assisted (enumerate all partitions of $\mathfrak{C}_{12}$
into $A \cup B \cup B' \cup C \cup D'$ together with a bijection
between $B$ and $B'$, since $E(A, D) = \emptyset$ such a partition
implies a unique $H_0 = G(A \cup B \cup C \cup D)$), 
see the program \verb+non_empty_b.cpp+
in Listing \ref{lst:non-empty-b}.
\end{proof}

\begin{lemma}
\label{lem:assisted-non-empty-ad}
Let $\mathfrak{C}_{12}$ be constructed in two steps
from some bipartite $H_0$, as above.
It cannot be that $B = E = F = \emptyset$ and $|E(A, D)| = 1$.
\end{lemma}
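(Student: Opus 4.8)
The plan is to follow the same strategy as in the proof of Lemma~\ref{lem:assisted-non-empty-b}: convert the hypothesis into a finite list of candidate graphs $H_0$ and then rule out each one by checking that the required collapse cannot exist.

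First I would unwind the hypothesis. By Lemma~\ref{lem:one-collapse} we may take the last two operations to be a doubling producing $H_1$ from $H_0$, followed by a single collapse, and use the partition of the vertices of $H_1$ into $A, A', B, B', C, D, D', E, F, F'$ from the discussion preceding Lemma~\ref{lem:assisted-non-empty-b}. Putting $B = E = F = \emptyset$ forces $V(H_0) = A \cupdot C \cupdot D$; the doubling then has old set $A \cup D$ and fixed set $C$, and the collapse fixes $A \cup C \cup D'$ and collapses $A' \cup D$ via a homomorphism $h : H_1 \to H_1[A \cup C \cup D']$ that is the identity on $A \cup C \cup D'$, with $\mathfrak{C}_{12} = H_1[A \cup C \cup D']$. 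Since a doubling never joins an old vertex to a new one, every neighbour of a vertex of $D'$ inside $\mathfrak{C}_{12}$ lies in $C \cup D'$; hence $\mathfrak{C}_{12}$ automatically satisfies $E(A, D') = \emptyset$. It follows that $H_0$ is determined by the partition $(A, C, D')$ of $V(\mathfrak{C}_{12})$ together with the set $E(A, D)$: one has $H_0[A \cup C] = \mathfrak{C}_{12}[A \cup C]$, $H_0[C \cup D]$ is the un-primed copy of $\mathfrak{C}_{12}[C \cup D']$, and $E(A, D)$ is the only free data. In our case $|E(A, D)| = 1$, say $E(A, D) = \{(a_0, d_0)\}$, with $a_0 \in A$ and $d_0 \in D$ forced onto opposite sides of the bipartition.

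Neither $(a_0, d_0)$ nor its copy $(a_0', d_0')$ survives into $H_1[A \cup C \cup D']$, so they do not change the fact that $H_1[A \cup C \cup D'] = \mathfrak{C}_{12}$; their only effect is to constrain $h$. Concretely, for every $d \in D$ the image $h(d)$ must be $\mathfrak{C}_{12}$-adjacent to each vertex of $N_{H_0}(d) \cap (A \cup C)$ and to $h(e)$ for each $e \in N_{H_0}(d) \cap D$, and symmetrically $h(a_0')$ must be adjacent to each $c \in N_{H_0}(a_0) \cap C$, to $h(b')$ for each $b \in N_{H_0}(a_0) \cap A$, and to $d_0'$ (the unique $D$-neighbour of $a_0$). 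Deciding whether the two-step construction exists is thus exactly deciding whether this constraint-satisfaction instance on the unknowns $\{h(d)\}_{d \in D} \cup \{h(a')\}_{a \in A}$ has a solution. The proof is then finished by an exhaustive search: enumerate all partitions $(A, C, D')$ of the twelve vertices of $\mathfrak{C}_{12}$ with $E(A, D') = \emptyset$ (using the automorphism group of $\mathfrak{C}_{12}$ to prune the list), and for each of them every bipartition-respecting edge $(a_0, d_0)$; build $H_0$ and its doubling $H_1$ and verify by backtracking that no homomorphism $h$ of the above kind exists. As every case fails, no such construction is possible. This is carried out by a short C++ program analogous to the one used for Lemma~\ref{lem:assisted-non-empty-b}.

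I expect the obstacle to be bookkeeping rather than a single clean idea. The edge $(a_0, d_0)$ does force $h(d_0) \ne d_0'$, because $d_0' \not\sim a_0$ in $\mathfrak{C}_{12}$, and a degree count --- $\deg_{H_0}(d_0) = \deg_{H_1}(a_0') = 5$ against the $4$-regularity of $\mathfrak{C}_{12}$ --- shows that $h$ must identify several of the neighbours of $d_0$, and of $a_0'$. But $\mathfrak{C}_{12}$ contains $4$-cycles, so a small set of vertices can have more than one common neighbour and no contradiction falls out locally; one has to use the circulant structure of $\mathfrak{C}_{12}$ globally and simultaneously for all of $D$ and $A'$, which is what the computer check does. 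The only real care is to keep both the number of enumerated configurations and the per-configuration homomorphism search small enough to run quickly.
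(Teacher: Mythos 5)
Your proposal matches the paper's proof: the paper likewise reduces the statement to enumerating all partitions of $V(\mathfrak{C}_{12})$ into $A \cupdot C \cupdot D'$ (which, together with the single bipartition-respecting edge in $E(A,D)$, determines $H_0$ uniquely) and then checking by an exhaustive computer search that in no case can $A' \cup D$ be collapsed onto $\mathfrak{C}_{12}$. The reasoning that the configuration is finite and the verification is a homomorphism-existence check is exactly what the program \texttt{non\_empty\_ad.cpp} implements, so this is the same approach.
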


\begin{proof}
Computer-assisted (enumerate all partitions of $\mathfrak{C}_{12}$ into
$A \cup C \cup D'$ and all edges between $A$ and $D$, again
this implies a unique $H_0 = G(A \cup C \cup D)$),
see the program \verb+non_empty_ad.cpp+ in Listing \ref{lst:non-empty-ad}.
\end{proof}

\begin{lemma}
\label{lem:only-trivial-doublings}
Let $\mathfrak{C}_{12}$ be constructed in two steps from some bipartite $H_0$,
as above. Then, it must be that $B = \emptyset$ and $E(A, D) = \emptyset$.
\end{lemma}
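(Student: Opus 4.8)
The plan is to derive Lemma~\ref{lem:only-trivial-doublings} from the two computer-assisted Lemmas~\ref{lem:assisted-non-empty-b} and~\ref{lem:assisted-non-empty-ad} by two \emph{pruning} reductions that put the construction into a normal form without altering $\mathfrak{C}_{12}$, nor the quantities $B$ and $E(A,D)$.

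\emph{Reduction 1: we may assume $E=F=\emptyset$.} Write $H_1$ for the doubling of $H_0$ (so $H_1$ doubles $A\cup B\cup D\cup F$ while fixing $C\cup E$), and let $\phi\colon H_1\to H_1[K]$ be the homomorphism realizing the final collapse, where $K:=A\cup B\cup B'\cup C\cup D'$ is the set of vertices kept by the collapse, so that $\mathfrak{C}_{12}=H_1[K]$. I would delete from $H_0$ all vertices of $E\cup F$ — equivalently, delete $E\cup F\cup F'$ from $H_1$ — obtaining $H_0^{*}:=H_0[A\cup B\cup C\cup D]$. A short bookkeeping check, using that a doubling only adds copies of edges incident to the doubled set, leaves the fixed endpoints alone, and creates no edges between old and new vertices, shows that the doubling of $H_0^{*}$ (doubling $A\cup B\cup D$, fixing $C$) is precisely $H_1$ with the vertices $E\cup F\cup F'$ removed. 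Since $K$ is disjoint from $E\cup F\cup F'$, the induced subgraph on $K$ is unchanged and the restriction of $\phi$ to the smaller vertex set is still a homomorphism onto $\mathfrak{C}_{12}=H_1[K]$ fixing $K$. Hence $\mathfrak{C}_{12}$ is again constructed in two steps from $H_0^{*}$, now with $E=F=\emptyset$, while $B$ and $E(A,D)$ are untouched; so it suffices to treat the case $E=F=\emptyset$.

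\emph{Reduction 2: deleting edges of $E(A,D)$ changes nothing.} Deleting an edge $\{a,d\}$ with $a\in A$, $d\in D$ from $H_0$ also removes its copy $\{a',d'\}$ from $H_1$, but neither edge lies in $H_1[K]$ because $d,a'\notin K$; and a homomorphism out of $H_1$ is still a homomorphism out of any spanning subgraph of $H_1$, so $\phi$ continues to witness the collapse. Thus one may delete an arbitrary subset of $E(A,D)$ and still have a two-step construction of $\mathfrak{C}_{12}$ of the same shape. Now assume $E=F=\emptyset$. If $B\neq\emptyset$, delete all edges of $E(A,D)$: this gives a two-step construction of $\mathfrak{C}_{12}$ with $E=F=\emptyset$, $E(A,D)=\emptyset$ and $B\neq\emptyset$, contradicting Lemma~\ref{lem:assisted-non-empty-b}. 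Hence $B=\emptyset$. If moreover $E(A,D)\neq\emptyset$, delete all but one of its edges: this gives a two-step construction with $B=E=F=\emptyset$ and $|E(A,D)|=1$, contradicting Lemma~\ref{lem:assisted-non-empty-ad}. Therefore $E(A,D)=\emptyset$ as well, which is the claim.

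The only delicate point is the bookkeeping in Reduction~1: one has to track exactly which edges the doubling creates and verify that removing $E\cup F$ \emph{before} doubling yields the same graph as removing $E\cup F\cup F'$ \emph{after} doubling. This is routine but needs care because the classes $A,B,C,D,E,F$ may lie on either side of the bipartition and may be empty (the convention that partition classes may be empty is used freely here). Everything else is either this formal manipulation or is already handled inside the two computer-assisted lemmas. I note that, in contrast to the later step of the overall argument for Theorem~\ref{thm:cycle-non-constructible}, this lemma does not need the non-collapsibility of $H_0$ provided by Lemma~\ref{lem:construction-collapsible}.
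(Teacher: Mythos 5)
Your proposal is correct and follows essentially the same route as the paper's proof: reduce to $E=F=\emptyset$ by deleting those vertices from $H_0$, then exploit the fact that edges of $E(A,D)$ (and their copies) can be deleted without affecting $\mathfrak{C}_{12}$ or the collapse, so as to land exactly in the hypotheses of Lemmas~\ref{lem:assisted-non-empty-b} and~\ref{lem:assisted-non-empty-ad}. Your write-up is merely more explicit about the bookkeeping (commuting deletion with doubling, restricting the collapse homomorphism) that the paper states without elaboration.
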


\begin{proof}
Assume by contradiction that there exists a construction of $\mathfrak{C}_{12}$
with $B \ne \emptyset$ or $E(A, D) \ne \emptyset$.

Firstly, note that the same construction but with the vertices from
$E \cup F$ deleted from the initial graph $H_0$
is valid and also results in $\mathfrak{C}_{12}$. Therefore,
we can assume w.l.o.g.~that $E = F = \emptyset$.

We now proceed in two cases. If $B \ne \emptyset$, we can additionally
assume that $E(A, D) = \emptyset$. This is again due to the fact that
if we deleted $E(A, D)$ edges from $H_0$, we would still obtain
a valid construction that results in $\mathfrak{C}_{12}$ 
(cf.~Figure \ref{fig:only-trivial-b-non-empty}). But $B \ne \emptyset$
and $E(A, D) = \emptyset$ is impossible due to Lemma
\ref{lem:assisted-non-empty-b}.

On the other hand, assume that $B = \emptyset$ and $E(A, D) \ne \emptyset$.
Then, by the same argument as before, we can also assume that the size of 
$E(A, D)$ is as small as possible, namely $|E(A, D)| = 1$
(cf.~Figure \ref{fig:only-trivial-b-empty}). But this also yields
a contradiction by Lemma \ref{lem:assisted-non-empty-ad}.
\end{proof}

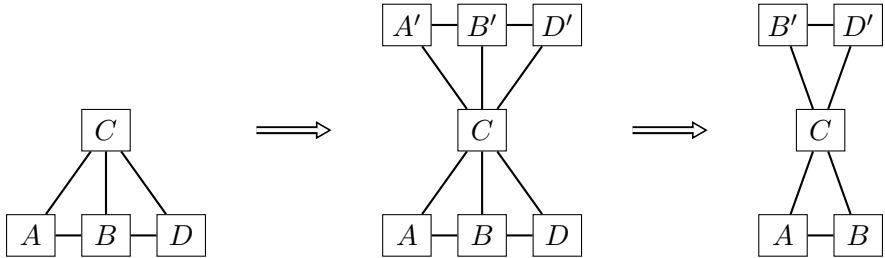
\begin{figure}[ht]\centering
\caption{An illustration of Lemma \ref{lem:only-trivial-doublings},
case $B \ne \emptyset$.}
\label{fig:only-trivial-b-non-empty}
\begin{tikzpicture}
\node at (0, 1.9) {};

\node [draw, rectangle, label=center:$C$] (C1) at (0, 0) {\phantom{$A'$}};
\node [draw, rectangle, label=center:$A$] (A1) at (-1, -1.4) {\phantom{$A'$}};
\node [draw, rectangle, label=center:$B$] (B1) at (0, -1.4) {\phantom{$A'$}};
\node [draw, rectangle, label=center:$D$] (D1) at (1, -1.4) {\phantom{$A'$}};
\draw [thick] (C1) -- (A1) -- (B1) -- (D1) -- (C1) -- (B1);

\draw [vecArrow] (2, 0) -- (3, 0);

\node [draw, rectangle, label=center:$A'$] (A'2) at (4, 1.4) {\phantom{$A'$}};
\node [draw, rectangle, label=center:$B'$] (B'2) at (5, 1.4) {\phantom{$A'$}};
\node [draw, rectangle, label=center:$D'$] (D'2) at (6, 1.4) {\phantom{$A'$}};
\node [draw, rectangle, label=center:$C$] (C2) at (5, 0) {\phantom{$A'$}};
\node [draw, rectangle, label=center:$A$] (A2) at (4, -1.4) {\phantom{$A'$}};
\node [draw, rectangle, label=center:$B$] (B2) at (5, -1.4) {\phantom{$A'$}};
\node [draw, rectangle, label=center:$D$] (D2) at (6, -1.4) {\phantom{$A'$}};
\draw [thick] (B'2) -- (C2) -- (A'2) -- (B'2) -- (D'2) -- (C2) -- (A2) -- (B2)
-- (D2) -- (C2) -- (B2);

\draw [vecArrow] (7, 0) -- (8, 0);

\node [draw, rectangle, label=center:$B'$] (B'3) at (9, 1.4) {\phantom{$A'$}};
\node [draw, rectangle, label=center:$D'$] (D'3) at (10, 1.4) {\phantom{$A'$}};
\node [draw, rectangle, label=center:$C$] (C3) at (9.5, 0) {\phantom{$A'$}};
\node [draw, rectangle, label=center:$A$] (A3) at (9, -1.4) {\phantom{$A'$}};
\node [draw, rectangle, label=center:$B$] (B3) at (10, -1.4) {\phantom{$A'$}};
\draw [thick] (C3) -- (B'3) -- (D'3) -- (C3) -- (B3) -- (A3) -- (C3);

\end{tikzpicture}
\end{figure}

\begin{figure}[ht]\centering
\caption{An illustration of Lemma \ref{lem:only-trivial-doublings},
case $B = \emptyset$, $|E(A, D)| = 1$. The edge between $A$ and $D$
is marked red.}
\label{fig:only-trivial-b-empty}
\begin{tikzpicture}
\node at (0, 1.9) {};

\node [draw, rectangle, label=center:$A$] (A1) at (0, 1.4) {\phantom{$A'$}};
\node [draw, rectangle, label=center:$C$] (C1) at (0, 0) {\phantom{$A'$}};
\node [draw, rectangle, label=center:$D$] (D1) at (0, -1.4) {\phantom{$A'$}};
\draw [thick] (A1) -- (C1) -- (D1);
\draw [thick, red] (A1.west) to [bend right] (D1.west);

\draw [vecArrow] (1, 0) -- (2, 0);

\node [draw, rectangle, label=center:$A$] (A2) at (3, 1.4) {\phantom{$A'$}};
\node [draw, rectangle, label=center:$A'$] (A'2) at (4, 1.4) {\phantom{$A'$}};
\node [draw, rectangle, label=center:$C$] (C2) at (3.5, 0) {\phantom{$A'$}};
\node [draw, rectangle, label=center:$D$] (D2) at (3, -1.4) {\phantom{$A'$}};
\node [draw, rectangle, label=center:$D'$] (D'2) at (4, -1.4) {\phantom{$A'$}};
\draw [thick] (A2) -- (C2) -- (D2);
\draw [thick] (A'2) -- (C2) -- (D'2);
\draw [thick, red] (A2.west) to [bend right] (D2.west);
\draw [thick, red] (A'2.east) to [bend left] (D'2.east);

\draw [vecArrow] (5, 0) -- (6, 0);

\node [draw, rectangle, label=center:$A$] (A3) at (7, 1.4) {\phantom{$A'$}};
\node [draw, rectangle, label=center:$C$] (C3) at (7, 0) {\phantom{$A'$}};
\node [draw, rectangle, label=center:$D'$] (D'3) at (7, -1.4) {\phantom{$A'$}};
\draw [thick] (A3) -- (C3) -- (D'3);
\end{tikzpicture}
\end{figure}
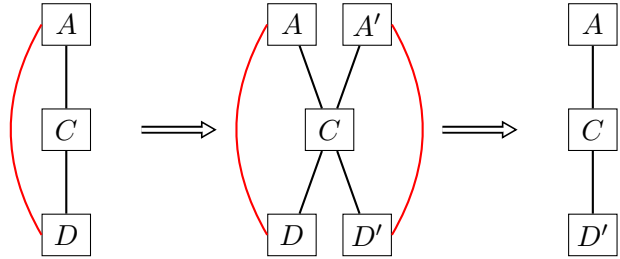

We need some additional concepts
to deal with the remaining case $B = \emptyset$, $E(A, D) = \emptyset$. 

\begin{definition}
Let $A$ and $B$ be disjoint sets of vertices that exist at some point
during a construction of a bipartite graph $H$. Assume that a doubling
operation is performed and that all vertices from $B$ (possibly together
with some vertices from $A$ and outside of $A \cup B$) are doubled.

Let $B'$ be the set of copies of vertices from $B$.
There is an obvious bijection between $B$ and $B'$ which
we call the \emph{natural bijection}. 
Similarly, we say
that there is natural bijection between $A \cup B$
and $A \cup B'$.
If this bijection is also an isomorphism between
$G(A \cup B)$ and $G(A \cup B')$, we say
that $G(A \cup B)$ and $G(A \cup B')$ are
\emph{naturally isomorphic}.
\end{definition}

\begin{definition}
Let $\mathfrak{C}_{12}$ be constructed from some $H_0$ in two steps, as above.
We say that $A'$ was \emph{naturally collapsed}  onto $A$, if in the collapse
step each vertex of $A'$ was collapsed onto $A$ via the natural bijection.
Analogously, we say that $D$ was naturally collapsed onto $D'$.
\end{definition}

\begin{lemma}
\label{lem:assisted-a-d-collapse}
Let $\mathfrak{C}_{12}$ be constructed from some $H_0$
with one doubling and one collapse, as above.
If $B = E = F = \emptyset$ and $E(A, D) = \emptyset$, then
in the subsequent collapse either $A'$ is naturally collapsed onto $A$
or $D$ is naturally collapsed onto $D'$.
\end{lemma}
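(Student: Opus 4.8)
The plan is to peel away the construction bookkeeping until only a finite statement about $\mathfrak{C}_{12}$ is left, and then settle that statement by computer, in the spirit of Lemmas~\ref{lem:assisted-non-empty-b} and~\ref{lem:assisted-non-empty-ad}. First I would pin down $H_0$ and $H_1$ exactly. Since $B=E=F=\emptyset$, the graph immediately before the last doubling is $H_0=G(A\cup C\cup D)$, the doubling doubles precisely $A\cupdot D$ with $C$ fixed, and $A',D'$ are the copies it produces. Because a doubling never creates an edge incident to both an old and a new vertex, $H_1=G(A\cup A'\cup C\cup D\cup D')$ is the union of $H_0$ with the naturally isomorphic copy $H_0':=G(A'\cup C\cup D')$, the two being glued along $C$ and with no edge between $A\cup D$ and $A'\cup D'$; and the collapse is a homomorphism $h\colon H_1\to\mathfrak{C}_{12}=G(A\cup C\cup D')$ that is the identity on $A\cup C\cup D'$. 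Using $E(A,D)=\emptyset$ — so that $A$ is nonadjacent to $D'$ in $\mathfrak{C}_{12}$ — the bijection $H_0\to\mathfrak{C}_{12}$ that fixes $A\cup C$ and sends each $d\in D$ to its copy $d'\in D'$ is a graph isomorphism $\beta$, and similarly the bijection $H_0'\to\mathfrak{C}_{12}$ fixing $C\cup D'$ and sending each $a'\in A'$ to its original $a\in A$ is an isomorphism $\alpha$.

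Next I would decouple $h$. Since $H_1$ has no edge between $A\cup D$ and $A'\cup D'$, a map $h\colon H_1\to\mathfrak{C}_{12}$ fixing $A\cup C\cup D'$ is a homomorphism if and only if its restrictions $h|_{H_0}$ and $h|_{H_0'}$ are homomorphisms fixing $A\cup C$ and $C\cup D'$ respectively; and conversely any such pair of restrictions glues to a valid collapse, since they automatically agree on $C$ and every edge of $H_1$ lies in $H_0$ or in $H_0'$. Conjugating, $g_1:=h|_{H_0}\circ\beta^{-1}$ is an endomorphism of $\mathfrak{C}_{12}$ fixing $A\cup C$ pointwise, $g_2:=h|_{H_0'}\circ\alpha^{-1}$ is an endomorphism fixing $C\cup D'$ pointwise, and $h\mapsto(g_1,g_2)$ is a bijection from valid collapses onto pairs of such endomorphisms. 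Unwinding the definition of ``naturally collapsed'', one checks that $A'$ is naturally collapsed onto $A$ exactly when $g_2=\Id$, and $D$ is naturally collapsed onto $D'$ exactly when $g_1=\Id$.

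By the previous paragraph the lemma is equivalent to a purely local statement about $\mathfrak{C}_{12}$: for every partition $V(\mathfrak{C}_{12})=A\cupdot C\cupdot D'$ with $E(A,D')=\emptyset$, either $\mathfrak{C}_{12}$ has no non-identity endomorphism fixing $A\cup C$ pointwise, \emph{or} it has no non-identity endomorphism fixing $C\cup D'$ pointwise. (If both existed, taking $(g_1,g_2)$ to be those two endomorphisms would, via the bijection above, produce a collapse that is natural on neither $A'$ nor $D$.) This is a finite assertion and I would verify it by computer: enumerate the partitions of the twelve vertices with no $A$--$D'$ edge and, for each, run a backtracking search for a non-identity endomorphism of $\mathfrak{C}_{12}$ fixing the relevant vertex set; every vertex moved by such an endomorphism must be sent to a common neighbour of its fixed neighbours, so the search is tiny. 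The program reports that in every case at least one of the two searches fails.

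I expect the main obstacle to be the bookkeeping in the first two paragraphs: one must be sure that with $B=E=F=\emptyset$ the graph $H_0$ really is the induced subgraph $G(A\cup C\cup D)$ and that $\alpha,\beta$ are honest graph isomorphisms, both of which rely on the ``no old--new edges'' property of doubling together with $E(A,D)=\emptyset$. Once that is in place the content is finite; the only remaining care is to prune the endomorphism search well enough for it to terminate quickly — using the common-neighbour constraint above and the fact that an endomorphism fixing vertices of both colour classes must preserve the bipartition of $\mathfrak{C}_{12}$.
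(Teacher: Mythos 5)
Your proposal is correct and follows essentially the same route as the paper: both reduce the lemma to a finite enumeration over partitions $V(\mathfrak{C}_{12})=A\cupdot C\cupdot D'$ with $E(A,D')=\emptyset$ and then verify the resulting statement by computer (the paper's \texttt{natural\_collapse.cpp} searches directly over all collapse homomorphisms of the doubled graph and checks that none is unnatural on both $A'$ and $D$). Your extra step of decoupling the collapse into an independent pair of endomorphisms of $\mathfrak{C}_{12}$ fixing $A\cup C$ and $C\cup D'$ respectively is a valid and slightly cleaner formulation of the same finite check — it is justified exactly by the ``no old--new edges'' property and $E(A,D)=\emptyset$, as you note — but it does not change the substance of the argument.
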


\begin{proof}
Computer-assisted (enumerate all partitions of $\mathfrak{C}_{12}$ into
$A \cup C \cup D'$ and all possible collapses),
see the program \verb+natural_collapse.cpp+
in Listing \ref{lst:natural}.
\end{proof}

\begin{lemma}
\label{lem:auto-a-d-collapse}
Let $\mathfrak{C}_{12}$ be constructed from some $H_0$
with one doubling and one collapse, as above.
If $B = \emptyset$ and $E(A, D) = \emptyset$, then
in the subsequent collapse either $A'$ is naturally collapsed onto $A$
or $D$ is naturally collapsed onto $D'$.
\end{lemma}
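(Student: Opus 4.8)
The plan is to reduce Lemma~\ref{lem:auto-a-d-collapse} to the computer-assisted Lemma~\ref{lem:assisted-a-d-collapse} by deleting from $H_0$ the vertices that play no role in the final graph, exactly in the spirit of the first paragraph of the proof of Lemma~\ref{lem:only-trivial-doublings}. So suppose $\mathfrak{C}_{12}$ is obtained from a bipartite $H_0$ by one doubling and one collapse, with the last-two-steps partition $A,B,C,D,E,F$ as above, and assume $B=\emptyset$ and $E(A,D)=\emptyset$. Since $B=\emptyset$, the vertex set of $\mathfrak{C}_{12}=G(A\cup B\cup B'\cup C\cup D')$ is just $A\cup C\cup D'$, so none of the vertices of $E\cup F$ (which all get collapsed away) occur in $\mathfrak{C}_{12}$. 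First I would set $H_0':=G(A\cup C\cup D)$, i.e.\ $H_0$ with $E\cup F$ removed, and argue that the ``same'' doubling and collapse, restricted to the surviving vertices, still form a valid construction producing $\mathfrak{C}_{12}$.

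Concretely: the doubling on $H_0'$ fixes $C$ and doubles $A\cup D$, its new edges being the copies (under the natural bijection) of exactly those $H_0'$-edges not contained in $C$; this is literally the restriction to surviving vertices of the doubling performed on $H_0$, so the graph after this step is the induced subgraph on $A\cup C\cup D\cup A'\cup D'$ of the graph after the original doubling. For the collapse, I would take $h':=h|_{A'\cup D}$ together with the identity on $A\cup C\cup D'$, where $h$ is the original collapse homomorphism. Because $h$ is the identity on the fixed set and $B=\emptyset$ makes that fixed set equal to $A\cup C\cup D'$, the map $h'$ sends everything into $A\cup C\cup D'$ and is a graph homomorphism, being the restriction of one to an induced subgraph; hence it witnesses a collapse onto $G(A\cup C\cup D')$, which is simultaneously the induced subgraph on $A\cup C\cup D'$ of both post-doubling graphs and therefore equals $\mathfrak{C}_{12}$. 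This reduced construction has $E=F=\emptyset$, $B=\emptyset$, $E(A,D)=\emptyset$, and the sets $A,A',D,D'$ and the behaviour of the collapse on $A'\cup D$ are \emph{identical} to those of the original. Applying Lemma~\ref{lem:assisted-a-d-collapse} to it, either $A'$ is naturally collapsed onto $A$ or $D$ is naturally collapsed onto $D'$; since $h'$ agrees with $h$ on $A'\cup D$, the same holds for the original construction, as required.

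The only genuinely delicate point — and it is really just careful bookkeeping — is verifying that the restricted operations are legitimate in the sense of Definition~\ref{def:constructible-hypergraphs}: that a doubling restricted to a vertex subset closed under the fixed/old distinction is again a doubling, and that restricting a collapse homomorphism to an induced subgraph whose image stays inside the fixed part again gives a valid collapse. Both are immediate from the definitions, and the degenerate cases (e.g.\ $A=D=\emptyset$, so that $A'$ and $D'$ are empty and the conclusion is vacuous) need no separate treatment. In particular, no new combinatorial fact about $\mathfrak{C}_{12}$ is needed beyond what Lemma~\ref{lem:assisted-a-d-collapse} already supplies; the content of this lemma is purely the deletion reduction.
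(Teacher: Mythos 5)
Your proof is correct and is essentially identical to the paper's: both reduce to the computer-assisted Lemma~\ref{lem:assisted-a-d-collapse} by deleting $E\cup F$ from $H_0$ and observing that the restricted doubling and collapse remain valid and leave the behaviour on $A'\cup D$ unchanged. The paper phrases it in contrapositive form and omits the bookkeeping you spell out, but the argument is the same.
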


\begin{proof}
Assume there exists a construction of $\mathfrak{C}_{12}$ from 
some $H = G(A \cup C \cup D \cup E \cup F$) such that:
\begin{enumerate}
\item $B = \emptyset$ and $E(A, D) = \emptyset$.
\item $A'$ does not naturally collapse onto $A$ and $D$ does not 
naturally collapse onto $D'$.  
\end{enumerate}
Then, the same construction with vertices $E \cup F$ omitted
from $H_0$
is also valid and satisfies both conditions. 
Therefore, we can assume w.l.o.g.~that
$E = F = \emptyset$.
Then, the result follows from Lemma \ref{lem:assisted-a-d-collapse}.
\end{proof}

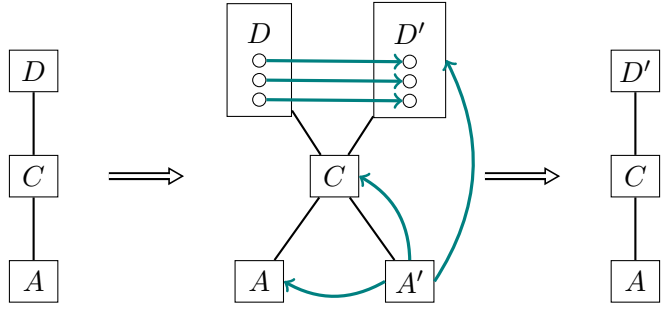
\begin{figure}[ht]\centering
\caption{An illustration of Lemma \ref{lem:auto-a-d-collapse}, case when
$E = F = \emptyset$ and 
$D$ collapses naturally onto $D'$. Blue arrows denote a collapse.}
\label{fig:auto-a-d-collapse}
\begin{tikzpicture}
\node at (0, 0.9) {};

\node [draw, rectangle, label=center:$D$] (D1) at (0, 0) {\phantom{$A'$}};
\node [draw, rectangle, label=center:$C$] (C1) at (0, -1.4) {\phantom{$A'$}};
\node [draw, rectangle, label=center:$A$] (A1) at (0, -2.8) {\phantom{$A'$}};
\draw [thick] (D1) -- (C1) -- (A1);

\draw [vecArrow] (1, -1.4) -- (2, -1.4);

\node (D2a) at (3, 0.5) {$D$};
\node [draw, circle, inner sep=0pt, minimum size=5pt, below=0cm of D2a] (D2b) {};
\node [draw, circle, inner sep=0pt, minimum size=5pt, below=2pt of D2b] (D2c) {};
\node [draw, circle, inner sep=0pt, minimum size=5pt, below=2pt of D2c] (D2d) {};
\node [draw, rectangle, fit=(D2a)(D2b)(D2c)(D2d)] (D2) {};
\node (D'2a) at (5, 0.5) {$D'$};
\node [draw, circle, inner sep=0pt, minimum size=5pt, below=0cm of D'2a] (D'2b) {};
\node [draw, circle, inner sep=0pt, minimum size=5pt, below=2pt of D'2b] (D'2c) {};
\node [draw, circle, inner sep=0pt, minimum size=5pt, below=2pt of D'2c] (D'2d) {};
\node [draw, rectangle, fit=(D'2a)(D'2b)(D'2c)(D'2d)] (D'2) {};
\node [draw, rectangle, label=center:$C$] (C2) at (4, -1.4) {\phantom{$A'$}};
\node [draw, rectangle, label=center:$A$] (A2) at (3, -2.8) {\phantom{$A'$}};
\node [draw, rectangle, label=center:$A'$] (A'2) at (5, -2.8) {\phantom{$A'$}};
\draw [thick] (D2) -- (C2) -- (A2);
\draw [thick] (D'2) -- (C2) -- (A'2);
\draw [very thick, color=teal, ->] (D2b) -- (D'2b); 
\draw [very thick, color=teal, ->] (D2c) -- (D'2c); 
\draw [very thick, color=teal, ->] (D2d) -- (D'2d); 

\draw [vecArrow] (6, -1.4) -- (7, -1.4);

\node [draw, rectangle, label=center:$D'$] (D'3) at (8, 0) {\phantom{$A'$}};
\node [draw, rectangle, label=center:$C$] (C3) at (8, -1.4) {\phantom{$A'$}};
\node [draw, rectangle, label=center:$A$] (A3) at (8, -2.8) {\phantom{$A'$}};
\draw [thick] (D'3) -- (C3) -- (A3);

\draw [very thick, color=teal,  ->] (A'2.north) to [bend right] (C2.east);
\draw [very thick, color=teal, ->] (A'2.east) to [bend right] (D'2.east);
\draw [very thick, color=teal, ->] (A'2.west) to [bend left] (A2.east);
\end{tikzpicture}
\end{figure}

\begin{lemma}
\label{lem:exclude-trivial-doublings}
Let $\mathfrak{C}_{12}$ be constructed from some bipartite $H_0$
by one doubling and one collapse, as above.
If $B = \emptyset$ and $E(A, D) = \emptyset$,
then $H_0$ is collapsible onto $\mathfrak{C}_{12}$.
\end{lemma}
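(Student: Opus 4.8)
The plan is to read off the last two operations of the hypothetical construction, transport the collapse of the second operation back onto $H_0$, and compose it with a tautological isomorphism between an induced subgraph of $H_0$ and $\mathfrak{C}_{12}$.

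Fix notation first. Write $H_1$ for the graph obtained from $H_0$ by the doubling step; since $B=\emptyset$, the vertex set of $H_0$ is $A\cup C\cup D\cup E\cup F$ and that of $H_1$ is $A\cup A'\cup C\cup D\cup D'\cup E\cup F\cup F'$, a prime denoting the doubling copy, with $A\cup D\cup F$ the old vertices and $C\cup E$ the fixed ones. The collapse step is a homomorphism $\phi\colon H_1\to\mathfrak{C}_{12}$ that is the identity on the fixed set $A\cup C\cup D'$ and maps the collapsed set $A'\cup D\cup E\cup F\cup F'$ into $A\cup C\cup D'$, where $\mathfrak{C}_{12}=H_1(A\cup C\cup D')$ as an induced subgraph. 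The first step is to observe that, because $B=\emptyset$, because $E(A,D)=\emptyset$, and because doubling never joins an old vertex to a new one while it does copy every old--old edge and every fixed--old edge, the map $\psi\colon H_0(A\cup C\cup D)\to\mathfrak{C}_{12}$ that is the identity on $A\cup C$ and the natural bijection $d\mapsto d'$ on $D$ is a graph isomorphism; this is a routine edge-type comparison. Consequently it suffices to produce a homomorphism $g\colon H_0\to H_0(A\cup C\cup D)$ that is the identity on $A\cup C\cup D$: such a $g$ exhibits $H_0$ as collapsible onto $H_0(A\cup C\cup D)$, which is a copy of $\mathfrak{C}_{12}$.

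By Lemma~\ref{lem:auto-a-d-collapse}, either $D$ is naturally collapsed onto $D'$ (so $\phi(d)=d'$ for all $d\in D$) or $A'$ is naturally collapsed onto $A$ (so $\phi(a')=a$ for all $a\in A$), and I would treat the two cases separately. In the first case I take $g:=\psi^{-1}\circ\phi$ restricted to the vertices of $H_0$: it is well defined because $\phi$ maps those vertices into $A\cup C\cup D'$, it is a homomorphism because every edge of $H_0$ is an edge of $H_1$ and $\psi^{-1}$ is an isomorphism, and it restricts to the identity on $A\cup C$ (where $\phi$ and $\psi$ already are) and on $D$ (since $\phi(d)=d'$ and $\psi^{-1}(d')=d$). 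In the second case $\psi^{-1}\circ\phi$ need not fix $D$, so instead I set $g$ to be the identity on $A\cup C\cup D$, to $\psi^{-1}\circ\phi$ on $E$, and to $v\mapsto\psi^{-1}(\phi(v'))$ on $F$, routing each $F$-vertex through its doubling copy. To check $g$ is a homomorphism one runs through the edge-types of $H_0$: an edge inside $A\cup C\cup D$ is preserved by the identity and already lies in $H_0(A\cup C\cup D)$; for an edge touching $E\cup F$ one replaces each endpoint in $A\cup D\cup F$ by its copy to obtain an edge of $H_1$, pushes it forward by $\phi$ and then by $\psi^{-1}$, and verifies that the result equals $(g(u),g(v))$ — which works because $\phi$ fixes $A$, $C$ and $D'$, because $\phi(a')=a$ by the case hypothesis, and because $E(A,D)=\emptyset$ eliminates the only edge-type that could break the argument. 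Combining the two cases proves the lemma (up to the relabelling of $D$ as $D'$ that turns $H_0(A\cup C\cup D)$ into $\mathfrak{C}_{12}$ itself).

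The step I expect to be the main obstacle is the homomorphism check in the second case, and within it the edges joining the two ``collapsed old'' sets $D$ and $F$: since doubling joins no old vertex to a new one, such an edge $(d,f)$ must be routed through the copied edge $(d',f')$ rather than through $(d,f)$ itself, which is precisely why $g$ is defined on $F$ via $\phi(v')$ and not $\phi(v)$; keeping this bookkeeping coherent simultaneously with the requirement that $g$ fix $D$ (where $\phi$ itself misbehaves in this case) is the delicate point. Verifying that $\psi$ is a genuine isomorphism — where $E(A,D)=\emptyset$ and the no-old--new-edge property of doubling both enter — is the only other place any care is needed; everything else is routine.
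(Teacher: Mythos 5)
Your proposal is correct and follows essentially the same route as the paper: reduce via the natural isomorphism $\psi$ between $G(A\cup C\cup D)$ and $G(A\cup C\cup D')$ (which is where $E(A,D)=\emptyset$ enters), invoke Lemma~\ref{lem:auto-a-d-collapse}, and in each case build the collapsing homomorphism by pulling the final collapse back through $\psi^{-1}$, routing $F$-vertices through their doubling copies exactly when $A'$ collapses naturally. Your handling of the ``$D$ natural'' case as a bare composition $\psi^{-1}\circ\phi$ restricted to $H_0$ is a slightly cleaner packaging than the paper's explicit subcase analysis, but it is the same map.
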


\begin{proof}
Let the two steps in a construction of $\mathfrak{C}_{12}$ be such as in
the statement. Recall that $G(S)$ denotes the induced graph on a vertex
set $S$. Note that $H_0 = G(A \cup C \cup D \cup E \cup F$)
and that $\mathfrak{C}_{12} = G(A \cup C \cup D')$. 
For the following discussion
cf.~Figures \ref{fig:last-two-steps} and \ref{fig:auto-a-d-collapse}.

Since $E(A, D) = \emptyset$, the graphs $G(A \cup C \cup D')$ and 
$G(A \cup C \cup D)$ are naturally isomorphic. Therefore, it is enough to show
that it is possible to collapse $E \cup F$ onto $A \cup C \cup D$.

Let us write the collapse that produces $\mathfrak{C}_{12}$ as a homomorphism
$f': A' \cup D \cup E \cup F \cup  F' \to A \cup C \cup D'$.
By Lemma \ref{lem:auto-a-d-collapse}, either $A'$ collapses naturally onto $A$
or $D$ collapses naturally onto $D'$. 

Consider first that $A'$ collapses naturally.
We create a collapsing homomorphism
$f: E \cup F \to A \cup C \cup D$ as follows:
\begin{itemize}
\item If $u \in E$ and $f'(u) \in A \cup C$, then $f(u) := f'(u)$. If 
$f'(u) = w' \in D'$, then $f(u) := w \in D$.
\item For $u \in F$ with $u' \in F'$, if $f'(u') \in A \cup C$, then
$f(u) := f'(u')$. If $f'(u') = w' \in D'$, then $f(u) := w \in D$.
\end{itemize}
We need to see that $f$ is indeed a homomorphism, i.e., that all edges
that touch $E \cup F$ are mapped onto edges of $G(A \cup C \cup D)$.
To this end we make a case analysis:
\begin{itemize}
\item Since $G(E \cup F)$ is naturally isomorphic to $G(E \cup F')$
and $G(A \cup C \cup D)$ is naturally isomorphic to $G(A \cup C \cup D')$,
the edges from $E(E \cup F, E \cup F)$ are preserved by $f$.
\item Since $G(A \cup C \cup D \cup E)$ is naturally isomorphic
to $G(A \cup C \cup D' \cup E)$, the edges from
$E(E, A \cup C \cup D)$ are also preserved by $f$.
\item Let $u \in F$, $v \in A$, $u \sim v$. Then
$u' \sim v' \implies f'(u') \sim v \implies f(u) \sim v$,
where we used that $A'$ collapses naturally.
\item Let $u \in F$, $v \in C$, $u \sim v$. Then
$u' \sim v \implies f'(u') \sim v \implies f(u) \sim v$.
\item Finally, let $u \in F$, $v \in D$, $u \sim v$.
Then $u' \sim v' \implies f'(u') \sim v' \implies f(u) \sim v$.
\end{itemize}

Second, assume that $D$ collapses naturally onto $D'$.
In that case we give a collapsing homomorphism 
$f: \EE \cup F \to A \cup C \cup D'$ as follows:
if $f'(u) \in A \cup C$, then $f(u) := f'(u)$.
If $f'(u) = w' \in D'$, then $f(u) := w \in D$.
To see that $f$ is a collapsing homomorphism, consider:
\begin{itemize}
\item Since $G(A \cup C \cup D)$ is naturally
isomorphic to $G(A \cup C \cup D')$,
$f$ preserves the
edges from $E(E \cup F, A \cup C \cup E \cup F)$.
\item If $u \in E \cup F$, $v \in D$, $u \sim v$
consider the subcases (in all of them we use
that $D$ collapses naturally):
\begin{itemize}
\item If $f'(u) \in A$, then $A \ni f'(u) \sim f'(v) = v' \in D'$,
implying $E(A, D') \ne \emptyset$, a contradiction.
\item If $f'(u) \in C$, then 
$C \ni f(u) = f'(u) \sim f'(v) = v' \implies f(u) \sim v$.
\item If $f'(u) \in D'$, then $f'(u) \sim f'(v) = v' \implies f(u) \sim v$. 
\end{itemize}
\end{itemize}
\end{proof}

\subsection{Putting things together}

\begin{proof}[Proof of Theorem~\ref{thm:cycle-non-constructible}]
By Lemma~\ref{lem:construction-collapsible}, if $\mathfrak{C}_{12}$ is
constructible,
there exists a construction of it by one doubling and one collapse 
starting from some $H_0$ that is not collapsible onto $\mathfrak{C}_{12}$ in the
first place. But this is impossible by Lemmas~\ref{lem:only-trivial-doublings}
and~\ref{lem:exclude-trivial-doublings}.
\end{proof}

\begin{remark}
Our analysis, except for the computer-assisted part, does not depend
on the number of vertices in $\mathfrak{C}_{n}$. Further program runs
confirmed that also $\mathfrak{C}_{14}$ and $\mathfrak{C}_{16}$ are not constructible.
On the other hand, one can see that 
$\mathfrak{C}_8$ and $\mathfrak{C}_{10}$ are constructible.
\end{remark}

\clearpage
\bibliographystyle{alpha}
\bibliography{biblio}
\clearpage
\appendix

\section{Listings of Computer-Assisted Proofs} \label{ch:appendix}

\lstset{
  language=C++,
  numbers=left,
  numberstyle=\tiny,
  frame=single
}

Here we provide program codes for the computer-assisted proofs from
Section \ref{sec:non-constructible}. The programs are written in C++.

\begin{lstlisting}[caption={\texttt{construction.hpp} --- The header file
used by all programs.}, label={lst:header}]
#include <cassert>
#include <climits>
#include <cstdio>
#include <cstdlib>
#include <algorithm>
#include <vector>
using namespace std;

// Bitshifts have higher priority than comparisons.
// Comparisons have higher priority than bit operations.

// Mathematical modulo.
// Precondition: MOD > 0
inline int mod(int x, int MOD) {
  x %= MOD;
  return x + (x < 0 ? MOD : 0);
}

// Number of bits set to one in u.
struct PopCounter {
  int pcnt[1<<16];
  PopCounter() {
    assert(CHAR_BIT == 8 && sizeof(unsigned) == 4);
    for (int i = 1; i < 1<<16; ++i)
      pcnt[i] = pcnt[i/2] + i%2; 
  }
} P;
inline int popcount (unsigned u) {
  return P.pcnt[u & ((1<<16)-1)] + P.pcnt[u >> 16];
}

// Undirected graph with set of vertices S.
// Invariant: all edges inside S.
struct Graph {
  unsigned S;
  vector<unsigned> M;

  Graph(const unsigned a_S, 
      const vector<unsigned>& a_M):
    S(a_S), M(a_M) { }
};

// Doubles subset S of V(G).
// between[uprim] & (1<<u) indicates edge between 
// u' in V(G') and u in V(G). 
// Precondition: T is a subset of G.S
void double_graph(const unsigned T, const Graph& G, 
    Graph& Gprim, vector<unsigned>& between) {
  const vector<unsigned>& M = G.M;
  vector<unsigned>& Mprim = Gprim.M;
  const int N = M.size();
  
  Mprim.resize(N);
  between.resize(N);
  Gprim.S = T;

  for (int u = 0; u < N; ++u)
    if (1<<u & T) {
      Mprim[u] = M[u] & T;
      between[u] = M[u] & ~T;
    } else Mprim[u] = between[u] = 0;
}

// Exchange vertices in T between G and G'.
// Precondition: T is a subset of G.S \cap Gprim.S
void exchange(const unsigned T, Graph& G, Graph& Gprim,
    vector<unsigned>& between) {
  vector<unsigned>& M = G.M;
  vector<unsigned>& Mprim = Gprim.M;
  const int N = M.size();

  for (int u = 0; u < N; ++u) if (1<<u & T) {
    unsigned old_M = M[u], old_Mprim = Mprim[u],
      old_between = between[u];

    M[u] = old_between & ~(1<<u);
    for (int v = 0; v < N; ++v) {
      M[v] &= ~(1<<u);
      if (M[u] & 1<<v) M[v] |= 1<<u;
    }

    // It is important that `between' has not been 
    // modified yet.
    Mprim[u] = 0;
    for (int vprim = 0; vprim < N; ++vprim) 
      if (u != vprim) {
        Mprim[vprim] &= ~(1<<u);
        if (between[vprim] & 1<<u) {
	  Mprim[u] |= 1<<vprim;
	  Mprim[vprim] |= 1<<u;
        }
      }

    between[u] = old_M;
    if (old_between & 1<<u) between[u] |= 1<<u;
    for (int vprim = 0; vprim < N; ++vprim) 
      if (u != vprim) {
        between[vprim] &= ~(1<<u);
        if (old_Mprim & 1<<vprim) 
          between[vprim] |= 1<<u;
      }
  }
}

// Can Gprim be collapsed onto G?
// If yes, `mapping' will contain a mapping 
// from Gprim to G,  with mapping[uprim] == -1 
// for uprim not in Gprim.S. 
bool is_collapsible(const Graph& a_G, 
    const Graph& a_Gprim, 
    const vector<unsigned>& a_between, 
    vector<int>& a_mapping) {
  struct RecursiveData {
    const Graph& G;
    const Graph& Gprim;
    const vector<unsigned>& between;
    vector<int>& mapping;
    const vector<unsigned>& M;
    const vector<unsigned>& Mprim;
    const int N;

    RecursiveData(const Graph& a_G, 
        const Graph& a_Gprim, 
        const vector<unsigned>& a_between, 
        vector<int>& a_mapping):
	  G(a_G), Gprim(a_Gprim), between(a_between),
          mapping(a_mapping), M(G.M), Mprim(Gprim.M), 
          N(M.size()) {
      mapping.resize(N);
      fill_n(mapping.begin(), N, -1);
    }

    bool is_collapsible_rec(int uprim) {
      if (uprim == N) return true;
      if (1<<uprim & ~Gprim.S) 
        return is_collapsible_rec(uprim+1);
      // invariant: u' < N and u' in V(G')

      for (int u = 0; u < N; ++u) if (1<<u & G.S) {
	if ((M[u] & between[uprim]) != between[uprim]) 
          continue;
	// invariant: u' -> u preserves edges between 
        // u' and G

	bool ok = true;
	for (int vprim = 0; vprim < uprim && ok; 
            ++vprim) {
	  if (Mprim[uprim] & 1<<vprim && 
              !(M[u] & 1<<mapping[vprim])) {
	    ok = false;
          }
        }
	if (!ok) continue;
	// invariant: u' -> u preserves edges between 
        // u' and preceding vertices in G'

	mapping[uprim] = u;
	if (is_collapsible_rec(uprim+1)) return true;
      }
      return false;
    }
  } R(a_G, a_Gprim, a_between, a_mapping);

  return R.is_collapsible_rec(0);
}

// Can G' be collapsed onto G such that both T and 
// G'.S \setminus T do not collapse naturally?
// If yes, mapping will contain such mapping from 
// G' to G, with mapping[u'] == -1 for 
// u' not in G'.S.
// Precondition: T is a subset of G'.S which is 
// a subset of G.S
bool is_unnaturally_collapsible(const unsigned a_T, 
    const Graph& a_G, const Graph& a_Gprim, 
    const vector<unsigned>& a_between,
    vector<int>& a_mapping) {
  struct RecursiveData {
    const unsigned T;
    const Graph& G;
    const Graph& Gprim;
    const vector<unsigned>& between;
    vector<int>& mapping;
    const vector<unsigned>& M;
    const vector<unsigned>& Mprim;
    const int N;

    RecursiveData(const unsigned a_T, const Graph& a_G, 
        const Graph& a_Gprim, 
        const vector<unsigned>& a_between, 
        vector<int>& a_mapping):
	  T(a_T), G(a_G), Gprim(a_Gprim), 
          between(a_between), mapping(a_mapping), 
          M(G.M), Mprim(Gprim.M), N(M.size()) {
      mapping.resize(N);
      fill_n(mapping.begin(), N, -1);
    }

    bool is_collapsible_rec(int uprim) {
      if (uprim == N) {
	bool ok1 = false, ok2 = false;
	for (int uprim = 0; uprim < N && (!ok1 || !ok2); 
            ++uprim) {
	  if (1<<uprim & ~Gprim.S) continue;
	  if (1<<uprim & T && mapping[uprim] != uprim)
            ok1 = true;
	  else if (1<<uprim & ~T && 
              mapping[uprim] != uprim) {
            ok2 = true;
          }
	}
	return ok1 && ok2;
      }

      if (1<<uprim & ~Gprim.S) 
        return is_collapsible_rec(uprim+1);
      // invariant: u' < N and u' in V(G')

      for (int u = 0; u < N; ++u) if (1<<u & G.S) {
	if ((M[u] & between[uprim]) != between[uprim]) 
          continue;
	// invariant: u' -> u preserves edges between 
        // u' and G

	bool ok = true;
	for (int vprim = 0; vprim < uprim && ok; 
            ++vprim) {
	  if (Mprim[uprim] & 1<<vprim && 
              !(M[u] & 1<<mapping[vprim])) {
	    ok = false;
          }
        }
	if (!ok) continue;
	// invariant: u' -> u preserves edges between 
        // u' and preceding vertices in G'.

	mapping[uprim] = u;
	if (is_collapsible_rec(uprim+1)) return true;
      }
      return false;
    }
  } R(a_T, a_G, a_Gprim, a_between, a_mapping);

  return R.is_collapsible_rec(0);
}

// Precondition: T is a subset of G.S
inline unsigned neighbors (const unsigned T, 
    const Graph& G) {
  const int N = G.M.size();
  unsigned res = 0;
  for (int u = 0; u < N; ++u) if (1<<u & T)
    res |= G.M[u];
  return res;
}

const int V = 12;
// Cycle with shortcuts C_V.
Graph original_G() {
  Graph G((1<<V) - 1, vector<unsigned>(V));
  for (int u = 0; u < V; ++u) 
    for (int s = -3; s <= 3; s += 2)
      G.M[u] |= 1 << mod(u+s, V);
  return G;
}
\end{lstlisting}

\begin{lstlisting}[caption={\texttt{non\_empty\_b.cpp} --- Proof
of Lemma \ref{lem:assisted-non-empty-b}.},
label={lst:non-empty-b}]
#include "construction.hpp"

// Precondition: B, C disjoint, 0 in B
bool Bprim_filled(const unsigned a_C, 
    const unsigned a_B) {
  struct RecData {
    const Graph G;
    const vector<unsigned>& M;
    const unsigned C;
    const unsigned B;
    unsigned Bprim;
    const int pB;
    vector<int> B_list, Bprim_list;

    RecData(const unsigned a_C, const unsigned a_B):
        G(original_G()), M(G.M), C(a_C), B(a_B), 
        Bprim(0), pB(popcount(B)), B_list(pB), 
        Bprim_list(pB) {
      for (int u = 0, ind = -1; u < V; ++u)
	if (1<<u & B) {
	  ++ind;
	  B_list[ind] = u;
	}
    }

    bool recursively_filled(int ind) {
      if (ind == pB) {
	// invariant: B, B', C (pairwise) disjoint
	// invariant: edges of B and B' (inside and 
        // to C) isomorphic according to Bprim_list.
	return is_rest_filled();
      }

      const int u = B_list[ind];
      for (int uprim = 0; uprim < V; ++uprim) {
	if (1<<uprim & (B|C|Bprim)) continue;
	// invariant: uprim is "fresh"
	if (M[uprim] & B) continue;
	// invariant: no edges to B
	if ((M[u]&C) != (M[uprim]&C)) continue;
	// invariant: edges to C the same
	bool ok = true;
	for (int j = 0; j < ind && ok; ++j) {
	  const int v = B_list[j], 
            vprim = Bprim_list[j];
	  // a hack: `!' is used to convert to bool
	  if (!(M[v]&(1<<u)) != !(M[vprim]&(1<<uprim))) 
            ok = false;
	}
	if (!ok) continue;
	// invariant: edges inside B and B' (so far) 
        // isomorphic
	Bprim |= 1<<uprim;
	Bprim_list[ind] = uprim;
	if (recursively_filled(ind+1)) return true;
	Bprim &= ~(1<<uprim);
      }
      return false;
    }

    // preconditions: B, Bprim, C disjoint
    // B_list, Bprim_list, pB correctly filled
    // B and B' isomorphic wrt each other and C
    bool is_rest_filled() {
      static Graph Gout(0, vector<unsigned>(V));
      static vector<unsigned> between(V);
      static vector<unsigned>& Mout = Gout.M;
      static vector<int> mapping(V);
      
      for (unsigned A = 0; A < 1<<V; A += 2) {
	if (A & (B|C|Bprim)) continue;
	// invariant: A, B, B', C disjoint
	if (neighbors(A, G) & (Bprim)) continue;
	// invariant: no edges between A and B'

	const unsigned Dprim = ((1<<V)-1) & 
          ~(A|B|Bprim|C);
	if (neighbors(Dprim, G) & (A|B)) continue;
	// invariant: no edges between D' and A \cup B
	
	Gout.S = A|Dprim;
	for (int u = 0; u < V; ++u)
	  if (1<<u & A) {
	    Mout[u] = M[u] & A;
	    between[u] = M[u] & C;
	    for (int ind = 0; ind < pB; ++ind) {
	      const int v = B_list[ind], 
                vprim = Bprim_list[ind];
	      if (M[u] & 1<<v) between[u] |= 1<<vprim;
	    }
	  } else if (1<<u & Dprim) {
	    Mout[u] = M[u] & Dprim;
	    between[u] = M[u] & C;
	    for (int ind = 0; ind < pB; ++ind) {
	      const int v = B_list[ind], 
                vprim = Bprim_list[ind];
	      if (M[u] & 1<<vprim) between[u] |= 1<<v;
	    }
	  } else Mout[u] = between[u] = 0;

	if (is_collapsible(G, Gout, between, mapping)) {
	  printf("FAILURE\nA = ");
	  for (int u = 0; u < V; ++u) if (1<<u & A) 
            printf("%d ", u);
	  printf("\n(B,B') = ");
	  for (int ind = 0; ind < pB; ++ind)
	    printf("(%d, %d) ", B_list[ind], 
              Bprim_list[ind]);
	  printf("\nC = ");
	  for (int u = 0; u < V; ++u) if (1<<u & C) 
            printf("%d ", u);
	  printf("\nD' = ");
	  for (int u = 0; u < V; ++u) if (1<<u & Dprim) 
            printf("%d ", u);
 	  printf("\nmapping = ");
	  for (int u = 0; u < V; ++u)
	    printf("(%d->%d) ", u, mapping[u]);
	  printf("\n");
	  exit(0);
	}
      }
      return false;
    }
  } R(a_C, a_B);

  return R.recursively_filled(0);
}

// Assume E(A, D) is empty.
// Try all partitions of C_12 into A, B, B', C, D' 
// s.t. in the last doubling:
// A is doubled and then A is fixed and A' collapsed.
// (non-empty) B is doubled and fixed together with B'.
// C is fixed in both steps.
// D is doubled, with D collapsed and D' fixed.
// Objective: show that resulting A', D cannot be 
// collapsed onto the rest.
int main() {
  printf("non-empty B, V = %d\n", V);
  // Assume w.l.o.g. that 0 is in B.
  for (unsigned C = 0; C < 1<<V; C += 2) 
    for (unsigned B = 1; B < 1<<V; B += 2) {
      // invariant: 0 in B
      if (B&C || popcount(B)%2 == 1) continue;
      // invariant: B, C disjoint
      if (Bprim_filled(C, B)) {
        // this should be never executed
        printf("INTERNAL ERROR\n");
        exit(1);
      }
    }
  printf("SUCCESS\n");
}
\end{lstlisting}

\begin{lstlisting}[caption={\texttt{non\_empty\_ad.cpp} --- Proof
of Lemma \ref{lem:assisted-non-empty-ad}.},
label={lst:non-empty-ad}]
#include "construction.hpp"

// Assume B is empty and |E(A, D)| = 1.
// Try all partitions of C_12 into A, C, D' s.t. in the 
// last doubling:
// A is doubled and A' is later collapsed.
// C is not doubled.
// D is doubled and later collapsed and D' is kept.
// Then try all choices for the edge between A and D.
// Goal: Show that resulting A', D  cannot be collapsed
// onto A, C, D'.
int main() {
  printf("|E(A,D)| = 1, V = %d\n", V);
  Graph G = original_G();
  for (unsigned A = 0; A < 1<<V; ++A) 
    for (unsigned C = 0; C < 1<<V; ++C) {
      if (A&C) continue;
      // invariant: A, C disjoint
      unsigned Dprim = ((1<<V)-1) & ~(A|C);
      if (neighbors(Dprim, G) & A) continue;
      // invariant: no edges between A and D'

      Graph tmp_G = G, Gprim(0, vector<unsigned>());
      vector<unsigned> between;
      vector<int> mapping;

      double_graph(A|Dprim, tmp_G, Gprim, between);
      exchange(Dprim, tmp_G, Gprim, between);
    
      for (int u = 0; u < V; ++u) if (1<<u & A)
        for (int v = 0; v < V; ++v) if (1<<v & Dprim) {
	  if (u%2 == v%2) continue;
          // invariant: u and v do not create odd cycle
          between[u] |= 1<<v;
          between[v] |= 1<<u;
          if (is_collapsible(tmp_G, Gprim, between,
              mapping)) {
            printf("FAILURE\nA = ");
            for (int w = 0; w < V; ++w) if (1<<w & A) 
              printf("%d ", w);
            printf("\nC = ");
            for (int w = 0; w < V; ++w) if (1<<w & C) 
              printf("%d ", w);
            printf("\nDprim = ");
            for (int w = 0; w < V; ++w) 
              if (1<<w & Dprim) printf("%d ", w);
            printf("\nu = %d v = %d\nmapping = ", u, v);
            for (int w = 0; w < (int)mapping.size(); 
                ++w) {
              printf("(%d -> %d) ", w, mapping[w]);
            }
            printf("\n");
            exit(0);
          }
          between[u] &= ~(1<<v);
          between[v] &= ~(1<<u);
        }
    }
  printf("SUCCESS\n");
}
\end{lstlisting}

\begin{lstlisting}[caption={\texttt{natural\_collapse.cpp} --- Proof
of Lemma \ref{lem:assisted-a-d-collapse}.},
label={lst:natural}]
#include "construction.hpp"

// Assume E(A, D) is empty.
// Try partitioning vertices of C_12 into A, C, D' s.t.
// in the last doubling:
// A is doubled and A' is later collapsed.
// C is not doubled.
// D is doubled and later collapsed and D' is kept.
// Objective: Show that every time either A' or D must 
// be naturally collapsed.
int main() {
  printf("Natural collapse lemma, V = %d\n", V);
  Graph G = original_G();
  for (unsigned A = 0; A < 1<<V; ++A) 
    for (unsigned C = 0; C < 1<<V; ++C) {
      if (A&C) continue;
      // invariant: A, C disjoint
      unsigned D = ((1<<V)-1) & ~(A|C);
      if (neighbors(D, G) & A) continue;
      // invariant: no edges between A and D

      Graph tmp_G = G, Gprim(0, vector<unsigned>());
      vector<unsigned> between;
      vector<int> mapping;
    
      double_graph(A|D, tmp_G, Gprim, between);
      exchange(D, tmp_G, Gprim, between);
      if (is_unnaturally_collapsible(A, tmp_G, Gprim, 
          between, mapping)) {
        printf("FAILURE\nA = ");
        for (int u = 0; u < V; ++u) 
          if (1<<u & A) printf("%d ", u);
        printf("\nC = ");
        for (int u = 0; u < V; ++u) 
          if (1<<u & C) printf("%d ", u);
        printf("\nD = ");
        for (int u = 0; u < V; ++u) 
          if (1<<u & D) printf("%d ", u);
        printf("\nmapping = ");
        for (int u = 0; u < (int)mapping.size(); ++u)
          printf("(%d -> %d) ", u, mapping[u]);
        printf("\n");
        exit(0);
      }
    }
  printf("SUCCESS\n");
}
\end{lstlisting}

\end{document}